\definecolor{darkred}{rgb}{0.8,0.1,0.1}
\theoremstyle{plain}
\newtheorem{theo}{Theorem}[section]
\newtheorem{lem}[theo]{Lemma}
\newtheorem{propo}[theo]{Proposition}
\newtheorem{cor}[theo]{Corollary}
\theoremstyle{definition}
\newtheorem{defi}[theo]{Definition}
\newtheorem{problem}[theo]{Open Problem}
\newenvironment{ex}
  {\pushQED{\qed}\exx}
  {\popQED\endexx}
\newenvironment{rem}
  {\pushQED{\qed}\remm}
  {\popQED\endremm}
\numberwithin{equation}{section}
\def\nn{\nonumber}
\def\bbK{\mathbb{K}}
\def\bbR{\mathbb{R}}
\def\bbC{\mathbb{C}}
\def\bbZ{\mathbb{Z}}
\def\ii{{\,{\rm i}\,}}
\def\Aut{\mathrm{Aut}}
\def\id{\mathrm{id}}
\def\Id{\mathrm{Id}}
\def\supp{\mathrm{supp}}
\def\dd{\mathrm{d}}
\def\cc{\mathrm{c}}
\def\vc{\mathrm{vc}}
\def\vpc{\mathrm{vpc}}
\def\vfc{\mathrm{vfc}}
\def\1{\mathbf{1}}
\def\oone{\mathds{1}}
\def\op{\mathrm{op}}
\def\pr{\mathrm{pr}}
\def\ver{\mathrm{v}}
\def\data{\mathrm{data}}
\def\solve{\mathrm{solve}}
\def\Loc{\mathbf{Loc}}
\def\Man{\mathbf{Man}}
\def\AQFT{\mathbf{AQFT}}
\def\2AQFT{\mathbf{2AQFT}}
\def\Fun{\mathbf{Fun}}
\def\Open{\mathbf{Open}}
\def\Sh{\mathbf{Sh}}
\def\St{\mathbf{St}}
\def\VecBun{\mathbf{VecBun}}
\def\astMon{{}^\ast\mathbf{Mon}}
\def\Set{\mathbf{Set}}
\def\Alg{\mathbf{Alg}}
\def\astAlg{{}^\ast\mathbf{Alg}_\bbC}
\def\Vec{\mathbf{Vec}}
\def\Ch{\mathbf{Ch}}
\def\astObj{{}^\ast\mathbf{Obj}}
\def\Pois{\mathbf{PoVec}_\bbR}
\def\IPVec{\mathbf{IPVec}_\bbC}
\def\Map{\mathsf{Map}}
\def\Cat{\mathbf{Cat}}
\def\SMCat{\mathbf{SMCat}}
\def\AAA{\mathfrak{A}}
\def\BBB{\mathfrak{B}}
\def\LLL{\mathfrak{L}}
\def\BBB{\mathfrak{B}}
\def\WWW{\mathfrak{W}}
\def\Sol{\mathsf{Sol}}
\def\CCR{\mathfrak{CCR}}
\def\CAR{\mathfrak{CAR}}
\def\colim{\mathrm{colim}}
\def\bicolim{\mathrm{bicolim}}
\newcommand\und[1]{\underline{#1}}
\newcommand\ovr[1]{\overline{#1}}
\newcommand\mycom[2]{\genfrac{}{}{0pt}{}{#1}{#2}}
\def\sk{\vspace{1mm}}
\let\@fnsymbol\@alph
\title{%
Smooth $1$-dimensional algebraic quantum field theories
}
\author{%
Marco Benini$^{1,2,a}$, 
Marco Perin$^{3,b}$\ and\ 
Alexander Schenkel$^{3,c}$\vspace{4mm}\\
{\small ${}^1$ Dipartimento di Matematica, Universit\`a di Genova,}\\
{\small Via Dodecaneso 35, 16146 Genova, Italy.}\vspace{2mm}\\
{\small ${}^2$ INFN, Sezione di Genova,}\\
{\small Via Dodecaneso 33, 16146 Genova, Italy.}\vspace{2mm}\\
{\small ${}^3$ School of Mathematical Sciences, University of Nottingham,}\\
{\small University Park, Nottingham NG7 2RD, United Kingdom.}\vspace{4mm}\\
{\small \begin{tabular}{ll}
Email: & ${}^a$~\texttt{benini@dima.unige.it}\\
& ${}^b$~\texttt{marco.perin@nottingham.ac.uk}\\
& ${}^c$~\texttt{alexander.schenkel@nottingham.ac.uk}\vspace{2mm}
\end{tabular}
}
}
\date{October 2021}
\begin{document}

\maketitle

\vspace{-5mm}

\begin{abstract}
\noindent This paper proposes a refinement of the usual concept of algebraic quantum field theories (AQFTs) to theories that are smooth in the sense that they assign to every smooth family of spacetimes a smooth family of observable algebras. Using stacks of categories, this proposal is realized concretely for the simplest case of $1$-dimensional spacetimes, leading to a stack of smooth $1$-dimensional AQFTs. Concrete examples of smooth AQFTs, of smooth families of smooth AQFTs and of equivariant smooth AQFTs are constructed. The main open problems that arise in upgrading this approach to higher dimensions and gauge theories are identified and discussed.
\end{abstract}

\vspace{-1mm}

\paragraph*{Keywords:} algebraic quantum field theory, stacks of categories, vertical geometry of fiber bundles, smoothly parametrized differential equations
\vspace{-2mm}

\paragraph*{MSC 2020:} 81Txx, 18F20, 18N10
\vspace{-1mm}

\tableofcontents



\section{\label{sec:intro}Introduction and summary}
An $m$-dimensional algebraic quantum field theory (AQFT) is a functor
$\AAA : \Loc_m\to \astAlg$ from a suitable category of $m$-dimensional
Lorentzian spacetimes to the category of associative and unital 
$\ast$-algebras over $\bbC$. The algebra $\AAA(M)$ that is assigned by
this functor to a spacetime $M$ is interpreted as the algebra of 
quantum observables of the theory $\AAA$ that can be measured in $M$.
Such functors are also required to satisfy a list of physically motivated axioms,
see e.g.\ \cite{BFV,FewsterVerch}, which includes most notably the 
Einstein causality axiom.
\sk

Even though this axiomatic definition of AQFTs is by now widely used in
the relevant research community and has led to interesting 
model-independent results, we would like to point out 
the following issue that is usually not discussed: 
Suppose that we consider a 
family of spacetimes $\{M_s\in\Loc_m\}_{s\in\bbR}$ that
depends smoothly (in some appropriate sense as explained in this paper) 
on a parameter $s\in\bbR$. This $s$-dependence could, for example,
be due to changing smoothly the coefficients of the metric tensor.
Evaluating an AQFT $\AAA : \Loc_m\to \astAlg$ on this smooth family
results in a family of algebras $\{\AAA(M_s)\in \astAlg\}_{s\in\bbR}$
which, however, will in general not be smooth in any appropriate sense
because smoothness is not covered by the usual AQFT axioms.
Similarly, given a smooth family $\{f_s : M_s\to N_s\}_{s\in\bbR}$
of spacetime morphisms, the associated family 
$\{\AAA(f_s) : \AAA(M_s) \to \AAA(N_s)\}_{s\in\bbR}$ of 
$\astAlg$-morphisms will in general not be smooth.
In our opinion, encoding a suitable concept of 
smoothness for these families as part of the axioms of 
AQFT is desirable for several reasons:
1.)~Physically speaking, smoothness of these families
means that a small variation at the level spacetimes 
does not have a too drastic effect on the 
observable content of the theory, hence it excludes
models with unpleasant discontinuous behavior.
2.)~Certain standard constructions in AQFT, such as the
computation of the stress-energy tensor as a derivative
of the relative Cauchy evolution \cite{BFV}, only
exist for models that react sufficiently 
smoothly to metric perturbations. 
3.)~In the context of AQFT on spacetimes with
background fields, see e.g.\ \cite{Zahn}, such a 
smooth dependence may be used to describe 
an adiabatic switching of the interaction with the background fields.
Similarly, it enables us to introduce AQFTs that are smoothly equivariant
with respect to an action of a Lie group.
\sk

The main goal of this paper is to make some first steps towards developing
a refinement of the axiomatic foundations of AQFT that encodes the 
preservation of smooth families as part of its structure.
The key idea behind our approach is to refine the ordinary
categories $\Loc_m$ and $\astAlg$ that enter the definition
of AQFTs to {\em stacks of categories}, which will encode
precise concepts of smooth families of spacetimes and algebras,
and to introduce a concept of {\em smooth AQFTs} in terms 
of morphisms between these stacks. A similar 
program of smooth refinements of QFTs has been 
developed successfully within other approaches, 
in particular for functorial QFTs in the sense
of Atiyah and Segal \cite{StolzTeichner,Pavlov,BunkWaldorf,LudewigStoffel},
however as of now this idea seems to be unexplored in the context of AQFT.
In order to outline our proposal in the simplest possible setting
and to circumvent in this first paper certain technical challenges (of both analytical and algebraic nature, 
see Section \ref{sec:higher}), we consider only the simplest
case of dimension $m=1$, which physically
represents AQFTs on time intervals (i.e.\ quantum mechanics).
We believe that, due to its simplicity, the case of $1$-dimensional
AQFTs is perfectly suited to explain the main ideas and features of 
our proposed framework for smooth AQFTs and to illustrate this formalism
through the simplest possible examples.
\sk

Our framework for smooth AQFTs introduces naturally
a second layer of smoothness. Because we realize smooth $1$-dimensional
AQFTs as the points of a stack $\AQFT_1^\infty$, we can also make precise 
sense of questions like what are ``smooth families of smooth AQFTs''
and in particular what are ``smooth curves of smooth AQFTs''.
We shall illustrate through simple examples that smooth variations (e.g.\ an adiabatic switching)
of the external parameters of a theory, such as the mass parameter,
define such smooth families of smooth AQFTs.
Furthermore, we show that each smooth AQFT has an associated 
smooth automorphism group, refining the discrete automorphism 
groups of ordinary AQFTs \cite{FewsterAut}, and explain how these
are related to smooth AQFTs that are equivariant with respect to 
a smooth action of a Lie group. 
We construct a concrete example of the latter that captures the global 
$U(1)$-symmetry of the $1$-dimensional massless Dirac field. 
\sk

The outline of the remainder of this paper is as follows: 
Section \ref{sec:stacks} contains a brief review of some relevant
aspects of the theory of stacks of categories that we shall need in 
this work. In Section \ref{sec:definition} we introduce the 
stacks of categories $\Loc_1^\infty$ and $\astAlg^\infty$ that provide 
smooth refinements of the category $\Loc_1$ of $1$-dimensional spacetimes
and of the category $\astAlg$ of associative and unital $\ast$-algebras.
The stack of smooth $1$-dimensional AQFTs is then defined
as the mapping stack $\AQFT_1^\infty:=\Map(\Loc_1^\infty,\astAlg^\infty)$
and we shall explore some interesting consequences of this definition, 
including a natural notion of smooth automorphism group of a smooth AQFT 
and its relation to $G$-equivariant smooth AQFTs, for $G$ a Lie group.
Section \ref{sec:CCR-CAR} develops two stack morphisms $\CCR$ and $\CAR$ 
that are smooth refinements of the usual canonical (anti-)commutation
relation quantization functors for Bosonic (resp., Fermionic) theories. 
These are later used for constructing explicit examples 
in Section \ref{sec:examples}, which illustrate 
our proposed approach to smooth AQFT. 
In Subsection \ref{subsec:Green}, we introduce smooth refinements of retarded/advanced 
Green operators and prove their existence in simple cases 
through explicit formulas. 
We then construct in Subsection \ref{subsec:example} 
a concrete example of a smooth family of smooth $1$-dimensional AQFTs, 
which can be interpreted physically
as (a smooth analog of) the $1$-dimensional massive scalar field 
(quantum harmonic oscillator) 
in the presence of a smooth variation of the mass (frequency) parameter.
In Subsection \ref{subsec:fermion}, we construct
a concrete example of a $U(1)$-equivariant smooth $1$-dimensional
AQFT, which can be interpreted physically
as (a smooth analog of) the $1$-dimensional massless Dirac field, 
together with its global $U(1)$-symmetry.
Section \ref{sec:higher} provides a concise list of open problems
that have to be solved to upgrade our approach 
to encompass higher dimensions $m\geq 2$ 
and gauge theories. Most pressingly,
Open Problem \ref{problem1} poses the question of existence
of smoothly parametrized retarded/advanced Green operators for vertical normally hyperbolic operators
on smooth families of Lorentzian spacetimes, which goes beyond the standard results
developed e.g.\ in \cite{BGP} and might be of interest to researchers in hyperbolic PDE theory.


\section{\label{sec:stacks}Preliminaries on stacks of categories}
We shall briefly review some basic concepts  
from the theory of {\em stacks of categories} that we need
to describe a smooth refinement of algebraic 
quantum field theories (AQFTs). Our perspective on smoothness is through the functor
of points approach, see e.g.\ \cite[Section 3.2]{BSreview} and \cite{BSCahiers}
for introductions in the context of AQFT
and also \cite{Schreiber} for a more detailed overview. We also refer to 
\cite{Leinster,Lack} for the relevant $2$-categorical background
and to \cite{Vistolli} for a detailed introduction to the theory of stacks.
\sk

Let $\Man$ denote the category of (finite-dimensional) smooth manifolds and smooth maps.
The usual open cover Grothendieck topology endows $\Man$ with the structure of a site. 
We choose the site $\Man$ because our aim is 
to formalize {\it smooth} families. The framework of stacks is however 
very flexible and can be adapted to model other types of families, 
such as continuous or algebraic, by an appropriate choice of site. 
\begin{defi}\label{def:prestack}
A {\em prestack} (of categories) is a pseudo-functor
$X : \Man^\op\to \Cat$ to the $2$-category $\Cat$ 
of categories, functors and natural transformations.
Explicitly, this consists of the following data:
\begin{enumerate}[(1)]
\item For each object $U\in \Man$, a category $X(U)$.
\item For each morphism $h : U\to U^\prime$ in $\Man$, a functor
$X(h) : X(U^\prime)\to X(U)$.
\item For each pair of composable morphisms $h : U\to U^\prime$
and $h^\prime : U^\prime\to U^{\prime\prime}$ in $\Man$, a natural
isomorphism $X_{h^\prime,h} : X(h)\,X(h^\prime)\Rightarrow X(h^\prime\,h)$
of functors from $X(U^{\prime\prime})$ to $X(U)$.
\item For each object $U\in \Man$, a natural isomorphism
$X_{U} : \id_{X(U)} \Rightarrow X(\id_U)$ of functors from $X(U)$ to $X(U)$.
\end{enumerate}
These data have to satisfy the following axioms:
\begin{enumerate}[(i)]
\item For all triples of composable morphisms 
$h : U\to U^\prime$, $h^\prime : U^\prime\to U^{\prime\prime}$
and $h^{\prime\prime} : U^{\prime\prime}\to U^{\prime\prime\prime}$ in $\Man$,
the diagram
\begin{flalign}
\xymatrix@C=5em{
\ar@{=>}[d]_-{\Id \ast X_{h^{\prime\prime},h^\prime}}X(h) \,X(h^{\prime})\, X(h^{\prime\prime}) \ar@{=>}[r]^-{X_{h^\prime,h}\ast \Id} 
~&~ X(h^\prime\,h)\, X(h^{\prime\prime})\ar@{=>}[d]^-{X_{h^{\prime\prime},h^\prime h}}\\
X(h) \,X(h^{\prime\prime}\,h^{\prime}) \ar@{=>}[r]_-{X_{h^{\prime\prime}h^\prime,h}} ~&~ X(h^{\prime\prime} \,h^\prime\, h)
}
\end{flalign}
of natural transformations commutes. (The capital $\Id$ denotes identity natural transformations
and $\ast$ denotes horizontal composition of natural transformations.)

\item For all morphisms $h: U\to U^\prime$ in $\Man$, the two diagrams
\begin{flalign}
\xymatrix{
\id_{X(U)}\,X(h)\ar@{=>}[d]_-{X_{U}\ast \Id} \ar@{=}[dr]~&~ ~&~ X(h)\,\id_{X(U^\prime)} \ar@{=>}[d]_-{\Id \ast X_{U^\prime}} \ar@{=}[dr]~&~\\
X(\id_U)\,X(h)\ar@{=>}[r]_-{X_{h,\id_U}}~&~X(h) ~&~X(h)\,X(\id_{U^\prime}) \ar@{=>}[r]_-{X_{\id_{U^\prime},h}}~&~X(h)
}
\end{flalign}
of natural transformations commute.
\end{enumerate}
\end{defi}
\begin{rem}
From now on we shall often follow the usual convention 
of suppressing the symbols $X_{h^\prime,h}$ and $X_U$ 
denoting the coherence isomorphisms of a prestack $X$. 
Definition \ref{def:prestack} should help readers unfamiliar 
with this convention to extrapolate from the context 
which coherence isomorphism is relevant at any given point. 
\end{rem}

Given any prestack $X: \Man^{\op}\to \Cat$, one can define, for every
manifold  $U\in\Man$ and every open cover $\{U_\alpha \subseteq U\}$,
the associated {\em descent category} $X(\{U_\alpha\subseteq U\})\in\Cat$, 
see e.g.\ \cite[Definition 4.2]{Vistolli}.
An object in this category is a tuple
\begin{subequations}\label{eqn:descentobject}
\begin{flalign}\label{eqn:descentobject1}
\Big(\big\{x_\alpha \big\},\big\{ \varphi_{\alpha\beta} : x_\beta\vert_{U_{\alpha\beta}}
\to x_\alpha\vert_{U_{\alpha\beta}}\big\}\Big)\,\in\, X(\{U_\alpha\subseteq U\})
\end{flalign}
of families of objects $x_\alpha\in X(U_\alpha)$ and isomorphisms 
$\varphi_{\alpha\beta}$ in $ X(U_{\alpha\beta})$ satisfying
\begin{flalign}\label{eqn:descentobject2}
\xymatrix@C=1.2em@R=1.6em{
~&~x_{\beta}\vert_{U_{\beta\gamma}}\vert_{U_{\alpha\beta\gamma}}  \ar[rd]^-{\cong}~&~ ~&~ ~&~ ~&~\\
\ar[d]_-{\cong} x_{\gamma}\vert_{U_{\beta\gamma}}\vert_{U_{\alpha\beta\gamma}} \ar[ru]^-{\varphi_{\beta\gamma}\vert_{U_{\alpha\beta\gamma}}~~}~&~~&~
 x_{\beta}\vert_{U_{\alpha\beta}}\vert_{U_{\alpha\beta\gamma}}\ar[d]^-{\varphi_{\alpha\beta}\vert_{U_{\alpha\beta\gamma}}} 
 ~&~ ~&~ \ar[d]_-{\cong }x_\alpha\vert_{U_{\alpha\alpha}} \ar[r]^-{\varphi_{\alpha\alpha}} ~&~ x_\alpha\vert_{U_{\alpha\alpha}} \ar[d]^-{\cong}\\
\ar[dr]_-{\varphi_{\alpha\gamma}\vert_{U_{\alpha\beta\gamma}} }x_{\gamma}\vert_{U_{\alpha\gamma}}\vert_{U_{\alpha\beta\gamma}} 
~&~ ~&~x_{\alpha}\vert_{U_{\alpha\beta}}\vert_{U_{\alpha\beta\gamma}}\ar[dl]^-{\cong} ~&~ ~&~  x_\alpha \ar[r]_-{\id_{x_\alpha}}~&~x_\alpha\\
~&~x_{\alpha}\vert_{U_{\alpha\gamma}}\vert_{U_{\alpha\beta\gamma}}  ~&~ ~&~ ~&~~&~ 
}
\end{flalign}
\end{subequations}
for all $\alpha,\beta,\gamma$. Here we denoted by 
$U_{\alpha_1\alpha_2\cdots\alpha_n} := U_{\alpha_1}\cap U_{\alpha_2}\cap\cdots \cap U_{\alpha_n}$
the intersection of open subsets and by $\vert_{\tilde{U}} := X(\iota_{\tilde{U}}^U) : X(U)\to X(\tilde{U})$  the functor associated 
with a subset inclusion morphism $\iota_{\tilde{U}}^U : \tilde{U} \to U$ in $\Man$.
The unlabeled isomorphisms in \eqref{eqn:descentobject2} are given by the 
coherence isomorphisms associated with the 
pseudo-functor $X : \Man^\op\to \Cat$. A morphism 
\begin{subequations}\label{eqn:descentmorphism}
\begin{flalign}\label{eqn:descentmorphism1}
\{\psi_\alpha\} \,:\, \big(\{x_\alpha\},\{\varphi_{\alpha\beta}\}\big)~\longrightarrow~
\big(\{x^\prime_\alpha\},\{\varphi^\prime_{\alpha\beta}\}\big)
\end{flalign}
in the descent category $X(\{U_\alpha\subseteq U\})$ 
is a family of morphisms $\psi_\alpha : x_\alpha\to x_\alpha^\prime$ in $X(U_\alpha)$ satisfying
\begin{flalign}\label{eqn:descentmorphism2}
\xymatrix@C=2.4em@R=1.6em{
\ar[r]^-{\psi_\beta\vert_{U_{\alpha\beta}}}x_\beta\vert_{U_{\alpha\beta}} \ar[d]_-{\varphi_{\alpha \beta}}~&~ x^\prime_\beta\vert_{U_{\alpha\beta}}\ar[d]^-{\varphi^\prime_{\alpha\beta}}\\
x_\alpha\vert_{U_{\alpha\beta}} \ar[r]_-{\psi_{\alpha}\vert_{U_{\alpha\beta}}}~&~x^\prime_\alpha\vert_{U_{\alpha\beta}}
}
\end{flalign}
\end{subequations}
for all $\alpha,\beta$. There exists a canonical functor
\begin{flalign}\label{eqn:functordescentcategory}
\nn X(U)~&\longrightarrow~X(\{U_\alpha\subseteq U \})\quad,\\
\nn x~&\longmapsto~ \big(\{x\vert_{U_\alpha}\}, \{x\vert_{U_\beta}\vert_{U_{\alpha\beta}} 
\stackrel{\cong}{\longrightarrow} x\vert_{U_\alpha}\vert_{U_{\alpha\beta}}\}\big)\quad,\\
\psi~& \longmapsto ~\{\psi\vert_{U_{\alpha}}\}\quad.
\end{flalign}
\begin{defi}\label{def:stack}
A {\em stack} (of categories) is a prestack $X:\Man^\op\to \Cat$ 
that satisfies the following descent condition: For every $U\in \Man$ and every 
open cover $\{U_\alpha\subseteq U\}$, the functor \eqref{eqn:functordescentcategory} is 
an equivalence of categories. 
\end{defi}
\begin{ex}\label{ex:vecbun}
A simple example is the stack of vector bundles
$\VecBun_{\bbK} : \Man^\op\to \Cat$, where $\bbK$ denotes either 
the field of real $\bbR$ or complex $\bbC$ numbers.  It assigns to
each manifold $U\in\Man$ the category
$\VecBun_\bbK(U)$ of (locally trivializable and finite rank) $\bbK$-vector bundles
over $U$, with morphisms given by vector bundle maps 
that preserve the base space. To a morphism $h: U\to U^\prime$ in $\Man$
it assigns the functor $h^\ast : \VecBun_\bbK(U^\prime)\to\VecBun_\bbK(U)$
that forms pullback bundles. The coherence isomorphisms are canonically given 
by the universal property of pullback bundles. The descent condition
expresses the following local-to-global (or gluing)
properties of vector bundles and their morphisms:
Vector bundles on $U$ can be described equivalently in terms of families 
of vector bundles on an open cover $\{U_\alpha\subseteq U\}$ 
and transition functions on the overlaps $U_{\alpha\beta}$, 
see e.g.\ \cite[Problem 10-6]{Lee}. (Observe that the diagrams in \eqref{eqn:descentobject2}
are the cocycle conditions for the transition functions.)
From this perspective, vector bundle maps on $U$ can be described equivalently
in terms of families of vector bundle maps on an open cover $\{U_\alpha\subseteq U\}$ 
that are compatible (in the sense of \eqref{eqn:descentmorphism2}) with the transition functions.
\end{ex}

\begin{defi}\label{def:stackmorphism}
A {\em morphism} $F : X\to Y$ between two stacks is a pseudo-natural transformation
between the underlying pseudo-functors $X,Y :\Man^\op\to \Cat$. Explicitly, this consists of the following data:
\begin{enumerate}[(1)]
\item For each $U\in\Man$, a functor $F_U : X(U)\to Y(U)$.

\item For each morphism $h: U\to U^\prime$ in $\Man$, a natural isomorphism
\begin{flalign}
\xymatrix@C=2em@R=2em{
\ar[d]_-{X(h)}X(U^\prime) \ar[r]^-{F_{U^\prime}}~&~ Y(U^\prime)\ar[d]^-{Y(h)} \ar@{=>}[dl]_-{F_h}\\
X(U) \ar[r]_-{F_{U}}~&~Y(U)
}
\end{flalign}
\end{enumerate}
These data have to satisfy the following axioms:
\begin{enumerate}[(i)]
\item For all pairs of composable morphisms $h : U\to U^\prime$ and $h^\prime : U^\prime\to U^{\prime\prime}$
in $\Man$, the diagram
\begin{flalign}
\xymatrix@C=5em{
 \ar@{=>}[d]_-{Y_{h^\prime,h}\ast\Id} Y(h)\,Y(h^\prime)\,F_{U^{\prime\prime}} \ar@{=>}[r]^-{\Id \ast F_{h^\prime}}~&~ 
 Y(h)\,F_{U^\prime}\,X(h^\prime)\ar@{=>}[r]^-{F_{h}\ast \Id } ~&~
 F_{U}\,X(h)\,X(h^\prime) \ar@{=>}[d]^-{\Id \ast X_{h^{\prime},h}}\\
Y(h^\prime\,h) F_{U^{\prime\prime}} \ar@{=>}[rr]_-{F_{h^\prime h}}~&~  ~&~  F_{U}\,X(h^\prime\, h)
}
\end{flalign}
of natural transformations commutes.

\item For all $U\in \Man$, the diagram
\begin{flalign}
\xymatrix@C=3em{
\ar@{=>}[d]_-{Y_U\ast \Id} \id_{Y(U)} \, F_U \ar@{=}[r] ~&~ F_U \,\id_{X(U)}\ar@{=>}[d]^-{\Id\ast X_U}\\
Y(\id_U)\,F_U \ar@{=>}[r]_-{F_{\id_U}}~&~ F_U\,X(\id_U)
}
\end{flalign}
of natural transformations commutes.
\end{enumerate}
\end{defi}

\begin{defi}\label{def:stack2morphism}
A {\em $2$-morphism} $\zeta : F\Rightarrow G$ between two stack morphisms 
$F,G : X\to Y$ is a modification between the underlying pseudo-natural 
transformations. Explicitly, this consists of the following data:
\begin{enumerate}[(1)]
\item For each $U\in\Man$, a natural transformation
$\zeta_U : F_U \Rightarrow G_U$ of functors from $X(U)$ to $Y(U)$.
\end{enumerate}
These data have to satisfy the following axioms:
\begin{itemize}
\item[(i)] For all morphisms $h:U\to U^\prime$ in $\Man$, the diagram
\begin{flalign}
\xymatrix@C=5em{
\ar@{=>}[d]_-{F_h} Y(h)\,F_{U^\prime} \ar@{=>}[r]^-{\Id \ast \zeta_{U^\prime}} ~&~ Y(h)\,G_{U^\prime}\ar@{=>}[d]^-{G_h}\\
F_U \,X(h) \ar@{=>}[r]_-{\zeta_U \ast \Id } ~&~ G_U \, X(h)
}
\end{flalign}
of natural transformations commutes.
\end{itemize}
\end{defi}

It is well-known that pseudo-functors, pseudo-natural transformations 
and modifications form a $2$-category, see e.g.\ 
\cite[Appendix A.1]{SchommerPries} and \cite[Chapter 3]{Fiore}.
Selecting only those pseudo-functors that satisfy descent 
leads to the $2$-category defined below.
\begin{defi}\label{def:stack2cat}
We denote by $\St$ the {\em $2$-category of stacks} (of categories). Its objects are 
stacks (see Definition \ref{def:stack}), $1$-morphisms are stack morphisms 
(see Definition \ref{def:stackmorphism}) and $2$-morphisms are given in Definition \ref{def:stack2morphism}. 
\end{defi}

We conclude this section by recalling briefly some important constructions involving
stacks that will be needed in the bulk of our paper.
\begin{description}
\item[$2$-Yoneda Lemma:] There exists a $2$-functor (called {\em $2$-Yoneda embedding})
\begin{flalign}\label{eqn:2Yoneda}
\und{(-)}\,:\,\Man~\longrightarrow~\St
\end{flalign}
from the category of manifolds to the $2$-category of stacks. It assigns
to a manifold $N\in\Man$ the stack $\und{N} := \Man(-,N) :\Man^\op \to \Cat\,,~U\mapsto
\Man(U,N)$, where the set $\Man(U,N) = C^\infty(U,N)$ of smooth maps
is regarded as a category with only identity morphisms.
This $2$-functor is fully faithful, i.e.\ manifolds can be equivalently regarded as stacks.
Even more, for every $U\in\Man$ and $X\in\St$, there exists a natural equivalence
\begin{flalign}
\St(\und{U},X)\,\simeq\,X(U)
\end{flalign}
between the category of morphisms from $\und{U}$ to $X$ and the category
obtained by evaluating $X$ on $U$. As a consequence,
the category $X(U)$ admits a useful interpretation as the category of  ``smooth maps''
$\und{U}\to X$ from the manifold $U$ to the stack $X$. In particular, 
for $U = \{\ast\}$ the point, we can interpret $X(\{\ast\})$ as the 
category of ``global points'' $\und{\{\ast\}} \to X$ and similarly, 
for $U=\bbR$ the line, we can interpret $X(\bbR)$ as the category 
of ``smooth curves'' $\und{\bbR}\to X$ in the stack $X$. 

\item[Products of stacks:] Given any two stacks $X,Y\in \St$, one defines 
the {\em product stack} $X\times Y \in \St$ in terms of the pseudo-functor
\begin{flalign}
\nn X\times Y \,:\, \Man^\op~&\longrightarrow~\Cat\quad,\\
\nn U ~&\longmapsto~X(U)\times Y(U)\quad,\\
\big(h:U\to U^\prime\big) ~&\longmapsto~\big(X(h)\times Y(h) : X(U^\prime)\times Y(U^\prime) \to
X(U)\times Y(U) \big)\quad,\label{eqn:productstack}
\end{flalign}
together with the obvious coherence isomorphisms induced from $X$ and $Y$.
For the particular case of two manifolds $M,N\in\Man$, one finds that the product stack
$\und{M}\times\und{N} \simeq \und{M\times N}$ is equivalent to the stack associated with the product manifold.

\item[Mapping stacks:] Given any two stacks $X,Y\in \St$, one defines
the {\em mapping stack} $\Map(X,Y)\in \St$ in terms of
the (strict) $2$-functor  
\begin{flalign}
\nn \Map(X,Y)\,:\, \Man^\op ~&\longrightarrow~ \Cat\quad,\\
\nn U~&\longmapsto~\St(X\times \und{U},Y)\quad,\\
\big(h:U\to U^\prime\big)~&\longmapsto~ \big( (\id\times \und{h})^\ast : \St(X\times \und{U^\prime},Y)\to \St(X\times \und{U},Y)\big)
\quad,\label{eqn:mappingstack}
\end{flalign}
where $ (\id\times \und{h})^\ast := (-)\circ (\id\times \und{h})$ denotes pre-composition.
For the particular case of two manifolds $M,N\in\Man$, one finds that the
mapping stack $\Map(\und{M},\und{N})$ is equivalent to the usual functor of points
for the mapping space of manifolds. See e.g.\ \cite[Section 3.2]{BSreview}
and \cite{BSCahiers} for more details on the latter.
\end{description}


\section{\label{sec:definition}Smooth $1$-dimensional AQFTs}
An $m$-dimensional algebraic quantum field theory (AQFT) \cite{BFV,FewsterVerch} is a functor
$\AAA : \Loc_m\to \astAlg$ from the category $\Loc_m$ of $m$-dimensional 
(globally hyperbolic) Lorentzian spacetimes to the category $\astAlg$ of associative and unital $\ast$-algebras 
over $\bbC$. This functor is required to satisfy certain physically motivated axioms, most notably the Einstein causality axiom
expressing that every two causally disjoint observables must commute with each other.
Such structures can be described most effectively in terms of operad theory
and one observes that the category $\AQFT_m$ of $m$-dimensional AQFTs
is the category of algebras over a suitable colored operad, see \cite{BSWoperad,BSWinvolutive}
for the details.
The case of $m=1$ dimensions, which physically represents AQFTs on time intervals 
(i.e.\ quantum mechanics), is structurally much simpler because causal disjointness, and hence the associated 
Einstein causality axiom, is a phenomenon arising only in dimension $m\geq 2$. As a consequence,
the category $\AQFT_1 = \Fun(\Loc_1,\astAlg)$ of $1$-dimensional AQFTs is simply a functor category.
\sk

The aim of this section is to introduce a smooth refinement of $1$-dimensional AQFTs.
This means that we will upgrade the categories $\Loc_1$ and $\astAlg$ to stacks
of categories, which encode suitable concepts of smoothly $U$-parametrized families of spacetimes 
and algebras, for all manifolds $U\in\Man$. A smooth $1$-dimensional AQFT
will then be defined as a stack morphism between these two stacks, which in particular
means that smooth AQFTs map smooth $U$-families of spacetimes
to smooth $U$-families of algebras. Loosely speaking,
one may say that ``smooth AQFTs respond smoothly to smooth variations of spacetimes''.
\sk

Our approach to smooth AQFTs introduces also a further layer of smoothness,
namely we can define a stack $\AQFT^\infty_1\in\St$ of smooth $1$-dimensional AQFTs.
Through this stack we obtain a  natural concept of smoothly $U$-parametrized families of smooth AQFTs,
which for the special case $U=\bbR$ leads to a notion of smooth curves of smooth AQFTs.
We shall illustrate later in Section \ref{sec:examples} that smooth variations of the 
external parameters of a theory, such as the mass parameter, gives rise to such smooth families.
\sk

Throughout the whole paper we restrict our attention to the simplest case given by
$1$-dimensional AQFTs. We expect that a generalization to higher-dimensional AQFTs 
is possible by using similar techniques, however there are certain additional
technical difficulties and challenges that we explain in more detail in Section \ref{sec:higher}.

\subsection{\label{subsec:Locstack}The stack $\Loc_1^\infty$}
In this subsection we introduce a stack $\Loc_1^\infty : \Man^\op\to\Cat$ 
that provides a smooth refinement of the category $\Loc_1$
of $1$-dimensional spacetimes. Let us first recall how the latter
category is defined. A $1$-dimensional spacetime (i.e.\ time interval)
may be described in terms of a pair $(I,e)$, where $I\subseteq \bbR$
is an open interval and $e\in\Omega^1(I)$ is a non-degenerate
$1$-form that encodes the geometry and orientation of $I$.
(In physics terminology, one may call $e$ a {\em $1$-bein}, i.e.\ a $1$-dimensional vielbein.) 
A morphism $f: (I,e)\to (I^\prime,e^\prime)$ in $\Loc_1$ 
is an open embedding $f : I\to I^\prime$ of intervals that preserves the 
$1$-forms, i.e.\ $f^\ast (e^\prime) =e$.
\sk

Given any manifold $U\in \Man$, the category $\Loc_1^\infty(U)$
is supposed to describe smooth $U$-families
of $1$-dimensional spacetimes. A suitable way to formalize those
is through fiber bundles and their vertical geometry.
\begin{defi}\label{def:familyLocobjects}
A {\em smooth $U$-family of $1$-dimensional spacetimes} 
is a pair $(\pi:M\to U,E)$ consisting of a (locally trivializable) 
fiber bundle $\pi : M\to U$ with typical fiber an open interval $I\subseteq \bbR$
and a non-degenerate vertical $1$-form $E\in \Omega^1_\ver(M)$ on the total space.
\end{defi}
\begin{rem}\label{rem:interpretationLocobjects}
The interpretation of this definition is as follows:
Given any pair $(\pi : M\to U,E)$ as in Definition \ref{def:familyLocobjects}, 
one obtains, for every point $x\in U$, a $1$-dimensional spacetime $(M\vert_x,E\vert_x):=
(\pi^{-1}(\{x\}), E\vert_{\pi^{-1}(\{x\})}) \in \Loc_1$ by restricting
to the fiber over $x\in U$. Due to the smooth fiber bundle structure,
it makes sense to interpret this pair as depending smoothly on $x\in U$.
\end{rem}

A natural concept of morphisms $f: (\pi:M\to U,E)\to (\pi^\prime: M^\prime\to U,E^\prime)$
between smooth $U$-families of $1$-dimensional spacetimes 
is given by fiber bundle maps
\begin{flalign}\label{eqn:fiberbundlemap}
\xymatrix@C=1em{
\ar[dr]_-{\pi}M\ar[rr]^-{f} && M^\prime\ar[dl]^-{\pi^\prime}\\
&U&
}
\end{flalign}
that preserve the $1$-forms, i.e.\ $f^\ast (E^\prime) = E$,
and that are in a suitable sense ``open embeddings of fiber bundles''.
Indeed, from the AQFT point of view, it is quite natural to consider open 
embeddings as they allow to push forward compactly supported sections 
of vector bundles, which is crucial to construct examples.
We would like to emphasize that there exist a priori
different concepts of what open embeddings of fiber bundles could be.
For example, we could demand the {\em point-wise} condition 
that the restriction $f\vert_x : M\vert_x\to M^\prime\vert_x$
to the fiber over every point $x\in U$ is an open embedding of manifolds.
Unfortunately, this simple point-wise condition is incompatible 
with pushing forward vertically compactly supported functions 
on the total spaces. Hence, the correct
concept of ``open embeddings of fiber bundles'' should be in some sense
more uniform on $U$. There exist a priori different options to formalize this,
but fortunately the three main candidates are equivalent.
\begin{lem}\label{lem:Ulocalembedding}
Let $f :(\pi:M\to U)\to (\pi^\prime:M^\prime \to U)$ be a fiber bundle map. 
Then the following three statements are equivalent:
\begin{enumerate}
\item For each $x\in U$, there exists an open neighborhood $U_x \subseteq U$, such that
the restriction $f\vert_{U_x} : M\vert_{U_x} \to M^\prime\vert_{U_x}$ is an open embedding of manifolds.

\item The map of total spaces $f: M \to M^\prime$ is an open embedding of manifolds.

\item For each open subset $\tilde{U}\subseteq U$, the restriction 
$f\vert_{\tilde{U}} : M\vert_{\tilde{U}} \to M^\prime\vert_{\tilde{U}}$ is an open embedding of manifolds.
\end{enumerate}
\end{lem}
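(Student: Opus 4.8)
The plan is to run the cycle $(3)\Rightarrow(2)\Rightarrow(1)\Rightarrow(3)$. The first two implications are immediate specializations: taking $\tilde U = U$ in $(3)$ yields $(2)$, and taking $U_x = U$ for every $x$ in $(2)$ yields $(1)$. All of the content therefore sits in $(1)\Rightarrow(3)$, and I would isolate its core as the single claim that condition $(1)$ forces $f\colon M\to M^\prime$ itself to be an open embedding of manifolds (that is, $(1)\Rightarrow(2)$). The general statement $(3)$ then follows by applying this core claim to the restricted fiber bundle map $f\vert_{\tilde U}\colon M\vert_{\tilde U}\to M^\prime\vert_{\tilde U}$ over each open $\tilde U\subseteq U$, which again satisfies $(1)$ relative to $\tilde U$ after shrinking each $U_x$ to $U_x\cap\tilde U$.

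To prove the core claim $(1)\Rightarrow(2)$ I would establish two properties of $f$ and then invoke the standard fact that an injective local diffeomorphism is an open embedding. \textbf{Local diffeomorphism.} This is a pointwise condition, hence purely local on $M$: given $p\in M$ with $x:=\pi(p)$, the set $M\vert_{U_x}$ is an open neighborhood of $p$ in $M$ and $M^\prime\vert_{U_x}$ is open in $M^\prime$, so the open embedding $f\vert_{U_x}$ exhibits $f$ as a diffeomorphism onto an open set near $p$. \textbf{Injectivity.} This is the only genuinely global property, and it is where the fiber bundle structure does the work: if $f(p_1)=f(p_2)$, then $\pi(p_1)=\pi^\prime(f(p_1))=\pi^\prime(f(p_2))=\pi(p_2)=:x$, so $p_1$ and $p_2$ lie in the single fiber $M\vert_x\subseteq M\vert_{U_x}$, on which $f\vert_{U_x}$ is injective; hence $p_1=p_2$.

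With injectivity and the local diffeomorphism property in hand, $f$ is an open map, so $f(M)$ is open in $M^\prime$, and $f\colon M\to f(M)$ is a bijective local diffeomorphism, hence a diffeomorphism onto the open submanifold $f(M)$. This is exactly what it means for $f$ to be an open embedding, establishing $(2)$ and thereby closing the cycle.

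The main (and essentially only) subtlety I expect is the injectivity step. It is precisely here that the $U$-local hypothesis $(1)$ is needed rather than a merely fiberwise condition: $(1)$ guarantees that each \emph{entire} fiber $M\vert_x$ lies inside a single open set $M\vert_{U_x}$ on which $f$ is already known to be injective, which is what upgrades fiberwise injectivity to global injectivity via the fiber-preserving identity $\pi = \pi^\prime\circ f$. Everything else is routine manifold theory and does not use the interval-fiber hypothesis.
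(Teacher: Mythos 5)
Your proof is correct and takes essentially the same route as the paper's: the only substantive implication is $(1)\Rightarrow(2)$, established exactly as you do via the local-diffeomorphism property together with injectivity coming from fiber preservation, and the passage to arbitrary open $\tilde U\subseteq U$ rests on the same identity $f(M\vert_{\tilde U})=f(M)\cap M^\prime\vert_{\tilde U}$. Your injectivity step is in fact spelled out more explicitly than in the paper, which merely asserts that $f\colon M\to f(M)$ is a bijection ``from the hypothesis''.
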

\begin{proof}
1.~$\Rightarrow$ 2.: From the hypothesis it is clear that $f:M\to f(M)$ is a bijection of sets. 
Furthermore, for each $x\in U$,  there exists an open neighborhood $U_x\subseteq U$
such that $f\vert_{U_x} : M\vert_{U_x} \to M^\prime\vert_{U_x}$ is an open embedding,
which implies that $f:M\to f(M)$ is a diffeomorphism. To show that the image $f(M)\subseteq M^\prime$ is open,
observe that $f\vert_{U_x}(M\vert_{U_x})\subseteq M^\prime \vert_{U_x}$
is by hypothesis open and that $M^\prime \vert_{U_x}\subseteq M^\prime$ is open too.
Hence, $f(M)=\bigcup_{x\in U} f\vert_{U_x}(M\vert_{U_x})\subseteq M^\prime$ is open.
\sk

2.~$\Rightarrow$ 3.: The open embedding $f:M\to M^\prime$ factors as a diffeomorphism 
$f : M \to f(M)$ followed by an open inclusion $f(M)\subseteq M^\prime$. Take any open subset
$\tilde{U}\subseteq U$ and consider the restriction $f\vert_{\tilde{U}} : M\vert_{\tilde{U}}\to M^\prime\vert_{\tilde{U}}$,
which factors as a map $f\vert_{\tilde{U}} : M\vert_{\tilde{U}}\to f(M\vert_{\tilde{U}})$ followed
by an inclusion $f(M\vert_{\tilde{U}})\subseteq M^\prime\vert_{\tilde{U}}$.
Because $f$ is a fiber bundle map, we have that  $f(M\vert_{\tilde{U}}) = f(M)\cap M^\prime\vert_{\tilde{U}}$, 
which implies that $f\vert_{\tilde{U}} : M\vert_{\tilde{U}}\to f(M\vert_{\tilde{U}})$ is a diffeomorphism
and that $f(M\vert_{\tilde{U}})\subseteq M^\prime\vert_{\tilde{U}}$ is an open inclusion. Hence, 
$f\vert_{\tilde{U}} : M\vert_{\tilde{U}}\to M^\prime\vert_{\tilde{U}}$ is an open embedding.
\sk

3.~$\Rightarrow$ 1.: Trivial.
\end{proof}

\begin{defi}\label{def:Loc_1(U)}
For any manifold $U\in \Man$, we denote by $\Loc_1^\infty(U)$ the category
whose objects are smooth $U$-families of $1$-dimensional spacetimes
$(\pi:M\to U,E)$, see Definition \ref{def:familyLocobjects}, 
and whose morphisms $f: (\pi:M\to U, E)\to (\pi^\prime : M^\prime\to U,E^\prime)$
are fiber bundle maps \eqref{eqn:fiberbundlemap}
that preserve the $1$-forms, i.e.\ $f^\ast (E^\prime) = E$, and that satisfy
any of the three equivalent conditions in Lemma \ref{lem:Ulocalembedding}.
\end{defi}

In order to define a (pre)stack $\Loc_1^\infty : \Man^\op\to \Cat$ (see Definition \ref{def:prestack}), 
we have to assign to every morphism $h:U\to U^\prime$ in $\Man$ a functor
\begin{flalign}\label{eqn:Locfunctor1}
h^\ast\,:=\, \Loc_1^\infty(h) \,:\, \Loc_1^\infty(U^\prime)~\longrightarrow~\Loc_1^\infty(U)\quad. 
\end{flalign}
Let us recall that, given any fiber bundle $\pi: M\to U^\prime$,
one may form the {\em pullback bundle}
\begin{flalign}\label{eqn:pullbackbundle}
\xymatrix{
\ar@{-->}[d]_-{\pi_h}h^\ast M \ar@{-->}[r]^-{\overline{h}^M} & M\ar[d]^-{\pi}\\
U\ar[r]_-{h}&U^\prime
}
\end{flalign}
which is a locally trivializable fiber bundle with the same typical fiber as $\pi:M\to U^\prime$.
We then define the functor \eqref{eqn:Locfunctor1} on objects as
\begin{subequations}\label{eqn:Locfunctor2}
\begin{flalign}\label{eqn:Locfunctor2a}
h^\ast\big(\pi :M\to U^\prime,E\big) \,:=\,\big(\pi_h : h^\ast M \to U, \overline{h}^{M\ast} (E)\big)
\end{flalign}
and on morphisms $f : (\pi:M\to U^\prime, E)\to (\pi^\prime : M^\prime\to U^\prime,E^\prime)$ as
\begin{flalign}\label{eqn:Locfunctor2b}
h^\ast f \,:\,   \big(\pi_h: h^\ast M\to U, \overline{h}^{M\ast} (E)\big)~\longrightarrow~\big(\pi^\prime_h : h^\ast M^\prime\to U,\overline{h}^{M^\prime\ast} (E^\prime)\big)\quad,
\end{flalign}
where the fiber bundle map $h^\ast f $ is defined uniquely through the universal 
property of pullback bundles by the commutative diagram
\begin{flalign}\label{eqn:Locfunctor2c}
\xymatrix@C=1em@R=2em{
~&~h^\ast M^\prime\ar[rr]^-{\overline{h}^{M^\prime}} \ar@{-}[d]_(.7){\pi^\prime_h}~&~ ~&~M^\prime \ar[dd]^-{\pi^\prime}\\
\ar[dr]_-{\pi_h}\ar@{-->}[ur]^-{h^\ast f} h^\ast M \ar[rr]^(.7){\overline{h}^{M}}~&~\ar[d] ~&~ M \ar[dr]_-{\pi}\ar[ru]^(.4){f}~&~ \\
~&~ U \ar[rr]_-{h}~&~ ~&~ U^\prime
}
\end{flalign}
\end{subequations}
The fact that $h^\ast f$ preserves the $1$-forms, i.e.\ $(h^\ast f)^\ast \overline{h}^{M^\prime\ast} (E^\prime)
= \overline{h}^{M\ast} (E)$, is a direct consequence of this diagram and $f^\ast (E^\prime) = E$.
Furthermore, the condition 1.~of Lemma \ref{lem:Ulocalembedding} for $h^\ast f$ can be easily proven
using that $f$ satisfies this condition. Hence, \eqref{eqn:Locfunctor2b} defines 
a morphism in $\Loc_1^\infty(U)$.
\begin{propo}\label{prop:Locstack}
The prestack $\Loc_1^\infty : \Man^\op\to \Cat$ defined by Definition \ref{def:Loc_1(U)}, \eqref{eqn:Locfunctor1},
\eqref{eqn:Locfunctor2} and the canonical coherence isomorphisms given by the universal property of pullback bundles
is a stack, i.e.\ it satisfies the descent condition from Definition \ref{def:stack}.
\end{propo}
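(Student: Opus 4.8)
The plan is to verify the descent condition from Definition \ref{def:stack}: for every $U\in\Man$ and every open cover $\{U_\alpha\subseteq U\}$, the canonical functor $\Loc_1^\infty(U)\to\Loc_1^\infty(\{U_\alpha\subseteq U\})$ of \eqref{eqn:functordescentcategory} is an equivalence of categories. The key observation is that the objects of $\Loc_1^\infty(U)$ are built from two pieces of geometric data — a locally trivializable fiber bundle $\pi:M\to U$ and a non-degenerate vertical $1$-form $E\in\Omega^1_\ver(M)$ — each of which satisfies its own gluing property, and that the morphisms are governed by the local conditions of Lemma \ref{lem:Ulocalembedding}. I would therefore reduce descent for $\Loc_1^\infty$ to well-known descent statements for these two ingredients, much as Example \ref{ex:vecbun} reduces descent for $\VecBun_\bbK$ to the standard gluing of vector bundles.

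First I would establish \emph{essential surjectivity}. Given a descent datum $\big(\{(\pi_\alpha:M_\alpha\to U_\alpha, E_\alpha)\},\{\varphi_{\alpha\beta}\}\big)$, the isomorphisms $\varphi_{\alpha\beta}:(\pi_\beta,E_\beta)\vert_{U_{\alpha\beta}}\to(\pi_\alpha,E_\alpha)\vert_{U_{\alpha\beta}}$ are in particular fiber bundle isomorphisms over $U_{\alpha\beta}$ satisfying the cocycle condition \eqref{eqn:descentobject2}. Since the category of locally trivializable fiber bundles with fixed typical fiber an open interval $I\subseteq\bbR$ satisfies descent — this is the standard gluing of fiber bundles along an open cover, exactly as for vector bundles in Example \ref{ex:vecbun} — one obtains a glued fiber bundle $\pi:M\to U$ together with isomorphisms $M\vert_{U_\alpha}\cong M_\alpha$ intertwining the $\varphi_{\alpha\beta}$. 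Next I would glue the vertical $1$-forms: because $\Omega^1_\ver(-)$ is a sheaf on the total space $M$ (vertical $1$-forms are sections of the vertical cotangent bundle, and sections of any bundle form a sheaf), the condition $\varphi_{\alpha\beta}^\ast(E_\alpha\vert_{U_{\alpha\beta}})=E_\beta\vert_{U_{\alpha\beta}}$ lets the local forms $E_\alpha$ be patched into a global $E\in\Omega^1_\ver(M)$. Non-degeneracy is a pointwise, hence local, condition, so it is inherited from the $E_\alpha$. This produces an object of $\Loc_1^\infty(U)$ mapping to the given descent datum up to isomorphism.

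For \emph{fully faithfulness} I would argue that a morphism of descent data $\{\psi_\alpha\}$ as in \eqref{eqn:descentmorphism} is a compatible family of fiber bundle maps $\psi_\alpha:M\vert_{U_\alpha}\to M'\vert_{U_\alpha}$ over $U_\alpha$ preserving the vertical $1$-forms; by the sheaf property of smooth maps between the total spaces (again a local-to-global statement) these glue uniquely to a fiber bundle map $\psi:M\to M'$ over $U$ with $\psi\vert_{U_\alpha}=\psi_\alpha$ and $\psi^\ast(E')=E$. The one genuinely $\Loc_1^\infty$-specific point is that $\psi$ lands among the admissible morphisms, i.e.\ that it is a $U$-local open embedding. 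This is precisely where Lemma \ref{lem:Ulocalembedding} does the work: condition~1 of that lemma is manifestly local on $U$, so since each $\psi_\alpha$ is an open embedding over $U_\alpha$, the glued map $\psi$ satisfies condition~1 over the cover and hence is an admissible morphism in $\Loc_1^\infty(U)$. Uniqueness of the glued morphism gives faithfulness, and its existence gives fullness.

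The main obstacle I expect is bookkeeping the coherence isomorphisms rather than any conceptual difficulty: the restriction functors $\vert_{\tilde U}$ are implemented by pullback along inclusions $\iota_{\tilde U}^U$, so the identifications $M\vert_{U_\beta}\vert_{U_{\alpha\beta}}\cong M\vert_{U_\alpha}\vert_{U_{\alpha\beta}}$ appearing throughout \eqref{eqn:descentobject2} and \eqref{eqn:descentmorphism2} are the canonical coherence isomorphisms supplied by the universal property of pullback bundles, and one must check that the gluing constructed above is compatible with them. This is the analog of verifying that transition functions satisfy the cocycle condition in Example \ref{ex:vecbun}; it is routine but must be tracked carefully. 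Once the two underlying sheaf-theoretic gluings (fiber bundles and vertical $1$-forms) and the locality of Lemma \ref{lem:Ulocalembedding} are in place, the descent equivalence follows, so I would present the proof as a reduction to these standard facts and leave the coherence verifications to the reader.
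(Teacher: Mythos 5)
Your proof is correct and takes essentially the same approach as the paper's: both reduce descent for $\Loc_1^\infty$ to descent for the underlying fiber bundles and for the (vertical) differential forms, and both handle the admissibility of glued morphisms by observing that condition~1 of Lemma~\ref{lem:Ulocalembedding} is a local condition on $U$. The paper's proof is simply a more compressed version of the essential-surjectivity/full-faithfulness argument you spell out.
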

\begin{proof}
This is a direct consequence of descent for fiber bundles and differential forms
and of the fact that the first condition on the fiber bundle morphisms stated in Lemma 
\ref{lem:Ulocalembedding} is a local condition on $U\in\Man$. 
In more detail, spelling out descent for objects $(\pi:M\to U, E) \in \Loc_1^\infty(U)$, 
one observes that it involves descent for the underlying 
fiber bundles $\pi:M\to U$ and also for the underlying $1$-forms $E$, 
which are straightforward consequences of descent for fiber bundles 
and differential forms. 
Similarly, descent for $\Loc_1^\infty(U)$-morphisms 
$f: (\pi:M\to U, E)\to (\pi^\prime : M^\prime\to U,E^\prime)$ 
involves descent for the underlying fiber bundle maps 
$f: (\pi:M\to U)\to (\pi^\prime : M^\prime\to U)$ 
and the verification that the $1$-forms are preserved and that any one 
of the equivalent conditions from Lemma \ref{lem:Ulocalembedding} holds, 
which are again both consequences of descent for fiber bundles 
and differential forms and the fact 
that the descent data fulfill these properties. 
\end{proof}
\begin{rem}
Observe that the category $\Loc_1^\infty(\{\ast\})$ of global points $\und{\{\ast\}} \to \Loc_1^\infty$
of the stack $\Loc_1^\infty$ is the ordinary category $\Loc_1$ of $1$-dimensional spacetimes.
\end{rem}

\subsection{\label{subsec:Algstack}The stack $\astAlg^\infty$}
The aim of this subsection is to develop a stack $\astAlg^\infty$ that provides
a smooth refinement of the usual category $\astAlg$ of associative and unital $\ast$-algebras
over $\bbC$. Let us recall that the latter category may be defined as the category
$\astMon_{\mathrm{rev}}(\Vec_\bbC)$ of order-reversing $\ast$-monoids in the {\em involutive} symmetric monoidal
category $\Vec_\bbC$ of complex vector spaces, see e.g.\ \cite{Jacobs,BSWinvolutive} 
for the relevant background on involutive category theory.
Our strategy is to introduce first a stack (of involutive symmetric monoidal categories)
that refines the category $\Vec_\bbC$ of vector spaces over $\bbC$
and then discuss how to form order-reversing $\ast$-monoids at the level of stacks.
\sk

Let us consider for now the case where $\bbK$ is either $\bbR$ or $\bbC$.
As a first attempt to introduce a smooth refinement of the category $\Vec_\bbK$,
we could consider the stack $\VecBun_\bbK : \Man^\op\to \Cat$ of
$\bbK$-vector bundles introduced in Example \ref{ex:vecbun}.
To a manifold $U\in\Man$, this stack assigns the category
$\VecBun_\bbK(U)$ of (locally trivializable and finite rank) $\bbK$-vector bundles
over $U$. Considering as in Remark \ref{rem:interpretationLocobjects} the fibers over points $x\in U$, 
every vector bundle can be interpreted as a smooth $U$-family of $\bbK$-vector spaces. 
The problem with this first attempt is that the fibers of vector bundles 
are (by definition) {\em finite-dimensional} vector spaces,
while examples of AQFTs, even in dimension $1$, require infinite-dimensional vector spaces,
such as the vector spaces underlying the canonical commutation relation algebras.
A natural way to enlarge the category $\VecBun_{\bbK}(U)$ in order to capture such
infinite-dimensional aspects is to pass (via the sheaf of sections functor) 
to the category $\Sh_{C^\infty_\bbK}(U)$ of sheaves of $C^\infty_{\bbK,U}$-modules over $U\in\Man$. 
Here $C^\infty_{\bbK,U} : \Open(U)^\op\to \Alg_\bbK\,,~(\tilde{U}\subseteq U)\mapsto C_\bbK^\infty(\tilde{U}) $ 
denotes the sheaf of $\bbK$-valued smooth functions on $U$. 
Indeed, $\VecBun_\bbK(U)$ embeds fully faithfully in 
$\Sh_{C^\infty_\bbK}(U)$ and the essential image 
consists of locally free $C^\infty_{\bbK,U}$-modules of finite rank, 
see e.g.\ \cite[Chapter 2]{Ramanan}. 
\begin{rem}
We would like to note that there are also alternative candidates 
to enlarge the category $\VecBun_\bbK(U)$ to include 
such infinite-dimensional aspects. For example, one could imagine to work
with bundles over $U$ whose fibers are e.g.\ locally convex,
bornological or diffeological vector spaces. However, to make
this a valid choice, one would have to 
confirm that such categories assemble into a stack, as it is
the case for the sheaf categories $U\mapsto \Sh_{C^\infty_\bbK}(U)$, 
see Proposition \ref{prop:Shmodulestack} below. 
As another alternative, one could search directly for a stack providing a
smooth refinement of the category of $C^\ast$-algebras. 
To the best of our knowledge, such a stack has not yet been studied, 
but we believe that this may be related to the concept of
continuous bundles/fields of $C^\ast$-algebras,
see e.g\ \cite{CastBundles} and \cite[Section 10.3]{Dixmier}.
\end{rem}

Following \cite{Kashiwara}, let us now describe the (pre)stack 
$\Sh_{C_\bbK^\infty} : \Man^\op\to \Cat$ of sheaves of $C^\infty_\bbK$-modules in more detail.
To each manifold $U\in\Man$,  it assigns the category
$\Sh_{C^\infty_\bbK}(U)$ of sheaves of $C^\infty_{\bbK,U}$-modules over $U\in\Man$, 
with morphisms given by $C^\infty_{\bbK,U}$-linear sheaf morphisms.
To a morphism $h : U\to U^\prime$ in $\Man$, it assigns the functor
\begin{subequations}\label{eqn:sheafmodstack}
\begin{flalign}\label{eqn:sheafmodstack1}
h^\ast\,:=\, \Sh_{C^\infty_\bbK}(h)\,:\,\Sh_{C^\infty_\bbK}(U^\prime)~\longrightarrow~\Sh_{C^\infty_\bbK}(U)
\end{flalign}
that acts on $V\in\Sh_{C^\infty_\bbK}(U^\prime) $ as
\begin{flalign}\label{eqn:sheafmodstack2}
h^\ast V \,:=\, h^{-1}(V) \otimes_{h^{-1}(C^\infty_{\bbK,U^\prime})} C^\infty_{\bbK,U}\quad,
\end{flalign}
\end{subequations}
where $h^{-1}$ is the inverse image sheaf functor and $\otimes_{h^{-1}(C^\infty_{\bbK,U^\prime})} $
denotes the relative tensor product of sheaves of modules. 
Together with the canonical coherence isomorphisms associated with relative tensor products
and inverse image functors, this defines a prestack
$\Sh_{C^\infty_\bbK}:\Man^\op\to \Cat$ in the sense of Definition \ref{def:prestack}.
The following result is well-known, see e.g.\ \cite[Proposition 19.4.7]{Kashiwara}.
\begin{propo}\label{prop:Shmodulestack}
For $\bbK$ being either $\bbR$ or $\bbC$, the prestack $\Sh_{C^\infty_\bbK}: \Man^\op\to \Cat$ 
defined above is a stack, i.e.\ it satisfies
the descent condition from Definition \ref{def:stack}.
\end{propo}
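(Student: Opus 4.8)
The plan is to verify the descent condition of Definition \ref{def:stack} directly, taking advantage of the fact that descent for sheaves of modules over a sheaf of rings is classical, as recorded in \cite[Proposition 19.4.7]{Kashiwara}. The real work is therefore to match the abstract prestack structure \eqref{eqn:sheafmodstack} with the ordinary gluing of sheaves, after which the known result applies. First I would fix $U\in\Man$ together with an open cover $\{U_\alpha\subseteq U\}$ and unwind the restriction functors $\vert_{\tilde U} = \Sh_{C^\infty_\bbK}(\iota_{\tilde U}^U)$ that enter the descent category. For an open inclusion $\iota_{\tilde U}^U : \tilde U \hookrightarrow U$ the inverse image $\iota^{-1}$ is just restriction of sheaves to $\tilde U$, and since $\iota^{-1}(C^\infty_{\bbK,U}) = C^\infty_{\bbK,\tilde U}$ canonically, the relative tensor product in \eqref{eqn:sheafmodstack2} is trivial; hence $\iota^\ast V \cong V\vert_{\tilde U}$ is canonically the ordinary restriction. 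Under these identifications the descent category becomes the usual category of gluing data: families of sheaves of modules $V_\alpha$ on $U_\alpha$ with isomorphisms $\varphi_{\alpha\beta}: V_\beta\vert_{U_{\alpha\beta}}\to V_\alpha\vert_{U_{\alpha\beta}}$ satisfying the cocycle condition \eqref{eqn:descentobject2}.

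Next I would establish that the canonical functor \eqref{eqn:functordescentcategory} is essentially surjective. Given gluing data $(\{V_\alpha\},\{\varphi_{\alpha\beta}\})$, I would produce a glued sheaf $V$ on $U$ by setting, for each open $W\subseteq U$,
\[
V(W) := \Big\{(s_\alpha)\in\textstyle\prod_\alpha V_\alpha(W\cap U_\alpha)\ :\ \varphi_{\alpha\beta}\big(s_\beta\vert_{W\cap U_{\alpha\beta}}\big) = s_\alpha\vert_{W\cap U_{\alpha\beta}}~\forall\alpha,\beta\Big\}.
\]
This is a $C^\infty_{\bbK,U}(W)$-module because $C^\infty_{\bbK,U}$ itself glues and the $\varphi_{\alpha\beta}$ are module maps; being defined as a limit of sheaves, $V$ is again a sheaf of $C^\infty_{\bbK,U}$-modules with no sheafification needed. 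The cocycle condition furnishes canonical isomorphisms $V\vert_{U_\alpha}\cong V_\alpha$ compatible with the $\varphi_{\alpha\beta}$, so $V$ maps to the prescribed gluing data up to isomorphism.

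Finally I would show that \eqref{eqn:functordescentcategory} is fully faithful, which is the statement that morphisms of sheaves of modules themselves glue: a morphism $V\to V'$ in $\Sh_{C^\infty_\bbK}(U)$ is the same datum as a family of morphisms $V\vert_{U_\alpha}\to V'\vert_{U_\alpha}$ compatible with the transition isomorphisms in the sense of \eqref{eqn:descentmorphism2}. This holds because the internal hom $\mathscr{H}om_{C^\infty_{\bbK,U}}(V,V')$ is itself a sheaf, so its sections satisfy the equalizer condition along the cover, yielding a bijection between global morphisms and compatible local families.

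The main obstacle, and really the only point demanding care, is bookkeeping rather than mathematical depth: one must check that all of the unlabeled coherence isomorphisms appearing in \eqref{eqn:descentobject2} and \eqref{eqn:descentmorphism2}, which arise from the relative-tensor-product and inverse-image coherences built into the prestack \eqref{eqn:sheafmodstack}, reduce under the identifications of the first step to the evident restriction isomorphisms. Once this is confirmed, the gluing constructions above assemble into a quasi-inverse of \eqref{eqn:functordescentcategory}, establishing the descent condition and recovering the classical result \cite[Proposition 19.4.7]{Kashiwara}.
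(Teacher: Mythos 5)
Your argument is correct. Note, however, that the paper does not actually prove this proposition: it simply records it as well-known and cites \cite[Proposition 19.4.7]{Kashiwara}, so there is no in-text proof to compare against. What you have written is, in effect, the classical argument that the cited reference encapsulates, and every step checks out: for an open inclusion $\iota : \tilde{U}\hookrightarrow U$ the inverse image is restriction and $\iota^{-1}(C^\infty_{\bbK,U})\cong C^\infty_{\bbK,\tilde{U}}$, so the relative tensor product in \eqref{eqn:sheafmodstack2} collapses and the descent category is the usual category of gluing data; essential surjectivity follows from your equalizer construction of the glued sheaf (a limit of sheaves is a sheaf, so no sheafification is needed, and the cocycle condition together with $\varphi_{\alpha\alpha}=\id$ gives the compatible identifications $V\vert_{U_\alpha}\cong V_\alpha$); and full faithfulness follows because the internal hom of sheaves of $C^\infty_{\bbK,U}$-modules is itself a sheaf, so morphisms glue uniquely from compatible local families. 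Your closing remark about checking that the prestack's coherence isomorphisms reduce to the evident restriction isomorphisms is exactly the right caveat, and it is routine here since descent only ever invokes open inclusions. The only difference between your route and the paper's is that you supply the verification the paper delegates to the literature; what your version buys is self-containedness, at the cost of a page of standard bookkeeping.
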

\begin{rem}
Observe that the category $\Sh_{C^\infty_\bbK}(\{\ast\})$ of global points 
$\und{\{\ast\}} \to\Sh_{C^\infty_\bbK}$
of the stack $\Sh_{C^\infty_\bbK}$ is the ordinary category $\Vec_\bbK$ of vector spaces over $\bbK$.
\end{rem}

As explained at the beginning of this subsection, we interpret 
the stack $\Sh_{C^\infty_\bbK}$ as a smooth refinement of
the category $\Vec_\bbK$ of vector spaces over $\bbK$. 
In order to introduce a smooth refinement of the 
category $\astAlg= \astMon_{\mathrm{rev}}(\Vec_\bbC)$ of
associative and unital $\ast$-algebras over $\bbC$, we have to define
an involutive symmetric monoidal structure on $\Sh_{C^\infty_\bbC}$.
To achieve this goal, let us first observe that, for both $\bbK=\bbR$ or $\bbC$,
the category $\Sh_{C^\infty_\bbK}(U)$ of sheaves of $C^\infty_{\bbK,U}$-modules
over each $U\in\Man$ is symmetric monoidal
with respect to the relative tensor product
\begin{flalign}
V\otimes_{C^\infty_{\bbK,U}} V^\prime \,\in\, \Sh_{C^\infty_\bbK}(U)\quad,
\end{flalign}
for all $V,V^\prime\in \Sh_{C^\infty_\bbK}(U)$. (The monoidal unit
is $C^\infty_{\bbK,U}\in  \Sh_{C^\infty_\bbK}(U)$, regarded as a sheaf of $C^\infty_{\bbK,U}$-modules.)
Furthermore, for each morphism $h: U\to U^\prime$ in $\Man$, the
functor \eqref{eqn:sheafmodstack} is strong
symmetric monoidal via the coherence isomorphisms
\begin{subequations}\label{eqn:ShSMcoherences}
\begin{flalign}
\nn h^\ast\big(V\otimes_{C^\infty_{\bbK,U^\prime}} V^\prime\big)
&= h^{-1}\big(V\otimes_{C^\infty_{\bbK,U^\prime}} V^\prime\big) \otimes_{h^{-1}(C^\infty_{\bbK,U^\prime})}C^\infty_{\bbK,U}\\
\nn &\cong h^{-1}(V)\otimes_{h^{-1}(C^\infty_{\bbK,U^\prime})} h^{-1}(V^\prime)\otimes_{h^{-1}(C^\infty_{\bbK,U^\prime})} C^\infty_{\bbK,U}\\
\nn &\cong \big(h^{-1}(V)\otimes_{h^{-1}(C^\infty_{\bbK,U^\prime})}  C^\infty_{\bbK,U}\big)\otimes_{C^\infty_{\bbK,U}} 
\big(h^{-1}(V^\prime)\otimes_{h^{-1}(C^\infty_{\bbK,U^\prime})} C^\infty_{\bbK,U}\big)\\
&=(h^\ast V)\otimes_{C^\infty_{\bbK,U}} (h^\ast V^\prime)
\end{flalign}
and
\begin{flalign}
h^\ast C^\infty_{\bbK,U^\prime} = h^{-1}(C^\infty_{\bbK,U^\prime})\otimes_{h^{-1}(C^\infty_{\bbK,U^\prime})} C^\infty_{\bbK,U} 
\cong C^\infty_{\bbK,U}\quad.
\end{flalign}
\end{subequations}
One can check that the canonical coherence isomorphisms of the stack 
$\Sh_{C^\infty_\bbK}$ are monoidal natural transformations.
\begin{cor}
The stack $\Sh_{C^\infty_\bbK}$ in Proposition \ref{prop:Shmodulestack}
is canonically a stack $\Sh_{C^\infty_\bbK} : \Man^\op\to \SMCat$ with
values in the $2$-category $\SMCat$ 
of symmetric monoidal categories, strong symmetric monoidal 
functors and monoidal natural transformations.
\end{cor}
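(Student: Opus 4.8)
The plan is to split the statement into two essentially independent tasks: first, to promote the $\Cat$-valued pseudo-functor of Proposition \ref{prop:Shmodulestack} to a pseudo-functor valued in $\SMCat$, and second, to check that the descent condition, already established at the level of underlying categories, lifts to $\SMCat$. Each category $\Sh_{C^\infty_\bbK}(U)$ is symmetric monoidal and each pullback functor $h^\ast$ is strong symmetric monoidal via the coherence isomorphisms \eqref{eqn:ShSMcoherences}, so it only remains to verify that the coherence $2$-cells $X_{h^\prime,h}$ and $X_U$ of the pseudo-functor are monoidal natural transformations and that the pseudo-functor axioms (i) and (ii) of Definition \ref{def:prestack} continue to hold in $\SMCat$.

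For the monoidality of the coherence $2$-cells, I would write out the two defining diagrams of a monoidal natural transformation, which demand compatibility of the components of $X_{h^\prime,h}$ and $X_U$ with the monoidal structure isomorphisms \eqref{eqn:ShSMcoherences} of the functors $h^\ast$. Both the components of these $2$-cells and the isomorphisms \eqref{eqn:ShSMcoherences} are assembled entirely from the universal properties of the inverse image functors $h^{-1}$ and of the relative tensor products $\otimes_{h^{-1}(C^\infty_{\bbK,U^\prime})}$ and $\otimes_{C^\infty_{\bbK,U}}$; the relevant squares then commute by uniqueness of the morphisms characterized by these universal properties. This is the only genuinely computational point, and I expect it to be the main (though purely formal) obstacle.

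The pseudo-functor axioms then come for free. The forgetful $2$-functor $\SMCat\to\Cat$ is locally faithful, since a monoidal natural transformation is uniquely determined by its underlying natural transformation, and it preserves both horizontal and vertical composition of $2$-cells. Hence the axioms (i) and (ii), being equations between $2$-cells built by composition from the $X_{h^\prime,h}$ and $X_U$, hold in $\SMCat$ precisely when the corresponding equations between their underlying $2$-cells hold in $\Cat$; the latter is exactly the content of the $\Cat$-valued pseudo-functor underlying Proposition \ref{prop:Shmodulestack}.

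Finally, for descent I would equip the descent category $\Sh_{C^\infty_\bbK}(\{U_\alpha\subseteq U\})$ with the symmetric monoidal structure defined objectwise on the patches $U_\alpha$, which is well-defined precisely because the restriction functors are strong monoidal and the coherence $2$-cells are monoidal, and whose underlying category is exactly the $\Cat$-valued descent category. With respect to this structure the comparison functor \eqref{eqn:functordescentcategory} is strong symmetric monoidal, and its underlying functor is an equivalence by Proposition \ref{prop:Shmodulestack}. Invoking the standard fact that a strong symmetric monoidal functor is an equivalence in $\SMCat$ if and only if its underlying functor is an equivalence in $\Cat$ (the pseudo-inverse carries a canonical strong symmetric monoidal structure, and the unit and counit are monoidal; see e.g.\ \cite{Leinster}), the comparison functor is an equivalence in $\SMCat$, which is the descent condition of Definition \ref{def:stack} read internally to $\SMCat$.
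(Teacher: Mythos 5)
Your proposal is correct and follows essentially the same route as the paper, which presents this corollary as a summary of the preceding discussion (the relative tensor product making each $\Sh_{C^\infty_\bbK}(U)$ symmetric monoidal, strong monoidality of $h^\ast$ via \eqref{eqn:ShSMcoherences}, and the remark that ``one can check'' the coherence isomorphisms are monoidal natural transformations — the universal-property argument you give is the intended verification). Your additional treatment of descent internally to $\SMCat$, by equipping the descent category with an objectwise monoidal structure and upgrading the comparison equivalence, is not spelled out at this point in the paper but is precisely the argument deployed later in the proof of Proposition \ref{propo:Algstack}, so it is a completion of the paper's sketch rather than a deviation from it.
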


In the case of $\bbK=\bbC$, we can define further, for each $U\in\Man$,
an involution endofunctor $\overline{(-)} : \Sh_{C^\infty_\bbC}(U)\to \Sh_{C^\infty_\bbC}(U)$.
It assigns to an object $V\in  \Sh_{C^\infty_\bbC}(U)$ the complex 
conjugate sheaf of $C^\infty_{\bbC,U}$-modules $\overline{V} \in  \Sh_{C^\infty_\bbC}(U)$
which, as a sheaf, coincides with $V$, but the $C^\infty_{\bbC,U}$-module
structure is defined via complex conjugation of $\bbC$-valued functions as
$\overline{v}\cdot a :=\overline{v\cdot a^\ast}$, for all $\overline{v}\in\overline{V}$
and $a\in C^\infty_{\bbC,U}$. Clearly, the endofunctor $\overline{(-)}$ squares to the identity and hence 
defines an involutive structure on the category $\Sh_{C^\infty_\bbC}(U)$, see \cite{Jacobs,BSWinvolutive}.
Observe that $\overline{(-)}$ is canonically a strong symmetric monoidal functor
with respect to the symmetric monoidal structure on $\Sh_{C^\infty_\bbC}(U)$ 
introduced above. Hence, we obtain that $\Sh_{C^\infty_\bbC}(U)$ is
an involutive symmetric monoidal category, for every $U\in\Man$.
Furthermore, for each morphism $h: U\to U^\prime$ in $\Man$, the
symmetric monoidal functor \eqref{eqn:sheafmodstack} is involutive via the
coherence isomorphisms
\begin{flalign}
\nn h^\ast \overline{V} &= h^{-1}(\overline{V})\otimes_{h^{-1}(C^\infty_{\bbC,U^\prime})} C^\infty_{\bbC,U}
\cong \overline{h^{-1}(V)} \otimes_{h^{-1}(C^\infty_{\bbC,U^\prime})} \overline{C^\infty_{\bbC,U}}\\
&\cong \overline{h^{-1}(V)\otimes_{h^{-1}(C^\infty_{\bbC,U^\prime})} C^\infty_{\bbC,U} }= \overline{ h^\ast V }\quad,
\label{eqn:ShInvcoherences}
\end{flalign}
for all $V\in \Sh_{C^\infty_\bbC}(U^\prime)$, where in the second step we used complex conjugation
$\ast : C^\infty_{\bbC,U} \to \overline{C^\infty_{\bbC,U}}$.
Summing up, we obtain
\begin{cor}\label{cor:ShstackISMCAT}
For $\bbK=\bbC$, the stack $\Sh_{C^\infty_\bbC}$ in Proposition \ref{prop:Shmodulestack}
is canonically a stack $\Sh_{C^\infty_\bbC} : \Man^\op\to \mathbf{ISMCat}$ with
values in the $2$-category $\mathbf{ISMCat}$ 
of involutive symmetric monoidal categories, involutive strong symmetric monoidal 
functors and involutive monoidal natural transformations.
\end{cor}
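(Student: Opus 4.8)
The plan is to upgrade the $\SMCat$-valued stack of the preceding corollary to an $\mathbf{ISMCat}$-valued one, so that the only thing left to do is to verify the coherence conditions that involve the involutions. The observation that makes every such verification routine is that the conjugation functor $\overline{(-)}$ leaves the underlying sheaf untouched and modifies only the $C^\infty_{\bbC,U}$-action through the complex conjugation $\ast : C^\infty_{\bbC,U}\to C^\infty_{\bbC,U}$. Consequently all the coherence isomorphisms pertaining to the involutive structure, in particular those exhibited in \eqref{eqn:ShInvcoherences}, are identities on underlying sheaves and merely reshuffle the scalar action.

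First I would record that each category $\Sh_{C^\infty_\bbC}(U)$ is an involutive symmetric monoidal category. This is essentially contained in the discussion preceding the statement: the endofunctor $\overline{(-)}$ squares strictly to the identity, so the structural isomorphism $\iota:\id\Rightarrow\overline{\overline{(-)}}$ may be taken to be the identity and then satisfies $\iota_{\overline V}=\overline{\iota_V}$ trivially, while $\overline{(-)}$ being strong symmetric monoidal and $\iota$ being a monoidal natural transformation are immediate since the relevant monoidal coherences are again identities on underlying sheaves.

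Next I would check that, for every morphism $h:U\to U^\prime$ in $\Man$, the functor $h^\ast$ together with the involutive coherence isomorphism \eqref{eqn:ShInvcoherences} is an involutive strong symmetric monoidal functor. Compatibility of \eqref{eqn:ShInvcoherences} with the $\iota$'s is automatic, both being identities, so the only genuine content is the compatibility of \eqref{eqn:ShInvcoherences} with the monoidal coherences \eqref{eqn:ShSMcoherences}. I would verify this by unfolding both composites on the relative tensor product $h^{-1}(V)\otimes_{h^{-1}(C^\infty_{\bbC,U^\prime})}C^\infty_{\bbC,U}$ and using that $\ast$ is an algebra involution intertwining the two module actions and commuting with the inverse image functor $h^{-1}$.

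Finally I would confirm that the canonical stack coherence isomorphisms $X_{h^\prime,h}$ and $X_U$, already known from the previous corollary to be monoidal natural transformations, are in addition involutive, i.e.\ compatible with \eqref{eqn:ShInvcoherences}; together with the underlying pseudo-functor axioms this yields the $\mathbf{ISMCat}$-valued stack. Since these coherences differ from their $\SMCat$-counterparts only by the conjugation of scalars, the defining square for an involutive natural transformation collapses to the monoidal statement that has already been established. The hard part will not be any single diagram but the bookkeeping: keeping precise track of where the conjugation $\ast$ is inserted in the iterated relative tensor products. Once one notes that $\ast$ commutes with $h^{-1}$ and with restriction along open inclusions, every diagram reduces to identities on underlying sheaves and therefore commutes.
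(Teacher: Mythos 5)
Your proposal is correct and follows essentially the same route as the paper, whose ``proof'' of this corollary is precisely the preceding construction: the involution endofunctor $\overline{(-)}$ squaring to the identity, its canonical strong symmetric monoidal structure, and the involutive coherences \eqref{eqn:ShInvcoherences} for $h^\ast$. Your additional observation that all verifications reduce to identities on underlying sheaves with only the scalar action twisted by conjugation is exactly the reason the paper treats these checks as routine and omits them.
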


With these preparations, it is now straightforward to introduce a (pre)stack
$\astAlg^\infty$ that provides a smooth refinement of the ordinary category
$\astAlg=\astMon_{\mathrm{rev}}(\Vec_\bbC)$ of associative and unital $\ast$-algebras over $\bbC$.
Using that forming order-reversing $\ast$-monoids
is a $2$-functor $\astMon_{\mathrm{rev}} : \mathbf{ISMCat}\to \Cat$, see
\cite{Jacobs,BSWinvolutive}, we define a prestack (in the sense of Definition \ref{def:prestack})
by the composition
\begin{flalign}\label{eqn:astAlgstack}
\astAlg^\infty\,:=\, \astMon_{\mathrm{rev}}\circ \Sh_{C^\infty_\bbC} \,:\,\Man^\op ~\longrightarrow~\Cat\quad.
\end{flalign}
More explicitly, this prestack assigns, to each manifold $U\in\Man$, the category
$\astAlg^\infty(U) =\astMon_{\mathrm{rev}}(\Sh_{C^\infty_\bbC}(U))$ of 
order-reversing $\ast$-monoids in the involutive
symmetric monoidal category $\Sh_{C^\infty_\bbC}(U)$. An object in this category
is a quadruple $(A,\mu,\eta,\ast)$, where $A\in \Sh_{C^\infty_\bbC}(U)$
is a sheaf of $C^\infty_{\bbC,U}$-modules on $U\in\Man$ and
\begin{flalign}
\mu \,:\, A\otimes_{C^\infty_{\bbC,U}} A ~\longrightarrow~A\quad,\qquad
\eta \,:\, C^\infty_{\bbC,U} ~\longrightarrow~A\quad,\qquad
\ast \,:\, A~\longrightarrow~ \overline{A}
\end{flalign}
are morphisms in $\Sh_{C^\infty_\bbC}(U)$ that satisfy the axioms of an associative and unital $\ast$-algebra.
A morphism $\kappa : (A,\mu,\eta,\ast)\to (A^\prime,\mu^\prime,\eta^\prime,\ast^\prime)$
in $\astAlg^\infty(U)$ is a morphism $\kappa : A\to A^\prime$ in $\Sh_{C^\infty_\bbC}(U)$
that preserves the multiplications, units and involutions. To each morphism 
$h:U\to U^\prime$ in $\Man$, the prestack $\astAlg^\infty$ assigns the functor
\begin{flalign}
h^\ast\,:=\, \astAlg^\infty(h)\,:\, \astAlg^\infty(U^\prime)~\longrightarrow~\astAlg^\infty(U)
\end{flalign}
that maps $(A,\mu,\eta,\ast)\in \astAlg^\infty(U^\prime)$ to the object $h^\ast A \in\Sh_{C^\infty_\bbC}(U) $ 
given in \eqref{eqn:sheafmodstack2}, endowed with the structure maps
\begin{subequations}
\begin{flalign}
&\xymatrix{
(h^\ast A)\otimes_{C^\infty_{\bbC,U}} (h^\ast A)\, \cong\, h^\ast\big(A\otimes_{C^\infty_{\bbC,U^\prime}}A \big) \ar[r]^-{h^\ast \mu}~&~ h^\ast A
}\quad,\\
&\xymatrix{
C^\infty_{\bbC,U} \,\cong\, h^\ast C^\infty_{\bbC,U^\prime} \ar[r]^-{h^\ast \eta } ~&~ h^\ast A
}\quad,\\
&\xymatrix{
h^\ast A \ar[r]^-{h^\ast \ast }~&~h^\ast \overline{A}  \,\cong\, \overline{h^\ast A}\quad,
}
\end{flalign}
\end{subequations}
obtained by using the coherence isomorphisms of the involutive symmetric monoidal 
stack $\Sh_{C^\infty_\bbC}$ from Corollary \ref{cor:ShstackISMCAT}.
\begin{propo}\label{propo:Algstack}
The prestack $\astAlg^\infty$ defined in \eqref{eqn:astAlgstack} is a stack,
i.e.\ it satisfies the descent condition from Definition \ref{def:stack}.
\end{propo}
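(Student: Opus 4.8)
The plan is to reduce the descent condition for $\astAlg^\infty$ to the already-established descent condition for the underlying stack $\Sh_{C^\infty_\bbC}$ (Proposition \ref{prop:Shmodulestack} and Corollary \ref{cor:ShstackISMCAT}), using the fact that $\astMon_{\mathrm{rev}}$ is a $2$-functor applied to an involutive symmetric monoidal stack. Fix $U\in\Man$ and an open cover $\{U_\alpha\subseteq U\}$. I must show that the canonical functor \eqref{eqn:functordescentcategory} from $\astAlg^\infty(U)$ to the descent category $\astAlg^\infty(\{U_\alpha\subseteq U\})$ is an equivalence. The key structural observation is that an object of the descent category for $\astAlg^\infty$ consists of a family of $\ast$-algebras $A_\alpha\in\astAlg^\infty(U_\alpha)$ together with gluing isomorphisms $\varphi_{\alpha\beta}$ that are now $\ast$-algebra morphisms (rather than mere $C^\infty$-module morphisms) satisfying the cocycle condition \eqref{eqn:descentobject2}. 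The forgetful functor $\astMon_{\mathrm{rev}}(\Sh_{C^\infty_\bbC}(U))\to\Sh_{C^\infty_\bbC}(U)$ therefore induces a forgetful functor between the respective descent categories, and the square relating these forgetful functors to the two descent functors \eqref{eqn:functordescentcategory} commutes.

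First I would establish essential surjectivity. Given descent datum $\big(\{A_\alpha\},\{\varphi_{\alpha\beta}\}\big)$ in $\astAlg^\infty(\{U_\alpha\subseteq U\})$, apply the forgetful functor to obtain descent datum for $\Sh_{C^\infty_\bbC}$. By Proposition \ref{prop:Shmodulestack} this is effective, yielding a sheaf of $C^\infty_{\bbC,U}$-modules $A\in\Sh_{C^\infty_\bbC}(U)$ together with isomorphisms $A\vert_{U_\alpha}\cong A_\alpha$ realizing the descent datum. The point is to equip $A$ with multiplication, unit and involution. The multiplications $\mu_\alpha:A_\alpha\otimes_{C^\infty_{\bbC,U_\alpha}}A_\alpha\to A_\alpha$ are compatible with the $\varphi_{\alpha\beta}$ precisely because the latter are algebra morphisms, and since the relative tensor product is part of the symmetric monoidal stack structure (Corollary \ref{cor:ShstackISMCAT}), the family $\{\mu_\alpha\}$ constitutes a morphism of descent data $A\otimes_{C^\infty_{\bbC,U}}A\rightsquigarrow A$. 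By the morphism part of descent for $\Sh_{C^\infty_\bbC}$ (full faithfulness on hom-sets), this glues to a unique $\mu:A\otimes_{C^\infty_{\bbC,U}}A\to A$. The same argument applied to $\{\eta_\alpha\}$ and to $\{\ast_\alpha\}$, the latter using that complex conjugation $\overline{(-)}$ is compatible with restriction via \eqref{eqn:ShInvcoherences}, produces $\eta$ and $\ast$. The associativity, unitality and $\ast$-algebra axioms for $(A,\mu,\eta,\ast)$ then hold because they hold locally on each $U_\alpha$ and the gluing is unique, so the two sides of each axiom, being morphisms of descent data that agree locally, must agree globally.

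Next I would establish full faithfulness. Given two objects $(A,\dots),(A',\dots)\in\astAlg^\infty(U)$, a morphism between their images in the descent category is a family $\{\kappa_\alpha\}$ of $\ast$-algebra morphisms satisfying \eqref{eqn:descentmorphism2}. Forgetting the algebra structure and invoking full faithfulness of \eqref{eqn:functordescentcategory} for $\Sh_{C^\infty_\bbC}$ yields a unique underlying module morphism $\kappa:A\to A'$ restricting to the $\kappa_\alpha$; it remains to check that $\kappa$ preserves $\mu,\eta,\ast$. Each of these compatibility conditions is an equation between morphisms in $\Sh_{C^\infty_\bbC}(U)$ that restricts to a valid equation on every $U_\alpha$ (since each $\kappa_\alpha$ is an algebra morphism), hence holds globally by the separatedness half of descent. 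I expect the only mildly delicate point, and thus the main obstacle, to be the bookkeeping of coherence isomorphisms: one must verify that the monoidal coherence data of $\Sh_{C^\infty_\bbC}$ interact correctly with the descent gluing isomorphisms so that $\{\mu_\alpha\}$ genuinely defines a morphism of descent data for the \emph{tensor-squared} object $A\otimes_{C^\infty_{\bbC,U}}A$. This is exactly the content of the statement, made earlier in the excerpt, that the coherence isomorphisms of $\Sh_{C^\infty_\bbC}$ are monoidal natural transformations, so the verification reduces to a diagram chase that is routine though notationally heavy. Since all of descent for $\astAlg^\infty$ is thereby transported from descent for $\Sh_{C^\infty_\bbC}$ via the $2$-functor $\astMon_{\mathrm{rev}}$, the proof concludes.
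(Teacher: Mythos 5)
Your proposal is correct: the essential input in both your argument and the paper's is that descent for $\astAlg^\infty$ reduces to descent for $\Sh_{C^\infty_\bbC}$ because all of the algebra structure ($\mu$, $\eta$, $\ast$) is expressed through the involutive symmetric monoidal structure, whose coherence isomorphisms are compatible with the restriction functors. The packaging differs, though. You unpack the equivalence into essential surjectivity and full faithfulness and verify each by hand: glue the underlying sheaf, then glue $\mu,\eta,\ast$ as morphisms of descent data, then check the algebra axioms and morphism compatibilities by the ``agree locally, hence agree globally'' principle. The paper instead observes that the descent category $\astAlg^\infty(\{U_\alpha\subseteq U\})$ \emph{is} the category $\astMon_{\mathrm{rev}}\big(\Sh_{C^\infty_\bbC}(\{U_\alpha\subseteq U\})\big)$ of order-reversing $\ast$-monoids in the descent category of $\Sh_{C^\infty_\bbC}$, equipped with the induced involutive symmetric monoidal structure; the canonical comparison functor $\Sh_{C^\infty_\bbC}(U)\to \Sh_{C^\infty_\bbC}(\{U_\alpha\subseteq U\})$ is then an equivalence in $\mathbf{ISMCat}$, and applying the $2$-functor $\astMon_{\mathrm{rev}}$ (which preserves equivalences) finishes the proof in one stroke. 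The abstract route buys you the two halves of your verification simultaneously and hides the coherence bookkeeping you correctly flag as the delicate point inside the single claim that the comparison functor is an $\mathbf{ISMCat}$-equivalence; your concrete route makes the same bookkeeping explicit but requires no identification of the descent category with a category of monoids. Both are sound, and your argument would serve as a complete proof.
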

\begin{proof}
Let $\{U_\alpha\subseteq U\}$ be any open cover of any $U\in \Man$.
The key step is to realize that the descent category $\astAlg^\infty(\{U_\alpha\subseteq U\})$
coincides with the category $\astMon_{\mathrm{rev}}\big(\Sh_{C^\infty_\bbC}(\{U_\alpha\subseteq U\})\big)$
of order-reversing $\ast$-monoids in the descent category $\Sh_{C^\infty_\bbC}(\{U_\alpha\subseteq U\})$,
which we endow with the  involutive symmetric monoidal structure
given by
\begin{subequations}\label{eqn:descentISM}
\begin{flalign}
\big(\{V_\alpha\},\{\varphi_{\alpha\beta}\}\big)\otimes\big(\{V^\prime_\alpha\},\{\varphi^\prime_{\alpha\beta}\}\big)
\,&:=\,\big(\{V_\alpha\otimes_{C^\infty_{\bbC,U_\alpha}} V^\prime_\alpha\} ,
\{\varphi_{\alpha\beta}\otimes_{C^\infty_{\bbC,U_{\alpha\beta}}}\varphi_{\alpha\beta}^\prime\} \big)
\\
\overline{ \big(\{V_\alpha\},\{\varphi_{\alpha\beta}\}\big)} \,&:=\, \big(\{\overline{V_\alpha}\},\{\overline{\varphi_{\alpha\beta}}\}\big)\quad,
\end{flalign}
\end{subequations}
where we have suppressed the coherence isomorphisms \eqref{eqn:ShSMcoherences}
and \eqref{eqn:ShInvcoherences}. Fully explicitly, the conjugated cocycle 
$\overline{\varphi_{\alpha\beta}}$ is given by
\begin{flalign}
\xymatrix{
\overline{V_\beta}\vert_{U_{\alpha\beta}} \cong \overline{V_\beta\vert_{U_{\alpha\beta}}} 
\ar[r]^-{\overline{\varphi_{\alpha\beta}}} ~&~ \overline{V_\alpha\vert_{U_{\alpha\beta}}} \cong
\overline{V_\alpha}\vert_{U_{\alpha\beta}}
} \quad,
\end{flalign} 
and similarly for the tensor product cocycle 
$\varphi_{\alpha\beta}\otimes_{C^\infty_{\bbC,U_{\alpha\beta}}}\varphi_{\alpha\beta}^\prime$.
The functor to the descent category $\Sh_{C^\infty_\bbC}(U) \to \Sh_{C^\infty_\bbC}(\{U_\alpha\subseteq U\})$
given in \eqref{eqn:functordescentcategory} carries a canonical involutive symmetric monoidal structure
and it is an equivalence in $\mathbf{ISMCat}$ because $\Sh_{C^\infty_\bbC}$ is a stack.
Applying the $2$-functor $\astMon_{\mathrm{rev}} : \mathbf{ISMCat}\to \Cat$ that takes order-reversing
$\ast$-monoids then yields the equivalence of categories  
$\astAlg^\infty(U)\to \astAlg^\infty(\{U_\alpha\subseteq U\})$
that proves descent for $\astAlg^\infty$.
\end{proof}
\begin{rem}
Observe that the category $\astAlg^\infty(\{\ast\})$ of global points $\und{\{\ast\}} \to\astAlg^\infty$
of the stack $\astAlg^\infty$ is the ordinary category $\astAlg$ of 
associative and unital $\ast$-algebras over $\bbC$.
\end{rem}

\subsection{\label{subsec:AQFTstack}The stack $\AQFT_1^\infty$}
With these preparations, we are now ready to introduce a natural concept of smooth $1$-dimensional AQFTs.
Recalling that the category of ordinary $1$-dimensional AQFTs is described as the functor category
$\AQFT_1 := \Fun(\Loc_1,\astAlg)$, we propose the following
\begin{defi}\label{def:AQFTstack}
The {\em stack of smooth $1$-dimensional AQFTs} is defined as the mapping stack (see \eqref{eqn:mappingstack})
\begin{flalign}
\AQFT_1^\infty\,:=\, \Map\big(\Loc_1^\infty,\astAlg^\infty\big)\,\in\,\St
\end{flalign}
from the stack $\Loc_1^\infty$ of $1$-dimensional spacetimes developed in Subsection \ref{subsec:Locstack}
to the stack $\astAlg^\infty$ of associative and unital $\ast$-algebras developed in Subsection \ref{subsec:Algstack}.
\end{defi}

This very simple definition is incredibly rich and powerful, as we shall explain throughout 
the rest of this subsection. Before discussing some of its more sophisticated consequences,
we believe that it is worth spelling out explicitly what a smooth $1$-dimensional AQFT is.
By definition, it is a global point $\und{\{\ast\}} \to \AQFT_1^\infty$ of the stack introduced in 
Definition \ref{def:AQFTstack} which, by the $2$-Yoneda Lemma, is equivalently an
object $\AAA\in \AQFT_1^\infty(\{\ast\})$. According to \eqref{eqn:mappingstack}, which defines mapping stacks,
we find that a smooth $1$-dimensional AQFT is then simply a stack morphism
$\AAA : \Loc_1^\infty \to \astAlg^\infty$. Even more explicitly, this consists of a family of functors
\begin{subequations}
\begin{flalign}
\AAA_U\,:\, \Loc_1^\infty(U)~\longrightarrow~\astAlg^\infty(U)\quad,
\end{flalign}
for all manifolds $U\in\Man$, and natural isomorphisms
\begin{flalign}
\xymatrix@C=2em@R=2em{
\ar[d]_-{h^\ast} \Loc_1^\infty(U^\prime) \ar[r]^-{\AAA_{U^\prime}}~&~ \astAlg^\infty(U^\prime)\ar[d]^-{h^\ast} \ar@{=>}[dl]_-{\AAA_h}\\
\Loc_1^\infty(U) \ar[r]_-{\AAA_{U}}~&~\astAlg^\infty(U)
}
\end{flalign}
\end{subequations}
for all morphisms $h:U\to U^\prime$ in $\Man$,
that satisfy the coherence axioms listed in Definition \ref{def:stackmorphism}.
When we interpret, as explained in the previous subsections, $\Loc_1^\infty(U)$
as the category of smooth $U$-families of $1$-dimensional spacetimes
and $\astAlg^\infty(U)$ as the category of smooth $U$-families of
algebras, the role of the functor $\AAA_U: \Loc_1^\infty(U)\to\astAlg^\infty(U)$
is to capture the response of the observable algebras to ``smooth variations of spacetimes''.
Hence, smooth AQFTs have built in a suitable concept of smooth dependence on
smooth variations of spacetimes, which we will illustrate in more detail via simple examples
in Section \ref{sec:examples}. Let us also note that the functor 
$\AAA_{\{\ast\}} : \Loc_1^\infty(\{\ast\})= \Loc_1\to\astAlg^\infty(\{\ast\})=\astAlg$
associated with the point $U=\{\ast\}$ defines an ordinary $1$-dimensional AQFT.
Hence, every smooth AQFT has an underlying ordinary AQFT
and it therefore provides a refinement of the ordinary concept.
\sk

Another interesting consequence of Definition \ref{def:AQFTstack} is
that it introduces a natural concept of  ``smooth curves of smooth AQFTs'',
or more generally of smooth $\tilde{U}$-families of smooth AQFTs, for every manifold $\tilde{U}\in\Man$.
By definition, a smooth $\tilde{U}$-family of smooth AQFTs is a $\tilde{U}$-point
$\und{\tilde{U}}\to \AQFT_1^\infty$ of the stack from Definition \ref{def:AQFTstack}
which, by the $2$-Yoneda Lemma, is equivalently an object $\BBB\in \AQFT_1^\infty(\tilde{U})$. 
From the definition of mapping stacks \eqref{eqn:mappingstack}, we obtain
that this is simply a stack morphism $ \Loc_1^\infty\times\und{\tilde{U}}\to \astAlg^\infty$,
or equivalently a stack morphism
\begin{flalign}\label{eqn:tildeUfamilyAQFT}
\BBB\,:\, \Loc_1^\infty~\longrightarrow~\Map\big(\und{\tilde{U}},\astAlg^\infty\big)
\end{flalign}
to the mapping stack from $\und{\tilde{U}}$ to $\astAlg^\infty$.
Even more explicitly, using again the $2$-Yoneda Lemma, 
this is a family of functors
\begin{subequations}
\begin{flalign}
\BBB_U\,:\, \Loc_1^\infty(U)~\longrightarrow~\astAlg^\infty(U\times\tilde{U})\quad,
\end{flalign}
for all manifolds $U\in\Man$, and natural isomorphisms
\begin{flalign}
\xymatrix@C=2em@R=2em{
\ar[d]_-{h^\ast} \Loc_1^\infty(U^\prime) \ar[r]^-{\BBB_{U^\prime}}~&~ \astAlg^\infty(U^\prime\times\tilde{U})\ar[d]^-{(h\times\id)^\ast} \ar@{=>}[dl]_-{\BBB_h}\\
\Loc_1^\infty(U) \ar[r]_-{\BBB_{U}}~&~\astAlg^\infty(U\times\tilde{U})
}
\end{flalign}
\end{subequations}
for all morphisms $h:U\to U^\prime$ in $\Man$,
that satisfy the coherence axioms listed in Definition \ref{def:stackmorphism}.
The role of the functor $\BBB_U: \Loc_1^\infty(U)\to\astAlg^\infty(U\times\tilde{U})$
is now twofold: Firstly, it captures the response of the observable algebras to ``smooth $U$-variations of spacetimes''.
Secondly, it captures the response of the observable algebras to ``smooth $\tilde{U}$-variations of
the smooth AQFT itself''. Again, this concept is best illustrated via simple examples, see
Section \ref{sec:examples}.
\sk

As another interesting consequence of Definition \ref{def:AQFTstack},
let us note that every smooth AQFT $\AAA : \und{\{\ast\}}\to\AQFT_1^\infty$ 
has a smooth automorphism group. (We refer to \cite{FewsterAut} for automorphism groups in ordinary AQFT,
which in general are not smooth groups.) This can be defined in terms of the {\em loop stack}
\begin{flalign}\label{eqn:Aut}
\xymatrix{
\ar@{-->}[d] \Aut(\AAA) \ar@{-->}[r] ~&~\und{\{\ast\}}\ar[d]^-{\AAA}\\
\und{\{\ast\}} \ar[r]_-{\AAA}~&~ \AQFT_1^\infty
}
\end{flalign}
which is a bicategorical pullback in the $2$-category $\St$ of stacks of 
categories.\footnote{The bicategorical pullback \eqref{eqn:Aut} 
exists because the $2$-category $\Cat$ admits 
all bicategorical limits, see e.g.\ \cite[Theorem 5.1]{Fiore}, 
and hence so does $\St$.}
By a direct computation of this bicategorical pullback, one finds that 
$\Aut(\AAA) : \Man^\op\to \Set\subset \Cat$ is equivalent to a sheaf of sets (i.e.\ discrete categories),
which due to the universal property of bicategorical pullbacks comes endowed 
with a group structure. This implies that $\Aut(\AAA): \Man^\op \to \mathbf{Grp}$ is a 
sheaf of groups on $\Man$, i.e.\ a smooth group from the functor of points perspective. 
\sk

Let us briefly explain how this concept of smooth automorphism groups 
is related to the more practical concept of smooth AQFTs with a smooth action 
of a Lie group. Given any Lie group $G$, we use the $2$-Yoneda embedding
to define a group object $\und{G}\in\St$ in the $2$-category of stacks
and construct the quotient stack
\begin{flalign}\label{eqn:quotientstack}
[\{\ast\}/G]\,:=\,\bicolim_{\St}\bigg(
\xymatrix@C=1em{
\und{\{\ast\}} \ar[r]~&~ \ar@<1ex>[l] \ar@<-1ex>[l] \ar@<1ex>[r] \ar@<-1ex>[r]\und{G}~&~ \ar@<2ex>[l] \ar[l] \ar@<-2ex>[l] \und{G}^2 \ar[r]\ar@<2ex>[r] \ar@<-2ex>[r] ~&~ \ar@<1ex>[l] \ar@<-1ex>[l]\ar@<3ex>[l] \ar@<-3ex>[l]\cdots
}\bigg)\,\in\,\St
\end{flalign}
associated with the trivial action of $G$ on the point $\{\ast\}$ via a bicategorical colimit.
A {\em $G$-equivariant smooth $1$-dimensional AQFT} is then defined to be a stack morphism 
\begin{flalign}
\AAA^{\mathrm{eq}} \, :\, [\{\ast\}/G]~\longrightarrow~\AQFT_1^\infty\quad.
\end{flalign}
By the universal property of the bicategorical colimit \eqref{eqn:quotientstack},
this datum is equivalent to a smooth AQFT $\AAA : \und{\{\ast\}}\to\AQFT_1^\infty$
together with a $2$-automorphism $\AAA_2$ of the stack morphism
$\und{G}\to \und{\{\ast\}}\stackrel{\AAA}{\to} \AQFT_1^\infty$ 
that satisfies certain compatibility conditions arising from the
face and degeneracy maps in \eqref{eqn:quotientstack}.\footnote{Recalling Definition \ref{def:stack2morphism},
let us also state these conditions explicitly at the level of the component
natural automorphisms ${\AAA_{2}}_U$ of the functors 
$\und{G}(U)\to \{\ast\} \stackrel{\AAA_U}{\to}\AQFT_1^\infty(U)$, for all $U\in\Man$.
Because $\und{G}(U)$ is a discrete category, i.e.\ it only has identity morphisms,
${\AAA_{2}}_U$ is simply a family of $\AQFT_1^\infty(U)$-isomorphisms
${{\AAA_{2}}_U}_g : \AAA_U(\ast)\to \AAA_U(\ast)$ labeled by elements $g\in\und{G}(U)=C^\infty(U,G)$.
The compatibility conditions then state that this labeling is compatible with the 
point-wise group structure on $\und{G}(U)=C^\infty(U,G)$, i.e.\ ${{\AAA_{2}}_U}_{g\cdot g^\prime} = 
{{\AAA_{2}}_U}_{g}\circ{{\AAA_{2}}_U}_{g^\prime}$, for all $g,g^\prime\in \und{G}(U)$,
and ${{\AAA_{2}}_U}_{e} = \id$, for the identity element $e\in\und{G}(U)$.}
From this we obtain a bicategorical cone
\begin{flalign}
\xymatrix{
\ar[d] \und{G} \ar[r] ~&~\und{\{\ast\}}\ar[d]^-{\AAA}\ar@{=>}[dl]_-{\AAA_2}\\
\und{\{\ast\}} \ar[r]_-{\AAA}~&~ \AQFT_1^\infty
}
\end{flalign}
and hence, by the universal property of the loop stack \eqref{eqn:Aut}, 
a stack morphism $\und{G}\to \Aut(\AAA)$ to the smooth automorphism group.
Due to the compatibility conditions of $\AAA_2$ this is a morphism 
of group objects. 
\sk

We conclude this subsection by providing an equivalent, but more explicit,
description of $G$-equivariant smooth AQFTs.
The quotient stack \eqref{eqn:quotientstack} can also be described
as the stackyfication of the prestack $[\{\ast\}/G]_{\mathrm{pre}}: \Man^\op\to \Cat$ 
that assigns to each $U\in\Man$ the groupoid 
\begin{flalign}\label{eqn:quotientprestack}
[\{\ast\}/G]_{\mathrm{pre}}(U)\,=\,\begin{cases}
\mathrm{Obj}: & \ast\\
\mathrm{Mor}: & C^\infty(U,G)
\end{cases}
\end{flalign}
with a single object $\ast$ and morphisms the smooth functions to the Lie group.
(Composition of morphisms is given by the point-wise group structure of $C^\infty(U,G)$.)
On $\Man$-morphisms $h: U\to U^\prime$ this prestack acts via pullback of functions 
$[\{\ast\}/G]_{\mathrm{pre}}(h) := h^\ast$. Because $\astAlg^\infty$ 
is a stack by Proposition \ref{propo:Algstack}, the universal property
of stackyfication implies that the datum of a stack morphism 
$\AAA^{\mathrm{eq}} : [\{\ast\}/G]\to\AQFT_1^\infty$ is equivalent to
a pseudo-natural transformation $ [\{\ast\}/G]_{\mathrm{pre}}\to\AQFT_1^\infty$
between prestacks, or equivalently a pseudo-natural transformation
\begin{flalign}\label{eqn:equivariantAQFTexplicit}
\tilde{\AAA} \,:\, \Loc_1^\infty\times  [\{\ast\}/G]_{\mathrm{pre}}~\longrightarrow~\astAlg^\infty\quad.
\end{flalign}
We will show in Subsection \ref{subsec:fermion} that the latter 
perspective on $G$-equivariant smooth AQFTs is
not very complicated to describe in concrete examples.


\section{\label{sec:CCR-CAR}Smooth canonical quantization}
The construction of free field theories
in ordinary AQFT crucially relies on
the existence of canonical (anti-)commutation relation 
quantization functors, see e.g.\ \cite{BGP,BG}.
The goal of this section is to show that 
these quantization functors admit a smooth refinement,
which will allow us to construct
both Bosonic and Fermionic examples of
smooth $1$-dimensional AQFTs in Section \ref{sec:examples}.

\subsection{\label{subsec:CCRnew}Canonical commutation relations}
Ordinary canonical commutation relation (CCR) quantization 
is described by a functor $\CCR : \Pois \to\astAlg$ from the 
category of Poisson vector spaces to the category of associative 
and unital $\ast$-algebras. Recall that an object in the category $\Pois$ 
is a tuple $(W,\tau)$, where $W\in\Vec_\bbR$ is a real vector space and $\tau : W\otimes_\bbR W\to \bbR$
is an antisymmetric morphism in $\Vec_\bbR$, and that a morphism $\psi : (W,\tau)\to(W^\prime,\tau^\prime)$ 
in $\Pois$ is a $\Vec_\bbR$-morphism $\psi : W\to W^\prime$ satisfying 
$\tau^\prime\, \circ (\psi\otimes_\bbR \psi) =\tau$. (The
objects in $\Pois$ are interpreted physically as
vector spaces of linear observables, endowed with a Poisson structure.)
The CCR functor assigns to a Poisson vector space $(W,\tau)\in \Pois$
the associative and unital $\ast$-algebra
\begin{flalign}\label{eqn:CCRsimple}
\CCR\big(W,\tau\big)\,:=\, \bigoplus_{n\geq 0} \big(W\otimes_\bbR \bbC\big)^{\otimes_\bbC n}\,\Big/ \,\mathcal{I}_{(W,\tau)}^{\mathrm{CCR}}\,\in\astAlg\quad,
\end{flalign}
where $\mathcal{I}_{(W,\tau)}^{\mathrm{CCR}}$ 
is the $2$-sided $\ast$-ideal generated by the canonical commutation relations
$w\otimes w^\prime - w^\prime \otimes w = \ii\,\tau(w,w^\prime)$, for all $w,w^\prime\in W$,
where $\ii\in\bbC$ denotes the imaginary unit.
The $\ast$-involution on $\CCR(W,\tau)$ is specified by $w^\ast = w$, for all $w\in W$.
Let us reformulate \eqref{eqn:CCRsimple} in a slightly more abstract language. 
For this it is useful to observe that the construction of CCR algebras
\eqref{eqn:CCRsimple} consists of three steps: 
\begin{enumerate}
\item Complexify the real vector space $W\in\Vec_\bbR$ to
the complex vector space $W\otimes_\bbR\bbC\in\Vec_\bbC$, which may be endowed with a
$\ast$-involution $\id\otimes_\bbR \ast : W\otimes_\bbR\bbC \to \overline{W\otimes_\bbR\bbC} = W\otimes_\bbR\overline{\bbC}$
determined by complex conjugation on $\bbC$. Hence, $(W\otimes_\bbR\bbC,\id\otimes_\bbR \ast)\in\astObj(\Vec_\bbC)$
defines a $\ast$-object in the involutive symmetric monoidal category of complex vector spaces.

\item Take the free order-reversing $\ast$-monoid
of $(W\otimes_\bbR\bbC,\id \otimes_\bbR\ast)\in\astObj(\Vec_\bbC)$, which defines
the associative and unital $\ast$-algebra 
$\bigoplus_{n\geq 0} \big(W\otimes_\bbR \bbC\big)^{\otimes_\bbC n}\in \astAlg$.

\item Implement the canonical commutation relations associated with the Poisson structure
$\tau$ by a coequalizer in the category $\astAlg$.
\end{enumerate}

Before we can generalize this construction to the context of stacks, we
have to find a smooth refinement of the category $\Pois$.
As explained in Subsection \ref{subsec:Algstack}, we consider the stack 
$\Sh_{C^\infty_\bbR}$ of sheaves of $C^\infty_\bbR$-modules as a smooth refinement
of the category $\Vec_\bbR$, hence a smooth refinement of the category $\Pois$
should be built from this stack. Concretely, we define the (pre)stack
$\Pois^\infty :\Man^\op \to\Cat$ by the following data.
To each manifold $U\in \Man$, it assigns the category
$\Pois^\infty(U)$ whose objects are tuples
$(W,\tau)$ with $W\in \Sh_{C^\infty_\bbR}(U)$ and 
$\tau : W\otimes_{C^\infty_{\bbR,U}} W\to C^\infty_{\bbR,U}$ an antisymmetric
morphism in $\Sh_{C^\infty_\bbR}(U)$, called Poisson structure. The morphisms $\psi : (W,\tau)\to (W^\prime,\tau^\prime)$ in this category
are $\Sh_{C^\infty_\bbR}(U)$-morphisms $\psi : W\to W^\prime$ satisfying
$\tau^\prime \circ (\psi\otimes_{C^\infty_{\bbR,U}} \psi) =\tau$.
To a morphism $h: U\to U^\prime$ in $\Man$, the prestack $\Pois^\infty$ assigns
the functor
\begin{flalign}
h^\ast \,:=\, \Pois^\infty(h)~:~\Pois^\infty(U^\prime)~\longrightarrow~\Pois^\infty(U)
\end{flalign}
that assigns to $(W,\tau)\in \Pois^\infty(U^\prime)$ the object
in $\Pois^\infty(U)$ determined by the object $h^\ast W\in \Sh_{C^\infty_\bbR}(U)$
(see \eqref{eqn:sheafmodstack2}) and the Poisson structure
\begin{flalign}
\xymatrix{
(h^\ast W)\otimes_{C^\infty_{\bbR,U}}  ( h^\ast W) \,\cong\, 
h^\ast\big(W\otimes_{C^\infty_{\bbR,U^\prime}} W\big) \ar[r]^-{h^\ast \tau}~&~
h^\ast C^\infty_{\bbR,U^\prime} \,\cong\,C^\infty_{\bbR,U}
}\quad,
\end{flalign}
where $\cong$ are the coherence isomorphisms in \eqref{eqn:ShSMcoherences}.
\begin{propo}\label{prop:PoVecStack}
The prestack $\Pois^\infty : \Man^\op\to \Cat$ 
defined above is a stack, i.e.\ it satisfies
the descent condition from Definition \ref{def:stack}.
\end{propo}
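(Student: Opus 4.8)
The plan is to mirror the proof of Proposition \ref{propo:Algstack} and reduce descent for $\Pois^\infty$ to descent for the symmetric monoidal stack $\Sh_{C^\infty_\bbR}$. The observation driving this is that the data defining a Poisson object — an object $W$ together with a pairing $\tau$ from $W\otimes W$ into the monoidal unit, subject to antisymmetry, with morphisms required to preserve $\tau$ — depend only on the $\bbR$-linear symmetric monoidal structure of the ambient category: antisymmetry is the condition $\tau\circ\beta_{W,W}=-\tau$ against the braiding $\beta$, and forming $-\tau$ is possible because the hom-sets of $\Sh_{C^\infty_\bbR}(U)$ are $C^\infty_{\bbR,U}$-modules. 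I would therefore package this construction as a $2$-functor $P$ from the $2$-category of $\bbR$-linear symmetric monoidal categories to $\Cat$, sending each such category to its category of antisymmetrically paired objects, in exact analogy with the $2$-functor $\astMon_{\mathrm{rev}}:\mathbf{ISMCat}\to\Cat$ used for $\ast$-algebras. By construction one then has $\Pois^\infty = P\circ\Sh_{C^\infty_\bbR}$, where $\Sh_{C^\infty_\bbR}$ is regarded as a stack valued in $\SMCat$ via the Corollary following Proposition \ref{prop:Shmodulestack}.

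The key step is to identify, for every open cover $\{U_\alpha\subseteq U\}$, the descent category $\Pois^\infty(\{U_\alpha\subseteq U\})$ with $P$ applied to the descent category $\Sh_{C^\infty_\bbR}(\{U_\alpha\subseteq U\})$, equipped with the real analogue of the symmetric monoidal structure \eqref{eqn:descentISM}. Unwinding the definitions, a Poisson object in the descent category is a family $(W_\alpha,\tau_\alpha)\in\Pois^\infty(U_\alpha)$ together with gluing isomorphisms $\varphi_{\alpha\beta}$ that are simultaneously $\Sh_{C^\infty_\bbR}$-cocycles and compatible with the pairings, which is precisely the datum of an object of $P\big(\Sh_{C^\infty_\bbR}(\{U_\alpha\subseteq U\})\big)$; the same matching holds for morphisms $\{\psi_\alpha\}$. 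For this identification the only delicate point is that the braiding on the descent category is the componentwise braiding, so that antisymmetry of a pairing in the descent category is equivalent to antisymmetry of each $\tau_\alpha$ separately; this is exactly what the symmetric monoidal structure \eqref{eqn:descentISM} (together with the coherence isomorphisms \eqref{eqn:ShSMcoherences}) provides.

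With the identification in hand, the conclusion is immediate. Since $\Sh_{C^\infty_\bbR}$ is a stack valued in $\SMCat$, the canonical comparison functor $\Sh_{C^\infty_\bbR}(U)\to\Sh_{C^\infty_\bbR}(\{U_\alpha\subseteq U\})$ of \eqref{eqn:functordescentcategory} is an equivalence of $\bbR$-linear symmetric monoidal categories. Applying the $2$-functor $P$ transports this to an equivalence $\Pois^\infty(U)\to\Pois^\infty(\{U_\alpha\subseteq U\})$, which is exactly the descent condition of Definition \ref{def:stack}.

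I expect the main obstacle to lie in the bookkeeping of the key step, namely verifying that the antisymmetry condition and the pairing-compatibility of the cocycles in the descent category coincide with the defining data of $P$ once all the coherence isomorphisms \eqref{eqn:ShSMcoherences} have been inserted. If the abstract route turns out to be cumbersome, a more pedestrian alternative — paralleling the proof of Proposition \ref{prop:Locstack} — is to verify descent directly: the underlying sheaves $W$ and the pairings $\tau$ (viewed as morphisms) descend because $\Sh_{C^\infty_\bbR}$ is a stack, while the antisymmetry of $\tau$ and the morphism condition $\tau^\prime\circ(\psi\otimes_{C^\infty_{\bbR,U}}\psi)=\tau$ are local conditions on $U$ and are hence inherited from the descent data.
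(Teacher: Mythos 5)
Your proposal is correct, but your primary route is not the one the paper takes for this proposition. The paper's proof is exactly your ``pedestrian alternative'': it argues directly (in the style of Proposition \ref{prop:Locstack}) that descent for objects $(W,\tau)$ reduces to descent for the underlying sheaf $W$ and for the morphism $\tau$, and that descent for morphisms $\psi$ reduces to descent for the underlying $\Sh_{C^\infty_\bbR}(U)$-morphisms together with the observation that the conditions ($\tau^\prime\circ(\psi\otimes\psi)=\tau$, antisymmetry) are inherited from the descent data --- all consequences of $\Sh_{C^\infty_\bbR}$ being a stack. Your main argument instead transplants the strategy the paper reserves for Proposition \ref{propo:Algstack}: package the passage from an $\bbR$-linear symmetric monoidal category to its category of antisymmetrically paired objects as a $2$-functor $P$, identify $\Pois^\infty$ with $P\circ\Sh_{C^\infty_\bbR}$ and the descent category $\Pois^\infty(\{U_\alpha\subseteq U\})$ with $P$ applied to $\Sh_{C^\infty_\bbR}(\{U_\alpha\subseteq U\})$, and transport the symmetric monoidal equivalence \eqref{eqn:functordescentcategory} along $P$. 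This works: the only points needing care are the ones you flag, namely that the pullback functors are additive on hom-sets (so that $-\tau$ and hence antisymmetry are preserved) and that the braiding on the descent category is componentwise. What the abstract route buys is uniformity --- the same template then covers $\astAlg^\infty$, $\IPVec^\infty$ and $\Pois^\infty$ at once --- at the cost of setting up the $2$-functor $P$ and its source $2$-category; the paper's direct verification is shorter precisely because the structure here (a single pairing into the unit, with equational conditions that are local) is simple enough not to warrant the machinery.
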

\begin{proof}
This follows from the fact that $\Sh_{C^\infty_\bbR}$ is a stack, see 
Proposition \ref{prop:Shmodulestack}. Indeed, spelling out descent 
for objects $(W,\tau) \in \Pois^\infty(U)$, one observes that it involves descent for the underlying objects 
$W\in \Sh_{C^\infty_\bbR}(U)$ 
and also for the underlying $\Sh_{C^\infty_\bbR}(U)$-morphisms 
$\tau : W\otimes_{C^\infty_{\bbR,U}}W\to C^\infty_{\bbR,U}$,
which are both simple consequences of descent for the stack $\Sh_{C^\infty_\bbR}$. 
Similarly, descent for  $\Pois^\infty(U)$-morphisms 
$\psi : (W,\tau)\to (W^\prime,\tau^\prime)$ 
involves descent for the underlying 
$\Sh_{C^\infty_\bbR}(U)$-morphisms $\psi: W \to W^\prime$ 
and the verification that $\tau^\prime \circ (\psi\otimes_{C^\infty_{\bbR,U}} \psi) =\tau$ 
coincide as $\Sh_{C^\infty_\bbR}(U)$-morphisms, 
which are again both consequences of descent for the stack 
$\Sh_{C^\infty_\bbR}$ and of the fact that 
the descent data have this property. 
\end{proof}

Adopting an analogous three step construction as in the case of the 
ordinary CCR functor, we shall now define a stack morphism
\begin{flalign}
\CCR\,:\, \Pois^\infty ~\longrightarrow~\astAlg^\infty
\end{flalign}
that provides a smooth refinement of CCR quantization.
By Definition \ref{def:stackmorphism}, this consists of functors
\begin{flalign}\label{eqn:CCRfunctorU}
\CCR_U \,:\, \Pois^\infty(U)~\longrightarrow~\astAlg^\infty(U)\quad,
\end{flalign}
for each manifold $U\in\Man$, together with coherence isomorphisms.
Regarding the first step, we observe that, for each $U\in\Man$, there exists an adjunction
\begin{flalign}\label{eqn:LRadjunction}
\xymatrix{
L_U \,:\, \Sh_{C^\infty_\bbR}(U) ~\ar@<0.5ex>[r] & \ar@<0.5ex>[l]~ \astObj\big(\Sh_{C^\infty_\bbC}(U)\big)\,:\, R_U
}\quad.
\end{flalign}
The left adjoint functor $L_U$ assigns to $W\in \Sh_{C^\infty_\bbR}(U)$ its complexification
$W\otimes_{C^\infty_{\bbR,U}} C^\infty_{\bbC,U} \in  \Sh_{C^\infty_\bbC}(U)$, with the $\ast$-object structure
$\id\otimes_{C^\infty_{\bbR,U}} \ast$ determined by complex conjugation 
$\ast: C^\infty_{\bbC,U}\to \overline{C^\infty_{\bbC,U}}$. 
The right adjoint functor $R_U$ assigns to a $\ast$-object $(V,\ast)$ in $\Sh_{C^\infty_\bbC}(U)$
the sheaf of $\ast$-invariants $R_U(V,\ast) =
\mathrm{ker}\big(V_\bbR \stackrel{\ast-\id}{\longrightarrow} V_\bbR\big)
\in \Sh_{C^\infty_\bbR}(U)$, where
by $V_\bbR\in \Sh_{C^\infty_\bbR}(U)$ we denote the restriction of $V\in \Sh_{C^\infty_\bbC}(U)$ 
to a sheaf of $C^\infty_{\bbR,U}$-modules via the morphism
$C^\infty_{\bbR,U}\to C^\infty_{\bbC,U}$ from real to complex-valued functions.
\sk

Regarding the second step, we observe that, for each $U\in\Man$, there exists an adjunction
\begin{flalign}\label{eqn:FGadjunction}
\xymatrix{
F_U \,:\, \astObj\big(\Sh_{C^\infty_\bbC}(U)\big) ~\ar@<0.5ex>[r] & \ar@<0.5ex>[l]~ \astAlg^\infty(U)\,:\, G_U
}\quad.
\end{flalign}
The right adjoint functor $G_U$ assigns to an associative and unital $\ast$-algebra $(A,\mu,\eta,\ast)$ 
in $\Sh_{C^\infty_\bbC}(U)$ its underlying $\ast$-object $(A,\ast)$, i.e.\ it forgets the multiplication $\mu$
and unit $\eta$. The left adjoint functor $F_U$ is the free order-reversing $\ast$-monoid 
functor. Explicitly, it assigns to a $\ast$-object $(V,\ast)$ the free order-reversing $\ast$-monoid 
$F_U(V,\ast) := \bigoplus_{n\geq 0} V^{\otimes n}$,
where tensor products and coproducts are formed in the symmetric monoidal category 
$\Sh_{C^\infty_\bbC}(U)$. The order-reversing $\ast$-structure of $F_U(V,\ast)$ is
defined by the canonical extension of the $\ast$-structure on the generators $(V,\ast)$.
\sk

With these preparations, we can now define the values of  \eqref{eqn:CCRfunctorU} 
on objects by carrying out the third step.
Explicitly, given any object $(W,\tau)\in \Pois^\infty(U)$, i.e.\ $W\in \Sh_{C^\infty_\bbR}(U)$
and $\tau: W\otimes_{C^\infty_{\bbR,U}}W \to C^\infty_{\bbR,U}$ an antisymmetric morphism
in $\Sh_{C^\infty_\bbR}(U)$, we define
\begin{flalign}\label{eqn:CCRU}
\CCR_U\big(W,\tau\big):= \colim\Big(
\xymatrix{
F_UL_U\big(W\otimes_{C^\infty_{\bbR,U}} W\big) \ar@<0.5ex>[r]^-{r_1}\ar@<-0.5ex>[r]_-{r_2} ~&~F_UL_U\big(W\big)
}\Big)\,\in\,\astAlg^\infty(U)
\end{flalign}
by a coequalizer in $\astAlg^\infty(U)$. The relations $r_1,r_2$ are defined in terms of 
their adjuncts under the adjunctions in \eqref{eqn:LRadjunction} and \eqref{eqn:FGadjunction} by
\begin{subequations}
\begin{flalign}
\xymatrix{
\ar[d]_-{ \text{unit }L\dashv R}W\otimes_{C^\infty_{\bbR,U}} W \ar[rr]^-{\tilde{r}_1} ~&~ ~&~ RGFL(W)\\
RL\big(W\otimes_{C^\infty_{\bbR,U}} W\big)\ar[r]_-{\cong}~&~R\big(L(W)\otimes_{C^\infty_{\bbC,U}}L(W)\big) \ar[r]_-{\text{unit }F\dashv G}~&~
R\big(GFL(W)\otimes_{C^\infty_{\bbC,U}}GFL(W)\big)\ar[u]_-{R(\mu-\mu^\op)}
}
\end{flalign}
and
\begin{flalign}
\xymatrix{
\ar[d]_-{\tau}W\otimes_{C^\infty_{\bbR,U}} W \ar[rr]^-{\tilde{r}_2} ~&~~&~ RGFL(W)\\
C^\infty_{\bbR,U} \ar[r]_-{\text{unit }L\dashv R}~&~RL\big(C^\infty_{\bbR,U}\big) \ar[r]_-{\cong}~&~R\big(C^\infty_{\bbC,U},\ast\big)\ar[u]_-{R(\ii\,\eta)}
}
\end{flalign}
\end{subequations}
where we suppressed for notational convenience the subscripts ${}_U$ on the functors.
Here $\mu^{(\op)}$ denotes the (opposite) multiplication and $\eta$ the unit element in $FL(W)$.
Because the coequalizer in \eqref{eqn:CCRU} is clearly functorial
with respect to morphisms $\psi : (W,\tau)\to (W^\prime,\tau^\prime)$ in $\Pois^\infty(U)$,
we have successfully defined the desired functor in \eqref{eqn:CCRfunctorU}.
\begin{rem}
For $U=\{\ast\}$ a point, \eqref{eqn:CCRU} gives precisely the usual CCR algebra \eqref{eqn:CCRsimple}.
\end{rem}

To complete our construction of the desired stack morphism $\CCR : \Pois^\infty\to \astAlg^\infty$,
it remains to define coherence isomorphisms (see Definition \ref{def:stackmorphism})
\begin{flalign}\label{eqn:CCRcoherence}
\xymatrix@C=4em{
\ar[d]_-{h^\ast} \Pois^\infty(U^\prime) \ar[r]^-{\CCR_{U^\prime}}~&~\astAlg^\infty(U^\prime)\ar[d]^-{h^\ast} \ar@{=>}[dl]_-{\CCR_h~~}\\
\Pois^\infty(U) \ar[r]_-{\CCR_U}~&~\astAlg^\infty(U)
}
\end{flalign}
for all morphisms $h:U\to U^\prime$ in $\Man$. These can be built from the analogous
coherence isomorphisms for the left adjoint functors in \eqref{eqn:LRadjunction} 
and \eqref{eqn:FGadjunction}, i.e.\
\begin{flalign}\label{eqn:LFcoherences}
\xymatrix@C=4em{
\ar[d]_-{h^\ast}\Sh_{C^\infty_\bbR}(U^\prime) \ar[r]^-{L_{U^\prime}}~&~\ar@{=>}[dl]_-{L_h}\ar[d]_-{h^\ast}\astObj\big(\Sh_{C^\infty_{\bbC}}(U^\prime)\big) 
\ar[r]^-{F_{U^\prime}}~&~ \ar@{=>}[dl]_-{F_h}\astAlg^\infty(U^\prime)\ar[d]^-{h^\ast}\\
\Sh_{C^\infty_\bbR}(U) \ar[r]_-{L_{U}}~&~\astObj\big(\Sh_{C^\infty_{\bbC}}(U)\big)\ar[r]_-{F_{U}} ~&~ \astAlg^\infty(U)\\
}
\end{flalign}
Explicitly, for $W\in \Sh_{C^\infty_\bbR}(U^\prime)$, the isomorphism $L_h$ is given by
\begin{subequations}
\begin{flalign}
\nn h^\ast L_{U^{\prime}} (W) &=h^\ast\big(W\otimes_{C^\infty_{\bbR,U^\prime}} C^\infty_{\bbC,U^\prime},\id\otimes \ast\big)\\
\nn &\cong
\Big(h^{-1}(W)\otimes_{h^{-1}(C^\infty_{\bbR,U^\prime})} h^{-1}(C^\infty_{\bbC,U^\prime})\otimes_{h^{-1}(C^\infty_{\bbC,U^\prime})} C^\infty_{\bbC,U},\id\otimes\ast\otimes\ast  \Big)\\
\nn &\cong\Big(h^{-1}(W)\otimes_{h^{-1}(C^\infty_{\bbR,U^\prime})} C^\infty_{\bbC,U},\id\otimes\ast  \Big)\\
&\cong \Big(h^{-1}(W)\otimes_{h^{-1}(C^\infty_{\bbR,U^\prime})} C^\infty_{\bbR,U} \otimes_{ C^\infty_{\bbR,U}} 
C^\infty_{\bbC,U},\id\otimes\id \otimes\ast  \Big)= L_Uh^\ast(W)\quad.
\end{flalign}
For $(V,\ast)\in \astObj\big(\Sh_{C^\infty_{\bbC}}(U^\prime)\big) $, the isomorphism
$F_h$ is given by
\begin{flalign}
\nn h^\ast F_{U^\prime}(V,\ast)& = h^\ast\Big(\bigoplus_{n\geq 0} V^{\otimes_{C^\infty_{\bbC,U^\prime}} n} \Big)
\cong \bigoplus_{n\geq 0} h^\ast\Big(V^{\otimes_{C^\infty_{\bbC,U^\prime}} n}\Big)\\
& \cong \bigoplus_{n\geq 0} \big(h^\ast(V)\big)^{\otimes_{C^\infty_{\bbC,U}} n} = F_Uh^\ast(V,\ast)\quad,
\end{flalign}
\end{subequations}
where in the second step we have used that $h^\ast$ preserves coproducts because it is a left adjoint functor 
and in the third step we have used the coherence isomorphisms of the involutive symmetric monoidal 
stack $\Sh_{C^\infty_\bbC}$ from Corollary \ref{cor:ShstackISMCAT}. Pasting the natural isomorphisms
in \eqref{eqn:LFcoherences} defines a natural isomorphism
$(FL)_h : h^\ast F_{U^\prime}L_{U^\prime} \Rightarrow F_U L_U h^\ast$. For
every object $(W,\tau)\in \Pois^\infty(U^\prime)$, the associated isomorphism
$h^\ast F_{U^\prime} L_{U^\prime}(W)\cong F_U L_U h^\ast(W)$ 
descends to the CCR algebras \eqref{eqn:CCRU} and thereby 
defines  the natural isomorphism $\CCR_h$ in \eqref{eqn:CCRcoherence}.
\begin{propo}\label{prop:CCR}
The construction above defines a stack morphism
$\CCR : \Pois^\infty\to\astAlg^\infty$.
\end{propo}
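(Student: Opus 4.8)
The functors $\CCR_U$ in \eqref{eqn:CCRfunctorU} and the candidate coherence isomorphisms $\CCR_h$ in \eqref{eqn:CCRcoherence} have already been constructed above, so the actual content of the proposition is the verification that these data assemble into a pseudo-natural transformation between the pseudo-functors $\Pois^\infty$ and $\astAlg^\infty$, i.e.\ that they satisfy the two coherence axioms (i) and (ii) of Definition \ref{def:stackmorphism}. (Recall that a stack morphism is nothing more than such a pseudo-natural transformation, the descent property being a condition on the source and target, guaranteed here by Propositions \ref{prop:PoVecStack} and \ref{propo:Algstack}.) The plan is to reduce this verification to the analogous, and far more transparent, statements for the two building blocks $L$ and $F$, and then to transport the result across the coequalizer \eqref{eqn:CCRU}.

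First I would record that the assignments
\[
L \,:\, \Sh_{C^\infty_\bbR}\longrightarrow \astObj\big(\Sh_{C^\infty_\bbC}\big)
\quad\text{and}\quad
F\,:\, \astObj\big(\Sh_{C^\infty_\bbC}\big)\longrightarrow \astAlg^\infty
\]
are themselves pseudo-natural transformations, with coherence isomorphisms $L_h$ and $F_h$ as exhibited in \eqref{eqn:LFcoherences}. The key point is that $L_h$ and $F_h$ are built \emph{solely} out of the coherence isomorphisms of the stacks $\Sh_{C^\infty_\bbR}$ and $\Sh_{C^\infty_\bbC}$ (Proposition \ref{prop:Shmodulestack} and Corollary \ref{cor:ShstackISMCAT}), together with the adjunction data \eqref{eqn:LRadjunction} and \eqref{eqn:FGadjunction}. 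Consequently axioms (i) and (ii) for $L_h$ and $F_h$ follow directly from the pentagon- and triangle-type coherences already internal to those symmetric monoidal stacks. Since pseudo-natural transformations compose --- the coherence datum of a composite being obtained by pasting --- the composite $FL$ is a pseudo-natural transformation whose coherence isomorphism is exactly the pasting $(FL)_h$ used above, and it satisfies (i)--(ii) automatically.

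For the final step I would transport $(FL)_h$ through the coequalizer \eqref{eqn:CCRU}. Since $h^\ast$ is an inverse-image functor followed by an extension of scalars, it is a left adjoint (on sheaves, and hence also on the associated $\ast$-algebras, the adjunction being monoidal); therefore it preserves all colimits entering the construction, in particular the coproducts defining $F_U$ and the coequalizer in \eqref{eqn:CCRU}. Thus $h^\ast\CCR_{U^\prime}(W,\tau)$ is again presented as the coequalizer of $h^\ast r_1,h^\ast r_2$. One then checks that, under the isomorphism $(FL)_h$ together with the monoidal coherences of $\Sh_{C^\infty_\bbC}$ identifying $h^\ast\big(W\otimes_{C^\infty_{\bbR,U^\prime}}W\big)$ with $(h^\ast W)\otimes_{C^\infty_{\bbR,U}}(h^\ast W)$, the pulled-back relations $h^\ast r_1,h^\ast r_2$ correspond to the relations $r_1^\prime,r_2^\prime$ defining $\CCR_U\big(h^\ast(W,\tau)\big)$. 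By the universal property of coequalizers this produces the isomorphism $\CCR_h$ and, simultaneously, the two coherence axioms of Definition \ref{def:stackmorphism} for $\CCR_h$, which descend from the corresponding axioms for $(FL)_h$ established in the previous paragraph.

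The main obstacle is precisely this last compatibility check. The relations $r_1$ and $r_2$ are defined through the units of the adjunctions $L\dashv R$ and $F\dashv G$ and through the structure morphisms $\mu$, $\mu^\op$ and $\eta$ of the free $\ast$-algebra, so one must confirm that $h^\ast$ commutes, coherently with $(FL)_h$, with all of these. Concretely this is a finite but somewhat lengthy diagram chase establishing that each of the two zig-zags defining $\tilde r_1$ and $\tilde r_2$ is natural with respect to pullback. Conceptually it is the assertion that the entire three-step construction of the CCR algebra is functorial in the base manifold $U$ compatibly with every coherence datum, and it is the only place in the argument where genuine verification, rather than formal manipulation of composites of pseudo-natural transformations, is needed.
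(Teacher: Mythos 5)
Your proposal is correct and follows essentially the same route as the paper, which gives no separate proof of this proposition: the construction preceding it (the pasting $(FL)_h$ of the coherence isomorphisms $L_h$ and $F_h$, followed by descent to the coequalizer \eqref{eqn:CCRU} using that $h^\ast$ is a left adjoint and hence preserves the colimits involved) is itself the intended argument. Your additional remarks on verifying that the relations $r_1,r_2$ are compatible with pullback merely make explicit the diagram chase the paper leaves implicit.
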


\subsection{\label{subsec:CARnew}Canonical anti-commutation relations}
A smooth refinement of the canonical anti-commutation relation 
(CAR) quantization functor for Fermionic theories
can be developed along the same lines as in Subsection \ref{subsec:CCRnew}.
Before we spell out some of the details, let
us briefly recall the ordinary CAR functor
$\CAR : \IPVec \to\astAlg$ following the 
presentation in \cite{Dappiaggi}.
The category $\IPVec$ has objects $(V,\ast,\langle\cdot,\cdot\rangle)$
consisting of a $\ast$-object $(V,\ast)\in \astObj(\Vec_\bbC)$
in the involutive symmetric monoidal category $\Vec_\bbC$
and a symmetric $\ast$-morphism $\langle\cdot, \cdot\rangle : (V,\ast)\otimes (V,\ast)\to (\bbC,\ast)$.
More explicitly, the latter is a symmetric $\bbC$-linear map $\langle \cdot,\cdot\rangle : V\otimes V\to \bbC$
satisfying
\begin{flalign}\label{eqn:astFermionCompatible}
\xymatrix@C=4em{
\ar[d]_-{\ast\otimes\ast}V\otimes V \ar[r]^-{\langle\cdot,\cdot\rangle}~&~\bbC\ar[d]^-{\ast}\\
\overline{V}\otimes\overline{V}\cong \overline{V\otimes V} \ar[r]_-{\overline{\langle\cdot,\cdot\rangle}}~&~\overline{\bbC}
}
\end{flalign}
or at the level of elements 
$\langle v,v^\prime\rangle^\ast = \langle v^\ast,v^{\prime\ast}\rangle$, for all $v,v^\prime\in V$.
Morphisms $\psi : (V,\ast,\langle\cdot,\cdot\rangle) \to (V^\prime,\ast^\prime,\langle\cdot,\cdot\rangle^\prime)$
in  $\IPVec$ are $\ast$-morphisms $\psi : (V,\ast)\to (V^\prime,\ast^\prime)$ satisfying
$\langle\cdot,\cdot\rangle^\prime\circ  (\psi\otimes \psi) = \langle\cdot,\cdot\rangle$.
The CAR functor assigns to $(V,\ast,\langle\cdot,\cdot\rangle)\in\IPVec$
the associative and unital $\ast$-algebra
\begin{flalign}
\CAR\big(V,\ast,\langle\cdot,\cdot\rangle\big)\,:=\, \bigoplus_{n\geq 0} V^{\otimes n}\,\Big/ \,\mathcal{I}_{(V,\ast ,\langle\cdot,\cdot\rangle)}^{\mathrm{CAR}}\,\in\astAlg\quad,
\end{flalign}
where $\mathcal{I}_{(V,\ast,\langle\cdot,\cdot\rangle)}^{\mathrm{CAR}}$ is
the $2$-sided $\ast$-ideal generated by the canonical anti-commutation relations
$v\otimes v^\prime + v^\prime\otimes v = \langle v,v^\prime\rangle$, for all $v,v^\prime\in V$.
Observe that this construction consists of two steps:
\begin{enumerate}
\item Take the free order-reversing $\ast$-monoid
of $(V,\ast)\in\astObj(\Vec_\bbC)$, which defines
the associative and unital $\ast$-algebra 
$\bigoplus_{n\geq 0} V^{\otimes n}\in \astAlg$.

\item Implement the canonical anti-commutation relations associated with 
$\langle\cdot,\cdot\rangle$ by a coequalizer in the category $\astAlg$.
\end{enumerate}

To obtain a smooth refinement of the category $\IPVec$, 
we follow the same strategy as in Subsection \ref{subsec:CCRnew}.
We define a (pre)stack $\IPVec^\infty : \Man^\op\to \Cat$ by the following data.
To each manifold $U\in\Man$, it assigns the category $\IPVec^\infty(U)$
whose objects $(V,\ast,\langle\cdot,\cdot\rangle)$ consist
of a $\ast$-object $(V,\ast)\in \astObj(\Sh_{C^\infty_\bbC}(U))$
and a symmetric $\ast$-morphism $\langle\cdot, \cdot\rangle : (V,\ast)\otimes_{C^\infty_{\bbC,U}} 
(V,\ast)\to (C^\infty_{\bbC,U},\ast)$.
The morphisms $\psi : (V,\ast,\langle\cdot,\cdot\rangle)\to 
(V^\prime,\ast^\prime,\langle\cdot,\cdot\rangle^\prime) $ in this category
are $\astObj(\Sh_{C^\infty_\bbC}(U))$-morphisms $\psi : (V,\ast)\to 
(V^\prime,\ast^\prime)$ satisfying $\langle\cdot,\cdot\rangle^\prime\circ (\psi\otimes_{C^\infty_{\bbC,U}} \psi)
=\langle\cdot,\cdot\rangle$. 
To a morphism $h: U\to U^\prime$ in $\Man$, the prestack $\IPVec^\infty$ assigns
the functor
\begin{flalign}
h^\ast \,:=\, \IPVec^\infty(h)~:~\IPVec^\infty(U^\prime)~\longrightarrow~\IPVec^\infty(U)
\end{flalign}
that assigns to $(V,\ast,\langle\cdot,\cdot\rangle)\in \IPVec^\infty(U^\prime)$ the object
in $\IPVec^\infty(U)$ determined by the object $h^\ast (V,\ast)\in \astObj(\Sh_{C^\infty_\bbC}(U))$
and the morphism
\begin{flalign}
\xymatrix{
(h^\ast (V,\ast))\otimes_{C^\infty_{\bbC,U}}  (h^\ast (V,\ast) )\,\cong\, 
h^\ast\big((V,\ast)\otimes_{C^\infty_{\bbC,U^\prime}} (V,\ast)\big) \ar[r]^-{h^\ast \langle\cdot,\cdot\rangle}~&~
h^\ast (C^\infty_{\bbC,U^\prime},\ast) \,\cong\,(C^\infty_{\bbC,U},\ast)
}\quad,
\end{flalign}
where we have used the coherence isomorphisms of the involutive symmetric monoidal 
stack $\Sh_{C^\infty_\bbC}$ from Corollary \ref{cor:ShstackISMCAT}.
The proof of the following statement is completely analogous to the one of 
Proposition \ref{prop:PoVecStack}.
\begin{propo}\label{prop:IPVecStack}
The prestack $\IPVec^\infty : \Man^\op\to \Cat$ 
defined above is a stack, i.e.\ it satisfies
the descent condition from Definition \ref{def:stack}.
\end{propo}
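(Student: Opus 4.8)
The plan is to reduce the descent condition for $\IPVec^\infty$ to the fact, established in Proposition \ref{prop:Shmodulestack} and Corollary \ref{cor:ShstackISMCAT}, that $\Sh_{C^\infty_\bbC}$ is a stack valued in involutive symmetric monoidal categories, proceeding in close parallel with the proof of Proposition \ref{prop:PoVecStack}. First I would fix an arbitrary manifold $U\in\Man$ together with an open cover $\{U_\alpha\subseteq U\}$ and unwind the descent category $\IPVec^\infty(\{U_\alpha\subseteq U\})$ as in \eqref{eqn:descentobject}. An object is a family of $\ast$-objects $(V_\alpha,\ast_\alpha)\in\astObj(\Sh_{C^\infty_\bbC}(U_\alpha))$, each equipped with a symmetric $\ast$-morphism $\langle\cdot,\cdot\rangle_\alpha$, together with gluing isomorphisms $\varphi_{\alpha\beta}$ satisfying the cocycle condition \eqref{eqn:descentobject2}. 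The point is that such a datum decomposes into a descent datum for the underlying sheaves of $C^\infty_{\bbC,U}$-modules, a compatible descent datum for the involutions $\ast_\alpha$, and a compatible descent datum for the pairings $\langle\cdot,\cdot\rangle_\alpha$.

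Next I would invoke descent for $\Sh_{C^\infty_\bbC}$ to glue the underlying objects $V_\alpha$ into a global $V\in\Sh_{C^\infty_\bbC}(U)$, and use that the descent functor \eqref{eqn:functordescentcategory} carries a canonical involutive symmetric monoidal structure, exactly as already exploited in the proof of Proposition \ref{propo:Algstack}. This upgrades the gluing to the level of $\astObj(\Sh_{C^\infty_\bbC})$, so that the $\ast_\alpha$ assemble into a global involution $\ast$ on $V$, while the pairings $\langle\cdot,\cdot\rangle_\alpha$ glue, by descent for morphisms in $\Sh_{C^\infty_\bbC}$, into a single morphism $\langle\cdot,\cdot\rangle : V\otimes_{C^\infty_{\bbC,U}}V\to C^\infty_{\bbC,U}$. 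The two defining properties of such a pairing, namely its symmetry and the $\ast$-compatibility \eqref{eqn:astFermionCompatible}, are local conditions on $U$ and therefore hold for the glued morphism because they hold for each $\langle\cdot,\cdot\rangle_\alpha$ and are preserved by restriction. Running the identical argument for a morphism $\{\psi_\alpha\}$ in the descent category shows that a compatible family of $\ast$-morphisms glues to a global $\ast$-morphism $\psi$ still satisfying $\langle\cdot,\cdot\rangle^\prime\circ(\psi\otimes_{C^\infty_{\bbC,U}}\psi)=\langle\cdot,\cdot\rangle$, so the functor \eqref{eqn:functordescentcategory} is essentially surjective, full and faithful, i.e.\ an equivalence.

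A more conceptual route, which I would mention as the cleanest packaging, mirrors the proof of Proposition \ref{propo:Algstack}: the assignment sending an involutive symmetric monoidal category $\mathcal{C}$ to the category of $\ast$-objects in $\mathcal{C}$ equipped with a symmetric $\ast$-pairing valued in the monoidal unit is a $2$-functor $\mathbf{ISMCat}\to\Cat$, and $\IPVec^\infty(U)$ is its value on $\Sh_{C^\infty_\bbC}(U)$. Since \eqref{eqn:functordescentcategory} is an equivalence in $\mathbf{ISMCat}$ by Corollary \ref{cor:ShstackISMCAT}, applying this $2$-functor yields the sought equivalence of descent categories. The only step demanding genuine care---and the natural candidate for the main obstacle---is checking that the $\ast$-compatibility condition \eqref{eqn:astFermionCompatible} is stable under gluing; this is exactly where one needs that the descent functor respects the involutive structure, and is therefore the reason the involutive refinement in Corollary \ref{cor:ShstackISMCAT}, rather than mere symmetric monoidal descent, is required. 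Everything else is a routine transcription of the Poisson case treated in Proposition \ref{prop:PoVecStack}.
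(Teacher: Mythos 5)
Your proposal is correct and matches the paper's argument: the paper simply states that the proof is completely analogous to that of Proposition \ref{prop:PoVecStack}, namely one reduces descent for objects and morphisms of $\IPVec^\infty(U)$ to descent for the underlying objects and morphisms in the stack $\Sh_{C^\infty_\bbC}$, together with the observation that the defining conditions (symmetry, the $\ast$-compatibility \eqref{eqn:astFermionCompatible}, and preservation of pairings) are inherited from the descent data. Your additional $2$-functorial packaging in the spirit of Proposition \ref{propo:Algstack} is a valid alternative phrasing of the same reduction.
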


We shall now define a stack morphism
\begin{flalign}
\CAR\,:\, \IPVec^\infty ~\longrightarrow~\astAlg^\infty
\end{flalign}
that provides a smooth refinement of CAR quantization.
In analogy to \eqref{eqn:CCRU}, we define the component functors
$\CAR_U : \IPVec^\infty(U)\to \astAlg^\infty(U)$, for all $U\in \Man$, by the coequalizer
\begin{flalign}
\CAR_U\big(V,\ast,\langle\cdot,\cdot\rangle\big):= \colim\Big(
\xymatrix{
F_U\big((V,\ast)\otimes_{C^\infty_{\bbC,U}} (V,\ast)\big) \ar@<0.5ex>[r]^-{s_1}\ar@<-0.5ex>[r]_-{s_2} ~&~F_U\big(V,\ast\big)
}\Big)\,\in\,\astAlg^\infty(U)\quad,
\end{flalign}
where the relations $s_1,s_2$ are defined in terms of their adjuncts under 
\eqref{eqn:FGadjunction} by
\begin{subequations}
\begin{flalign}
\xymatrix{
\ar[d]_-{\text{unit }F\dashv G}(V,\ast)\otimes_{C^\infty_{\bbC,U}} (V,\ast)\ar[r]^-{\tilde{s}_1} ~&~ GF(V,\ast)\\
GF(V,\ast)\otimes_{C^\infty_{\bbC,U}} GF(V,\ast) \ar[ru]_-{\mu + \mu^\op} ~&~
}
\end{flalign}
and
\begin{flalign}
\xymatrix{
\ar[d]_-{\langle\cdot,\cdot\rangle}(V,\ast)\otimes_{C^\infty_{\bbC,U}} (V,\ast) \ar[r]^-{\tilde{s}_2}~&~ GF(V,\ast)\\
(C^\infty_{\bbC,U},\ast) \ar[ru]_-{\eta}~&~
}
\end{flalign}
\end{subequations}
where we suppressed for notational convenience the subscripts ${}_U$ on the functors.
The coherence isomorphisms
\begin{flalign}
\xymatrix@C=4em{
\ar[d]_-{h^\ast} \IPVec^\infty(U^\prime) \ar[r]^-{\CAR_{U^\prime}}~&~\astAlg^\infty(U^\prime)\ar[d]^-{h^\ast} \ar@{=>}[dl]_-{\CAR_h~~}\\
\IPVec^\infty(U) \ar[r]_-{\CAR_U}~&~\astAlg^\infty(U)
}
\end{flalign}
associated with $\Man$-morphisms $h: U\to U^\prime$ are built similarly 
to those in Subsection \ref{subsec:CCRnew}. Summing up, we have
\begin{propo}\label{prop:CAR}
The construction above defines a stack morphism
$\CAR : \IPVec^\infty\to\astAlg^\infty$.
\end{propo}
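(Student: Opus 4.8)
The plan is to check that the data already specified—the component functors $\CAR_U$ together with the coherence isomorphisms $\CAR_h$—assemble into a stack morphism in the sense of Definition \ref{def:stackmorphism}, following the same pattern as the proof of Proposition \ref{prop:CCR}. First I would confirm that each $\CAR_U$ is a well-defined functor. The defining coequalizer exists in $\astAlg^\infty(U)=\astMon_{\mathrm{rev}}(\Sh_{C^\infty_\bbC}(U))$ because this is the category of order-reversing $\ast$-monoids in the cocomplete involutive symmetric monoidal category $\Sh_{C^\infty_\bbC}(U)$, whose tensor product preserves colimits in each variable; such monoid categories are cocomplete, so the coequalizer is available. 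Functoriality of $\CAR_U$ in a morphism $\psi:(V,\ast,\langle\cdot,\cdot\rangle)\to(V^\prime,\ast^\prime,\langle\cdot,\cdot\rangle^\prime)$ is then immediate from the universal property, since $\psi$ induces compatible maps of the two parallel pairs $s_1,s_2$.

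The crux is the construction of the natural isomorphism $\CAR_h$ for each $h:U\to U^\prime$ in $\Man$. The key observation is that the pullback functor $h^\ast:\astAlg^\infty(U^\prime)\to\astAlg^\infty(U)$ preserves the defining coequalizer. Indeed, $h^\ast$ is strong symmetric monoidal (Corollary \ref{cor:ShstackISMCAT}) and its underlying sheaf functor, built from the inverse-image functor and a relative tensor product as in \eqref{eqn:sheafmodstack2}, is a left adjoint and hence colimit-preserving; on monoids $h^\ast$ is itself a left adjoint, the right adjoint being induced by the lax monoidal pushforward, so it preserves the coequalizer. Moreover the free $\ast$-monoid functor commutes with $h^\ast$ up to the coherence isomorphism $F_h$ from \eqref{eqn:LFcoherences}. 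Transporting the coequalizer diagram that presents $\CAR_{U^\prime}(V,\ast,\langle\cdot,\cdot\rangle)$ along $h^\ast$ and $F_h$ therefore exhibits $h^\ast\CAR_{U^\prime}$ as a coequalizer presenting $\CAR_U\,h^\ast$, which produces the comparison isomorphism $\CAR_h$; its naturality in $(V,\ast,\langle\cdot,\cdot\rangle)$ follows from naturality of $F_h$ together with the uniqueness of the induced morphism on coequalizers.

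Finally I would verify the two coherence axioms (i) and (ii) of Definition \ref{def:stackmorphism}. Because $\CAR_h$ is assembled entirely from the isomorphisms $F_h$—which themselves satisfy the associativity and unitality constraints inherited from the stack structure of $\Sh_{C^\infty_\bbC}$ and from the adjunction $F_U\dashv G_U$ of \eqref{eqn:FGadjunction}—the pasting identities required for $\CAR$ reduce to those already known for $F$. The uniqueness clause in the universal property of the coequalizer then guarantees that the two composites in each axiom agree as morphisms in $\astAlg^\infty$, exactly as in the $\CCR$ case. The main obstacle is precisely the middle step: one must ensure that the coequalizer is computed in the monoid category $\astAlg^\infty(U)$ rather than merely in the underlying sheaf category, and that $h^\ast$ preserves it at the algebra level, so that $\CAR_h$ is genuinely invertible. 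Granting the left-adjoint description of $h^\ast$ on monoids, colimit-preservation propagates to the algebra level and the remaining coherence verifications are essentially formal.
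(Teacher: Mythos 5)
Your proposal is correct and follows essentially the same route as the paper, which treats the construction itself (component functors via the coequalizer, coherence isomorphisms built from $F_h$ as in the $\CCR$ case) as the proof of the proposition. The only difference is that you make explicit the justification that the paper leaves implicit—namely that $h^\ast$ preserves the defining coequalizer at the level of $\ast$-monoids because it is a left adjoint there—which is a welcome clarification rather than a deviation.
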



\section{\label{sec:examples}Illustration through free theories}
We shall illustrate our formalism by constructing 
concrete examples of smooth $1$-dimensional AQFTs.
The models we study are smooth refinements
of the Bosonic and Fermionic free field theories discussed 
in e.g.\ \cite{BG,BGP}. Similarly to the ordinary case,
our Bosonic models will be described
by stack morphisms 
\begin{flalign}\label{eqn:AAAviaCCR}
\xymatrix{
\ar[dr]_-{\LLL^{\mathsf{b}}}\Loc_1^\infty \ar[rr]^-{\AAA^{\mathsf{b}}}~&~~&~\astAlg^\infty\\
~&~ \Pois^\infty \ar[ur]_-{\CCR}~&~
}
\end{flalign}
obtained as the composition of a stack morphism $\LLL^\mathsf{b}$ 
assigning the linear observables with their Poisson structure
and the $\CCR$-quantization stack morphism developed in Subsection \ref{subsec:CCRnew}.
The Fermionic models will be described similarly by stack morphisms
\begin{flalign}\label{eqn:AAAviaCAR}
\xymatrix{
\ar[dr]_-{\LLL^{\mathsf{f}}}\Loc_1^\infty \ar[rr]^-{\AAA^{\mathsf{f}}}~&~~&~\astAlg^\infty\\
~&~ \IPVec^\infty \ar[ur]_-{\CAR}~&~
}
\end{flalign}
factorizing through the $\CAR$-quantization stack morphism developed 
in Subsection \ref{subsec:CARnew}.
\sk

Inspired by the standard constructions in ordinary AQFT \cite{BG,BGP},
we shall obtain examples of the stack morphisms $\LLL^{\mathsf{b}/\mathsf{f}}$ 
assigning linear observables by using a suitable smooth refinement
of the concept of retarded/advanced Green operators $G^\pm$ to be
developed in Subsection \ref{subsec:Green} below.
Recall that the role of such Green operators is to determine
the Poisson structure $\tau$ of a Bosonic theory 
and the bilinear map $\langle\cdot,\cdot\rangle$ 
of a Fermionic theory. In Subsection \ref{subsec:example}
we will spell out this construction for the simplest case
of a $1$-dimensional massive scalar field, which is
equivalent to the harmonic oscillator\footnote{In the context of quantum 
mechanics, the harmonic oscillator is usually described via the equal time 
canonical commutation relations $[\hat{x},\hat{p}]=\ii \,\hat{\oone}$ and the Hamiltonian
$\hat{H} = \frac{1}{2}\hat{p}^2 + \frac{m^2}{2}\hat{x}^2$ with frequency/mass parameter $m>0$.
In the context of AQFT, one works instead with the covariant commutation relations 
$[\hat{\Phi}(\varphi),\hat{\Phi}(\varphi^\prime)] = \ii\,\int_\bbR \varphi \,G(\varphi^\prime)\,
\dd t\,\hat{\oone}$, where $\hat{\Phi}(\varphi^{(\prime)})$ are the field operators smeared by
compactly supported functions $\varphi,\varphi^\prime\in C^\infty_\cc(\bbR)$ on the time line $\bbR$
and $G = G^+ - G^-$ is the retarded-minus-advanced Green operator for the equation of motion operator
$P = \partial_t^2 + m^2$. Due to the well-posed initial value problem, one can show that both 
approaches are equivalent, see e.g.\ \cite[Remark 3.3.4]{BD}.
}. We shall even construct a smooth $\tilde{U}$-family of smooth AQFTs 
(in the sense of \eqref{eqn:tildeUfamilyAQFT}) that describes a family
of $1$-dimensional massive scalar fields with a smoothly varying
mass parameter $m\in C^\infty(\tilde{U},\bbR^{>0})$.
In Subsection \ref{subsec:fermion} we construct the $1$-dimensional massless Dirac
field as a smooth AQFT and show that its global $U(1)$-symmetry
is realized in terms of smooth automorphisms in the sense of \eqref{eqn:Aut}.

\subsection{\label{subsec:Green}Green operators, solutions and initial data}
Let us consider a manifold $U \in \Man$ and a smooth $U$-family of 
$1$-dimensional spacetimes $(\pi:M\to U,E) \in \Loc_1^\infty(U)$. 
We introduce the functor 
\begin{flalign}\label{eq:Cinfti}
C^\infty_{\pi}\,:\, \Open(U)^\op~\longrightarrow~\Set
\end{flalign}
that assigns to each open subset $U^\prime \subseteq U$ 
the set $C^\infty_{\pi}(U^\prime) := C^\infty_\bbR(M\vert_{U^\prime})$ of real valued smooth functions 
on the restricted total space $M\vert_{U^\prime} = \pi^{-1}(U^\prime)\subseteq M$
and to each open subset inclusion $U^\prime\subseteq U^{\prime\prime}\subseteq U$
the restriction map $C^\infty_\bbR(M\vert_{U^{\prime\prime}})\to C^\infty_\bbR(M\vert_{U^\prime})$.
Together with the $C^\infty_{\bbR,U}$-module structure
induced by pullback of functions along the projection map $\pi:M\to U$,
this defines an object $C^\infty_\pi \in \Sh_{C^\infty_\bbR}(U)$
that we shall interpret as the field configuration space of a real scalar field
on the smooth family of spacetimes $(\pi:M\to U,E) \in \Loc_1^\infty(U)$.
More generally, the configuration space of vector-valued fields
is given by
\begin{flalign}
C^\infty_{\pi}\otimes\bbK^n\,\in \,  \Sh_{C^\infty_\bbK}(U)\quad,
\end{flalign}
where $n\in\bbZ_{\geq 1}$ is the number of field components,
and we take $\bbK=\bbR$ for real fields and $\bbK=\bbC$ 
for complex fields. As equation of motion we will consider a 
$\Sh_{C^\infty_\bbK}(U)$-morphism 
$P: C^\infty_{\pi}\otimes\bbK^n \to C^\infty_{\pi}\otimes\bbK^n$ 
given by a vertical differential operator on $\pi : M\to U$,
i.e.\ a differential operator on $M$ that differentiates only along the fibers
of $\pi : M\to U$. See our Examples \ref{ex:oscillator} and \ref{ex:Dirac} below.
\sk

In order to define a concept of Green operators for such $P$, 
we introduce certain subsheaves of the sheaf of functions $C^\infty_\pi$ on $\pi:M\to U$
that describe functions with restrictions on their vertical support.
In the following definition, we shall use that the fiber bundle $\pi : M\to U$
underlying any object $(\pi:M\to U,E) \in \Loc_1^\infty(U)$
admits sections because the fibers are open intervals,
see e.g.\ \cite[Sections 12.2 and 6.7]{Steenrod}.
Furthermore, given any subset $S\subseteq M$ 
of the total space, we denote by $J^{\pm}_\ver(S)\subseteq M$ 
the {\em vertical future/past} of $S$, i.e.\ the subset of all points that can 
be reached from $S$ by future/past directed vertical curves
with respect to the orientation induced by $E\in\Omega^1_\ver(M)$.
\begin{defi}\label{def:supportrestrictions}
Let $(\pi:M\to U,E) \in \Loc_1^\infty(U)$ and $U^\prime\subseteq U$ an open subset.
We say that a function $\varphi\in C^\infty_{\pi}(U^\prime) = C^\infty_\bbR(M\vert_{U^\prime})$ 
is {\em vertically past/future compactly supported} if there exists a section
$\sigma : U^\prime \to M\vert_{U^\prime}$ such that 
$\supp(\varphi) \subseteq J^\pm_\ver(\sigma(U^\prime))$.
We say that $\varphi\in C^\infty_{\pi}(U^\prime) $ is {\em vertically compactly supported} 
if it is both vertically past and future compactly supported, i.e.\ there exist
two sections $\sigma_1,\sigma_2 :  U^\prime \to M\vert_{U^\prime}$
such that $\supp (\varphi) \subseteq J_\ver^+(\sigma_1(U^\prime)) \cap J_\ver^-(\sigma_2(U^\prime))  $.
\end{defi}
\begin{rem}
Note that our definition of vertically compactly supported functions
uses manifestly the fact that we consider smooth families of 
{\em $1$-dimensional} spacetimes.
In this $1$-dimensional case, we have bundles $\pi:M\to U$ whose fibers 
are intervals, hence it makes sense to define vertical compactness through 
vertical boundedness from above and below. A dimension-independent definition for
$\varphi\in C^\infty_{\pi}(U^\prime) $ to be vertically compactly supported
is given by the condition that $\supp(\varphi)\cap \pi^{-1}(K)$ is compact, for all
$K\subseteq U^\prime$ compact. Upon sheafification (see Definition \ref{def:supportsheaves} below),
this coincides in the $1$-dimensional case with our more practical Definition \ref{def:supportrestrictions}.
\end{rem}

From this definition we obtain sub-{\em pre}sheaves 
$\widetilde{C}^\infty_{\pi\, \vc}$, $\widetilde{C}^\infty_{\pi\, \vpc}$
and $\widetilde{C}^\infty_{\pi\, \vfc}$ of $C^\infty_\pi$ that assign
vertically compactly supported, vertically past compactly supported and 
vertically future compactly supported functions.
Note that these presheaves are separated, but they do not satisfy the descent condition
for sheaves and hence have to be sheafified.
\begin{defi}\label{def:supportsheaves}
We denote by $C^\infty_{\pi\,\vc}, C^\infty_{\pi\,\vpc}, C^\infty_{\pi\,\vfc}\in 
\Sh_{C^\infty_\bbR}(U)$ the sheafifications
of the presheaves $\widetilde{C}^\infty_{\pi\, \vc}$ of vertically compactly supported functions, 
$\widetilde{C}^\infty_{\pi\, \vpc}$ of vertically past compactly supported functions and
$\widetilde{C}^\infty_{\pi\, \vfc}$ of vertically future compactly supported functions. 
\end{defi}
\begin{rem}
These sheaves admit the following explicit description as subsheaves of $C^\infty_\pi$.
To each open subset $U^\prime\subseteq U$, the sheaf $C^\infty_{\pi\, \mathrm{v(p/f)c}}$
assigns the subset $C^\infty_{\pi\,\mathrm{v(p/f)c}}(U^\prime) \subseteq C^\infty_\pi(U^\prime)$
consisting of all functions $\varphi\in C^\infty_\pi(U^\prime) $ 
that satisfy the following local support condition:
For every point $x\in U^\prime$, there exists an open neighborhood 
$U_x\subseteq U^\prime$ of $x$ such that the restriction $\varphi\vert_{U_x}\in C^\infty_\pi(U_x)$ is 
vertically (past/future) compactly supported in the sense of Definition \ref{def:supportrestrictions}.
\end{rem}

With these preparations we can now introduce a concept of Green operators.
\begin{defi}\label{def:Green}
Let $P: C^\infty_{\pi} \otimes\bbK^n \to C^\infty_{\pi}\otimes\bbK^n$ 
be a $\Sh_{C^\infty_\bbK}(U)$-morphism
that is determined from a vertical differential operator on $\pi:M\to U$, 
i.e.\ a differential operator on $M$ that differentiates only along the fibers
of $\pi : M\to U$.
A {\em retarded/advanced Green operator} for $P$ 
is a $\Sh_{C^\infty_\bbK}(U)$-morphism
$G^\pm :  C^\infty_{\pi\,\vpc/\vfc}\otimes\bbK^n \to C^\infty_{\pi\,\vpc/\vfc}\otimes\bbK^n$ 
that satisfies the following properties:
\begin{enumerate}
\item[(i)] $G^\pm$ is the inverse of the restriction
$P: C^\infty_{\pi\,\vpc/\vfc}\otimes\bbK^n \to C^\infty_{\pi\,\vpc/\vfc}\otimes\bbK^n$ 
of $P$ to the subsheaves of vertically past/future compactly supported functions.

\item[(ii)] For each open subset $U^\prime \subseteq U$
and $\varphi\in C^\infty_{\pi\,\vpc/\vfc}(U^\prime)\otimes\bbK^n$, 
we have $\supp(G^\pm \varphi) \subseteq J^\pm_\ver(\supp(\varphi))$. 
\end{enumerate}
We refer to the $\Sh_{C^\infty_\bbK}(U)$-morphism
$G := G^+ - G^-: C^\infty_{\pi\,\vc}\otimes\bbK^n \to C^\infty_{\pi}\otimes\bbK^n$ 
as the {\em causal propagator}. 
\end{defi}
\begin{rem}
Observe that, as a consequence of item (i), 
retarded and advanced Green operators are unique, provided they exist. 
Their existence is instead a condition on $P$, namely the restrictions 
$P: C^\infty_{\pi\,\vpc/\vfc}\otimes\bbK^n \to C^\infty_{\pi\,\vpc/\vfc}\otimes\bbK^n$ 
must be invertible and their inverses must fulfill also item (ii). 
Examples \ref{ex:oscillator}, \ref{ex:familyoscillator} and \ref{ex:Dirac}
present vertical differential operators that fulfill these conditions. 
\end{rem}

The usual exact sequence for $P$ and $G$, see e.g.\ \cite{BGP},
generalizes to our context.
\begin{propo}\label{prop:exactsequence}
Let $P: C^\infty_{\pi}\otimes\bbK^n \to C^\infty_{\pi}\otimes\bbK^n$ be a 
$\Sh_{C^\infty_\bbK}(U)$-morphism
that is determined from a vertical differential operator on $\pi:M\to U$
and $G^\pm :  C^\infty_{\pi\,\vpc/\vfc}\otimes\bbK^n \to C^\infty_{\pi\,\vpc/\vfc}\otimes\bbK^n$ 
retarded/advanced Green operators for $P$. 
Then the associated sequence 
\begin{flalign}
\xymatrix@C=1.5em{
0 \ar[r] & C^\infty_{\pi\,\vc}\otimes\bbK^n \ar[r]^-{P} & C^\infty_{\pi\,\vc} \otimes\bbK^n  \ar[r]^-{G} 
& C^\infty_{\pi} \otimes\bbK^n  \ar[r]^-{P} & C^\infty_{\pi} \otimes\bbK^n  \ar[r] & 0
}
\end{flalign}
in $\Sh_{C^\infty_\bbK}(U)$ is exact. Even stronger, the corresponding sequence
of presheaves is exact, i.e.\ for each open subset $U^\prime\subseteq U$, the sequence
\begin{flalign}
\xymatrix@C=1.5em{
0 \ar[r] & C^\infty_{\pi\,\vc}(U^\prime) \otimes\bbK^n  \ar[r]^-{P} &  C^\infty_{\pi\,\vc}(U^\prime) \otimes\bbK^n \ar[r]^-{G} 
& C^\infty_{\pi}(U^\prime) \otimes\bbK^n \ar[r]^-{P} & C^\infty_{\pi}(U^\prime) \otimes\bbK^n \ar[r] & 0
}
\end{flalign}
of $C^\infty_\bbK(U^\prime)$-modules is exact.
\end{propo}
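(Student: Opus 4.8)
The plan is to establish the stronger presheaf statement, since exactness of the presheaf sequence over every open $U^\prime\subseteq U$ immediately implies exactness of the associated sheaf sequence. The tensor factor $\bbK^n$ plays no essential role — the cutoff functions used below are scalar and $P$ comes from a vertical differential operator, so it suffices to run the argument componentwise — and I fix an open $U^\prime\subseteq U$ and work with sections over $U^\prime$ throughout. I would first dispose of the purely formal parts, which follow from Definition \ref{def:Green} with no cutoff. Property (i) makes $G^\pm$ a two-sided inverse of $P$ on $C^\infty_{\pi\,\vpc/\vfc}\otimes\bbK^n$, and property (ii) controls vertical supports. Injectivity of $P$ on $C^\infty_{\pi\,\vc}(U^\prime)\otimes\bbK^n$ is then immediate: a kernel element $\varphi$ is in particular vertically past compactly supported, so $\varphi = G^+P\varphi = 0$. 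For the second spot, a vertically compactly supported $\varphi$ is both vpc and vfc, so $G^+P\varphi = \varphi = G^-P\varphi$ gives $GP\varphi = 0$, whence $\Imm P\subseteq\Ker G$; conversely, if $\varphi$ is vc with $G^+\varphi = G^-\varphi =: \psi$, then $\psi$ is simultaneously vpc and vfc, hence vc, and $P\psi = PG^+\varphi = \varphi$ exhibits $\varphi\in\Imm P$. This settles injectivity and exactness at the second spot, and also the formal inclusions $\Imm G\subseteq\Ker P$ and $P\circ G = 0$.

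The substantial content lies in the two remaining inclusions — exactness at the third spot, $\Ker(P\colon C^\infty_\pi\to C^\infty_\pi)\subseteq\Imm G$, and surjectivity of the final $P$ — both of which need a vertical cutoff function over $U^\prime$, and this construction is the crux of the argument. Using crucially the $1$-dimensional setting, I would build it as follows. Since the fibers of $\pi\colon M\to U$ are open intervals, $M\vert_{U^\prime}\to U^\prime$ admits a section $\sigma_1$; the open vertical future of $\sigma_1(U^\prime)$ is again a bundle over $U^\prime$ with open-interval fibers, hence admits a section $\sigma_2$ lying strictly in the vertical future of $\sigma_1$. Using the vertical orientation induced by $E$ together with a smooth interpolation adapted to it, one obtains $\chi\in C^\infty_\pi(U^\prime)$ with $\chi = 0$ on $J^-_\ver(\sigma_1(U^\prime))$ and $\chi = 1$ on $J^+_\ver(\sigma_2(U^\prime))$, so that the vertical derivatives of $\chi$ are supported inside the vertically compact region $J^+_\ver(\sigma_1(U^\prime))\cap J^-_\ver(\sigma_2(U^\prime))$.

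With $\chi$ at hand, both remaining claims follow by the classical Green-operator computation transported to vertical supports. For exactness at the third spot, given $\phi$ with $P\phi = 0$ I set $\varphi := P(\chi\,\phi)$. As $P$ is a vertical differential operator and $P\phi = 0$, one has $\varphi = P(\chi\phi) - \chi\,P\phi$, which involves only vertical derivatives of $\chi$ and is therefore supported inside $J^+_\ver(\sigma_1(U^\prime))\cap J^-_\ver(\sigma_2(U^\prime))$, so $\varphi\in C^\infty_{\pi\,\vc}(U^\prime)\otimes\bbK^n$. Since $\chi\phi$ is vpc and $(1-\chi)\phi$ is vfc, properties (i) and (ii) give $G^+\varphi = G^+P(\chi\phi) = \chi\phi$ and, using $P((1-\chi)\phi) = -\varphi$, also $G^-\varphi = (\chi-1)\phi$, whence $G\varphi = \chi\phi - (\chi-1)\phi = \phi$. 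For surjectivity of the last $P$, given any $\phi\in C^\infty_\pi(U^\prime)\otimes\bbK^n$ I decompose $\phi = \chi\phi + (1-\chi)\phi$ into its vpc and vfc parts and set $\psi := G^+(\chi\phi) + G^-((1-\chi)\phi)$, so that $P\psi = \chi\phi + (1-\chi)\phi = \phi$.

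I expect the genuine obstacle to be the cutoff construction of the second paragraph rather than the algebraic manipulations. Concretely, one must produce $\chi$ smoothly over all of $U^\prime$ — not merely fibrewise, nor only locally on $U^\prime$ followed by gluing, since presheaf exactness demands an honest preimage over $U^\prime$ — and verify that $\varphi = P(\chi\phi)$ is vertically compactly supported in the uniform sense of Definition \ref{def:supportrestrictions}, i.e.\ genuinely sandwiched between two global sections of $M\vert_{U^\prime}\to U^\prime$. This is exactly where the $1$-dimensionality enters, through the existence of sections and of a vertical ordering between them, and where one must check that all choices depend smoothly on the $U^\prime$-directions.
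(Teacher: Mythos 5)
Your proposal is correct and follows essentially the same route as the paper: the formal parts (injectivity and exactness at the second node) are handled identically via Definition \ref{def:Green}~(i) and the support properties, and the two substantial nodes are treated by decomposing $\Phi=\Phi_++\Phi_-$ into vertically past/future compactly supported pieces using two ordered global sections of $M\vert_{U^\prime}\to U^\prime$ (the paper phrases your cutoff $\chi$ as a partition of unity subordinate to the cover $\{M\vert_{U^\prime}\setminus J^-_\ver(\sigma_-(U^\prime)),\,M\vert_{U^\prime}\setminus J^+_\ver(\sigma_+(U^\prime))\}$, which is the same device). The only cosmetic difference is that for the third node the paper deduces that $\rho:=P\Phi_+=-P\Phi_-$ is vertically compactly supported directly from the two one-sided support properties rather than from your Leibniz-type argument on the vertical derivatives of $\chi$; both are valid.
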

\begin{proof}
We prove the second (stronger) statement, which implies the first.
Let $U^\prime\subseteq U$ be any open subset.
To prove exactness at the first node, consider any $\varphi\in C^\infty_{\pi\,\vc}(U^\prime)\otimes\bbK^n$
such that $P\varphi =0$ and note that $0=G^\pm P \varphi =\varphi$  by Definition \ref{def:Green} (i).
For the second node, let  $\varphi\in C^\infty_{\pi\,\vc}(U^\prime)\otimes\bbK^n$ be such that
$G\varphi =0$. Then $G^+\varphi = G^- \varphi =:\rho \in C^\infty_{\pi\,\vc}(U^\prime)\otimes\bbK^n$ 
because of the support
properties of Green operators and the definition of vertically compact support.
Hence, $P\rho = PG^\pm \varphi =\varphi$ by Definition \ref{def:Green} (i).
\sk

For the third node, let $\Phi\in C^\infty_\pi(U^\prime)\otimes\bbK^n$ be such that $P\Phi=0$.
Choosing two non-intersecting sections $\sigma_\pm : U^\prime \to M\vert_{U^\prime}$
such that $\sigma_+$ lies in the vertical future of $\sigma_-$, we obtain an open cover
$\{M\vert_{U^\prime}\setminus J^-_\ver(\sigma_-(U^\prime) ) ,\,
M\vert_{U^\prime}\setminus J^+_\ver(\sigma_+(U^\prime))\}$ of $M\vert_{U^\prime}$. 
Choosing a partition of unity subordinate to this cover, we can decompose 
$\Phi = \Phi_+ + \Phi_-$ with $\Phi_\pm \in C^\infty_{\pi\, \vpc/\vfc}(U^\prime)\otimes\bbK^n$.
Then $\rho:= P\Phi_+ = -P\Phi_- \in C^\infty_{\pi\, \vc}(U^\prime)\otimes\bbK^n$ 
is vertically compactly supported
and $G\rho = G^+ \rho - G^-\rho = G^+ P\Phi_+ + G^- P\Phi_-= \Phi_+ + \Phi_- =\Phi$ 
by Definition \ref{def:Green} (i).
\sk

For the last node, take any $\Phi\in C^\infty_\pi(U^\prime)\otimes\bbK^n$ and decompose
as before $\Phi= \Phi_+ + \Phi_-$ with $\Phi_\pm \in C^\infty_{\pi\, \vpc/\vfc}(U^\prime)\otimes\bbK^n$.
Defining $\rho:= G^+ \Phi_+ + G^- \Phi_-$, we obtain
$P\rho= PG^+ \Phi_+ + PG^- \Phi_- = \Phi_+ + \Phi_- = \Phi$ by Definition \ref{def:Green} (i).
\end{proof}
\begin{rem}\label{rem:solutions}
As a direct consequence of this proposition, we obtain that the cokernel sheaf
\begin{subequations}
\begin{flalign}
\tfrac{C^\infty_{\pi\,\vc}\otimes\bbK^n}{P(C^\infty_{\pi\,\vc}\otimes\bbK^n)} \,:=\, 
\mathrm{coker} \big( P : C^\infty_{\pi\,\vc}\otimes\bbK^n \to C^\infty_{\pi\,\vc}\otimes\bbK^n\big) \,\in\,\Sh_{C^\infty_{\bbK}}(U)
\end{flalign}
may be computed as a presheaf quotient, i.e.\ 
\begin{flalign}
\tfrac{C^\infty_{\pi\,\vc}\otimes\bbK^n}{P(C^\infty_{\pi\,\vc}\otimes\bbK^n)}(U^\prime) \,=\, 
C^\infty_{\pi\,\vc}(U^\prime)\otimes\bbK^n \big/ P(C^\infty_{\pi\,\vc}(U^\prime) \otimes\bbK^n)\quad,
\end{flalign}
\end{subequations}
for every open subset $U^\prime \subseteq U$. Furthermore, this sheaf is isomorphic
via the causal propagator
\begin{flalign}\label{eqn:causalpropagatoriso}
G \,:\, \xymatrix{
\frac{C^\infty_{\pi\,\vc}\otimes\bbK^n}{P(C^\infty_{\pi\,\vc}\otimes\bbK^n)} \ar[r]^-{\cong} ~&~ \Sol_\pi
}
\end{flalign}
to the solution sheaf $\Sol_\pi := \ker(P : C^\infty_{\pi}\otimes\bbK^n \to C^\infty_{\pi} \otimes\bbK^n)
\in\Sh_{C^\infty_\bbK}(U)$.
\end{rem}

Let us now illustrate these concepts by examples.
\begin{ex}\label{ex:oscillator}
Consider any $U\in \Man$ and any smooth $U$-family of 
$1$-dimensional spacetimes $(\pi:M\to U,E)\in\Loc_1^\infty(U)$.
As equation of motion we take the vertical differential operator
\begin{flalign}\label{eqn:oscillator}
P_{(\pi,E)}\,:=\, {\ast_\ver}  \,{\dd_\ver}  \,{\ast_\ver}\, {\dd_\ver} + m^2 \,:\, C^\infty_{\pi}
~\longrightarrow~C^\infty_\pi\quad,
\end{flalign}
where $m\in (0,\infty)$ is a fixed parameter 
and ${\ast_\ver}  \,{\dd_\ver}  \,{\ast_\ver}\, {\dd_\ver}$ 
is the vertical Laplacian, which is obtained from 
the vertical de Rham differential $\dd_\ver$ 
on $\pi: M\to U$ and the vertical Hodge operator $\ast_\ver$ 
induced by $E\in\Omega^1_{\ver}(M)$. 
This differential operator describes a smooth $U$-family of $1$-dimensional scalar fields
(or equivalently harmonic oscillators) with a fixed mass/frequency parameter $m$
on time intervals whose geometry (i.e.\ length) depends on the point $x\in U$.
To prove that \eqref{eqn:oscillator} admits a retarded and an advanced Green operator,
it is sufficient to prove existence of local retarded and advanced Green operators 
$G^\pm_\alpha  : C^\infty_{\pi\,\vpc/\vfc} \vert_{U_\alpha}\to C^\infty_{\pi\,\vpc/\vfc}\vert_{U_\alpha}$
for an arbitrary choice of open cover $\{U_\alpha\subseteq U\}$.
This is because uniqueness of retarded/advanced Green operators entails
that the family $\{G^\pm_\alpha\}$ satisfies the relevant compatibility conditions
on all overlaps $U_{\alpha\beta}$ and hence, 
recalling from Proposition \ref{prop:Shmodulestack} that $\Sh_{C^\infty_{\bbR}}$ is a stack, 
it defines a global retarded/advanced Green operator.
\sk

To prove local existence, consider any open cover $\{U_\alpha\subseteq U\}$
in which the restricted bundles $M\vert_{U_\alpha}\to U_\alpha$ admit a trivialization
$M\vert_{U_\alpha} \cong \bbR\times U_\alpha$. In this trivialization,
we have that $E\vert_{U_\alpha} \cong \rho\,\dd t $ for a positive
function $\rho\in C^\infty(\bbR\times U_\alpha,\bbR^{>0})$, 
where $t\in \bbR$ is a time coordinate on $\bbR$,
and the equation of motion operator reads as
$P_\alpha = \rho^{-1}\,\partial_t\,\rho^{-1}\partial_t + m^2$.
We can simplify this differential operator even further by introducing
a new ($x\in U_\alpha$ dependent) time coordinate $T(t,x)$ 
such that $\dd_\ver T = \rho \,\dd t $. Note that in these 
coordinates the fiber over $x\in U_\alpha$ is the interval $(T(-\infty,x),T(\infty,x))$,
i.e.\ the geometry/length of the interval may depend on $x$. The equation of
motion operator then reads as $P_\alpha = \partial_{T}^2 +m^2$,
which admits the retarded/advanced Green operator
\begin{flalign}
(G^\pm_\alpha \varphi)(T,x)\,=\, \int_{T(\mp \infty, x)}^{T} m^{-1} \sin\big(m\,(T-S)\big)\, \varphi(S,x)\,\dd S\quad.
\end{flalign}
Because  $\varphi\in C^\infty_{\pi\,\vpc/\vfc}\vert_{U_\alpha}$ is vertically past/future compactly supported,
this integral exists and it depends smoothly 
on both $T\in (T(-\infty,x),T(\infty,x))$ and $x\in U_\alpha$.
\end{ex}

\begin{ex}\label{ex:familyoscillator}
In order to construct in Subsection \ref{subsec:example} 
an example of a smooth $\tilde{U}$-family
of smooth AQFTs, we generalize Example \ref{ex:oscillator}
to the case where the mass parameter is not a constant but rather 
a smooth positive function $m\in C^\infty(\tilde{U},\bbR^{>0})$ on $\tilde{U}\in \Man$.
Given any $U\in \Man$ and  any smooth $U$-family of 
$1$-dimensional spacetimes $(\pi:M\to U,E)\in\Loc_1^\infty(U)$,
we consider the object $(\pi\times \id: M\times \tilde{U}\to U\times\tilde{U},\pr_M^\ast(E))\in\Loc_1^\infty(U\times \tilde{U})$
and define on it the vertical differential operator
\begin{flalign}\label{eqn:familyoscillator}
\tilde{P}_{(\pi,E)}\,:=\, {\ast_\ver}  \,{\dd_\ver}  \, {\ast_\ver}\, {\dd_\ver} + \pr_{\tilde{U}}^\ast(m^2 )\,:\,C^\infty_{\pi\times \id}~\longrightarrow~
C^\infty_{\pi\times \id}\quad,
\end{flalign}
where $\pr_M : M\times \tilde{U}\to M$ and $\pr_{\tilde{U}}:M\times \tilde{U}\to \tilde{U}$ denote the projection maps.
Proceeding in complete analogy to Example \ref{ex:oscillator}, it is sufficient to choose an open cover
$\{U_\alpha\subseteq U\}$ for which there exist bundle trivializations $(M\times \tilde{U})\vert_{U_\alpha\times\tilde{U}} 
\cong \bbR\times U_\alpha\times \tilde{U}$, and prove the existence of Green operators locally.
Choosing again a convenient time coordinate $T$ by solving $\dd_\ver T =\rho\,\dd t\cong \pr_M^\ast(E)\vert_{U_\alpha\times\tilde{U}}$, the local
equation of motion operator reads as $\tilde{P}_\alpha = \partial_{T}^2 + m^2(\tilde{x})$,
where we made the dependence on $\tilde{x}\in \tilde{U}$ explicit. This operator admits
a retarded/advanced Green operator given by
\begin{flalign}
(\tilde{G}^\pm_\alpha \varphi)(T,x,\tilde{x})\,=\, \int_{T(\mp \infty, x)}^{T} 
m(\tilde{x})^{-1} \sin\big(m(\tilde{x})\,(T-S)\big)\, \varphi(S,x,\tilde{x})\,\dd S\quad,
\end{flalign}
for all $\varphi \in C^\infty_{\pi\times\id\,\vpc/\vfc}\vert_{U_\alpha\times \tilde{U}}$.
\end{ex}

\begin{ex}\label{ex:Dirac}
Consider any $U\in \Man$ and any smooth $U$-family of 
$1$-dimensional spacetimes $(\pi:M\to U,E)\in\Loc_1^\infty(U)$.
The $1$-dimensional massless Dirac field is described by the vertical differential operator
\begin{flalign}\label{eqn:Diracoperator}
D_{(\pi,E)}\,:=\, \begin{pmatrix}
\ii\,{\ast_\ver}  \,{\dd_\ver}  & 0\\
0& -\ii \,{\ast_\ver}  \,{\dd_\ver}  
\end{pmatrix}\,:\, C^\infty_{\pi}\otimes\bbC^2
~\longrightarrow~C^\infty_\pi\otimes\bbC^2\quad,
\end{flalign}
where $\ii\in\bbC$ is the imaginary unit and the
elements $\big(\mycom{\Psi}{\overline{\Psi}}\big)\in C^\infty_{\pi}\otimes\bbC^2$
should be interpreted as the Dirac field $\Psi$ and its Dirac conjugate
$\overline{\Psi}$. The existence of retarded/advanced Green operators
can be proven as in the previous examples by a local argument.
Indeed, restricting again to a trivializing cover $\{U_\alpha\subseteq U\}$ and introducing the
local time coordinate $T$, the local Dirac operator reads as
\begin{flalign}
D_\alpha\,=\,\begin{pmatrix}
\ii\partial_T^{} & 0\\
0& -\ii\partial_T^{} 
\end{pmatrix}
\end{flalign}
and its associated retarded/advanced Green operator is given by fiber integration
\begin{flalign}
\Big(S^\pm_\alpha\Big({\footnotesize \mycom{\psi}{\overline{\psi}}}\Big)\Big)(T,x)
\,=\,\int_{T(\mp \infty, x)}^{T}\begin{pmatrix}
-\ii\,\psi(S,x)\\
\ii\,\overline{\psi}(S,x)
\end{pmatrix}~\dd S\quad,
\end{flalign}
where $T(\mp \infty, x)$ was defined in Example \ref{ex:oscillator}.
\end{ex}

We conclude this subsection with a few remarks about smoothly parametrized
initial value problems. Let us start with the case where the $\Sh_{C^\infty_\bbK}(U)$-morphism
$P : C^\infty_{\pi}\otimes\bbK^n\to C^\infty_{\pi}\otimes\bbK^n$ 
corresponds to a second order vertical differential
operator on $\pi:M\to U$, as it is the case in our Examples \ref{ex:oscillator} and \ref{ex:familyoscillator}. 
Choosing any section $\sigma : U\to M$, we can define a $\Sh_{C^\infty_\bbK}(U)$-morphism
\begin{subequations}\label{eqn:datamap}
\begin{flalign}
\data^{\mathrm{2nd}}_{\sigma}\,:\, \Sol_\pi~\longrightarrow~\big(C^\infty_{\bbR,U}\otimes\bbK^n\big)^{\oplus 2}
\end{flalign}
that assigns, for each open subset $U^\prime\subseteq U$, 
to a solution $\Phi\in \Sol_\pi(U^\prime)\subseteq C^\infty_\pi(U^\prime)\otimes\bbK^n 
=C^\infty_\bbR(M\vert_{U^\prime})\otimes\bbK^n$ 
(see Remark \ref{rem:solutions}) its initial data
\begin{flalign}
\data^{\mathrm{2nd}}_{\sigma}(\Phi) \,:=\, \big(\sigma^\ast(\Phi), \sigma^\ast({\ast_\ver}{\dd_\ver} \Phi)\big)\,\in\, 
\big(C^\infty_{\bbR}(U^\prime)\otimes\bbK^n\big)^{\oplus 2}
\end{flalign}
\end{subequations}
on $\sigma(U^\prime)\subseteq M\vert_{U^\prime}$.
We say that $P$ has a well-posed initial value problem if \eqref{eqn:datamap} 
is an isomorphism in $\Sh_{C^\infty_\bbK}(U)$. 
Note that if $P$ has retarded/advanced Green operators and a well-posed initial value problem,
it follows by using also Remark \ref{rem:solutions} that
\begin{flalign}
\xymatrix{
\tfrac{C^\infty_{\pi\,\vc}\otimes\bbK^n}{P(C^\infty_{\pi\,\vc}\otimes\bbK^n)} \ar[r]^-{G}~&~ \Sol_{\pi} \ar[r]^-{\data^{\mathrm{2nd}}_{\sigma}} ~&~\big(C^\infty_{\bbR,U}\otimes\bbK^n\big)^{\oplus 2}
}
\end{flalign}
is an isomorphism in $\Sh_{C^\infty_\bbK}(U)$ and hence that
$C^\infty_{\pi\,\vc}\otimes\bbK^n/P(C^\infty_{\pi\,\vc}\otimes\bbK^n) \in\Sh_{C^\infty_\bbK}(U) $
is a free $C^\infty_{\bbK, U}$-module of rank $2n$. This observation
will be useful in Subsection \ref{subsec:example}.
\begin{ex}\label{ex:datamap}
The equation of motion operators in Examples \ref{ex:oscillator} 
and \ref{ex:familyoscillator} have a well-posed initial value problem. 
Let us show this for the more general operator \eqref{eqn:familyoscillator} in the latter example, 
which reduces for $\tilde{U}=\{\ast\}$ a point to the operator \eqref{eqn:oscillator}.
Using again that $\Sh_{C^\infty_\bbR}$ is a stack, it is sufficient
to prove the isomorphism property of the sheaf morphism
\eqref{eqn:datamap} in each patch $U_\alpha$ of an arbitrary
open cover $\{U_\alpha\subseteq U\}$. Using as in Example \ref{ex:familyoscillator} 
a trivializing cover and suitable time coordinates $T$, we obtain the local equation 
of motion operator $\tilde{P}_\alpha =\partial_{T}^2 + m^2(\tilde{x})$.
The inverse of the restriction of the initial data map $\data^{\mathrm{2nd}}_\sigma$ to $U_\alpha\subseteq U$ 
is then given by
\begin{multline}
\solve_{\sigma}\big(\Phi_0,\Phi_1\big)(T,x,\tilde{x})\,:=\,
\Phi_0(x,\tilde{x}) \,\cos\big(m(\tilde{x})\, (T-T_\sigma(x))\big) \\
+ \Phi_1(x,\tilde{x})\, m(\tilde{x})^{-1} \,\sin\big(m(\tilde{x})\, (T-T_\sigma(x))\big)\quad,
\end{multline}
for all $(\Phi_0,\Phi_1)\in (C^\infty_{\bbR,U_\alpha\times \tilde{U}})^{\oplus 2}$,
where the initial time $T_\sigma(x)\in (T(-\infty,x),T(\infty,x))$ 
is determined from the local coordinate expression $\sigma(x) = (T_\sigma(x),x)$, for all $x\in U_\alpha$, 
of the section $\sigma$.
\end{ex}

\begin{rem}
The case of a first order vertical differential
operator $P : C^\infty_{\pi}\otimes\bbK^n\to C^\infty_{\pi}\otimes\bbK^n$ 
on $\pi:M\to U$ works similarly.
The analog of \eqref{eqn:datamap} is given by the $\Sh_{C^\infty_\bbK}(U)$-morphism
\begin{subequations}\label{eqn:1storderdata}
\begin{flalign}
\data^{\mathrm{1st}}_{\sigma}\,:\, \Sol_\pi~\longrightarrow~C^\infty_{\bbR,U}\otimes\bbK^n
\end{flalign}
that assigns, for each open subset $U^\prime\subseteq U$, 
to a solution $\Phi\in \Sol_\pi(U^\prime)$ its initial data
\begin{flalign}
\data^{\mathrm{1st}}_{\sigma}(\Phi) \,:=\, \sigma^\ast(\Phi)\,\in\, 
C^\infty_{\bbR}(U^\prime)\otimes\bbK^n
\end{flalign}
\end{subequations}
on $\sigma(U^\prime)\subseteq M\vert_{U^\prime}$. We again say that
$P$ has a well-posed initial value problem if \eqref{eqn:1storderdata}
is an isomorphism. It is easy to check that the Dirac operator 
from Example \ref{ex:Dirac} has a well-posed initial value problem
in this sense. 
\end{rem}

\subsection{\label{subsec:example}$1$-dimensional scalar field}
The aim of this subsection is to construct an explicit
smooth $\tilde{U}$-family of smooth AQFTs that can be interpreted
as a smooth refinement of the $1$-dimensional scalar field with a 
smoothly varying mass parameter $m\in C^\infty(\tilde{U},\bbR^{>0})$. 
Our construction will be based on Example \ref{ex:familyoscillator}
and we will carry out (smooth generalizations of) 
the usual steps in the construction of Bosonic free field theories, see e.g.\ \cite{BG,BGP}.
Since we are interested in smooth $\tilde{U}$-families (see \eqref{eqn:tildeUfamilyAQFT}),
we have to define instead of \eqref{eqn:AAAviaCCR} a stack morphism
\begin{flalign}\label{eqn:BBBviaCCR}
\xymatrix{
\ar[dr]_-{\WWW}\Loc_1^\infty \ar[rr]^-{\BBB}~&~~&~\Map\big(\und{\tilde{U}},\astAlg^\infty\big)\\
~&~ \Map\big(\und{\tilde{U}},\Pois^\infty\big) \ar[ur]_-{~~\Map(\und{\tilde{U}},\CCR)}~&~
}\quad,
\end{flalign}
which for the case of $\tilde{U}=\{\ast\}$ a point reduces to \eqref{eqn:AAAviaCCR}.
For our example of interest, the 
stack morphism $\WWW:\Loc_1^\infty \to \Map(\und{\tilde{U}},\Pois^\infty)$ is given
by the following data: For each manifold $U\in\Man$, we define the functor
\begin{flalign}\label{eqn:EWWUfunctor}
\WWW_U\,:\,\Loc^\infty_1(U)~\longrightarrow~\Pois^\infty(U\times \tilde{U})
\end{flalign}
that assigns to $(\pi:M\to U,E)\in \Loc_1^\infty(U)$ the object
\begin{subequations}
\begin{flalign}\label{eqn:WWWexplicit}
\WWW_U\big(\pi:M\to U,E\big) \,:=\,
\Big(\tfrac{C^\infty_{\pi\times\id\,\vc}}{\tilde{P}_{(\pi,E)} C^\infty_{\pi\times\id\,\vc}} , \tau_{(\pi\times\id,\pr_M^\ast(E))}\Big)\,\in \,\Pois^\infty(U\times \tilde{U})\quad,
\end{flalign}
where $\tilde{P}_{(\pi,E)}$ is the equation of motion operator in \eqref{eqn:familyoscillator}.
The Poisson structure reads as
\begin{flalign}\label{eqn:WWWpoisson}
 \tau_{(\pi\times\id,\pr_M^\ast(E))}\,=\, \langle\cdot ,\tilde{G}_{(\pi,E)}(\cdot)\rangle_{(\pi\times\id,\pr_M^\ast(E) )}\quad,
\end{flalign}
where $\tilde{G}_{(\pi,E)}$ is the causal propagator for $\tilde{P}_{(\pi,E)}$,
whose existence was established in Example \ref{ex:familyoscillator},
and 
\begin{flalign}\label{eqn:WWWpairing}
\langle \cdot ,\cdot \rangle_{(\pi\times\id,\pr_M^\ast(E) )} \,:\, \xymatrix{
C^\infty_{\pi\times\id\,\vc} \otimes_{C^\infty_{\bbR,U\times\tilde{U}}} C^\infty_{\pi\times\id} \ar[r]^-{\mu}~&~
C^\infty_{\pi\times\id\,\vc} \ar[rr]^-{\int_{\pi\times\id}(-)\pr_M^\ast(E)}~&&~C^\infty_{\bbR,U\times\tilde{U}}
}\quad.
\end{flalign}
\end{subequations}
is the $\Sh_{C^\infty_\bbR}(U\times\tilde{U})$-morphism
obtained by composing the multiplication map $\mu$ of functions
and fiber integration on $(\pi\times\id :M\times\tilde{U}\to U\times\tilde{U},\pr_M^\ast(E))
\in\Loc^\infty(U\times\tilde{U})$. 
Less formally, we can write for \eqref{eqn:WWWpoisson} also
the more familiar looking expression
\begin{flalign}
 \tau_{(\pi\times\id,\pr_M^\ast(E))}\big(\varphi,\varphi^\prime\big)\,=\,\int_\pi \varphi 
 \,\tilde{G}_{(\pi,E)}(\varphi^\prime)\,\pr_M^\ast(E)\quad.
\end{flalign}
Note that the Poisson structure
is well-defined on the quotient in \eqref{eqn:WWWexplicit} because
$\tilde{P}_{(\pi,E)}$ is formally self-adjoint with respect to the pairing \eqref{eqn:WWWpairing}
and $\tilde{G}_{(\pi,E)}\circ \tilde{P}_{(\pi,E)} = 0=
\tilde{P}_{(\pi,E)}\circ \tilde{G}_{(\pi,E)}$ due to the definition of Green operators,
see Definition \ref{def:Green}.
\sk

Given any morphism $f : (\pi:M\to U,E)\to (\pi^\prime:M^\prime\to U,E^\prime)$
in $ \Loc_1^\infty(U)$, we can define a morphism
$f\times \id  :  (\pi\times \id :M\times\tilde{U}\to U\times \tilde{U},\pr_M^\ast(E))\to (\pi^\prime\times\id :
M^\prime\times\tilde{U}\to U\times\tilde{U},\pr_{M^\prime}^\ast(E^\prime))$ in
$\Loc^\infty_1(U\times\tilde{U})$. This yields a pushforward
(i.e.\ extension by zero) $\Sh_{C^\infty_\bbR}(U\times\tilde{U})$-morphism
$(f\times\id)_\ast : C^\infty_{\pi\times\id\,\vc}\to C^\infty_{\pi^\prime\times\id\,\vc}$
of vertically compactly supported functions (recall from Definition \ref{def:Loc_1(U)}
that $f\times \id$ is an open embedding of fiber bundles) that intertwines the equation of motion operators,
i.e.\ $\tilde{P}_{(\pi^\prime,E^\prime)}\, (f\times\id)_\ast = (f\times\id)_\ast\, \tilde{P}_{(\pi,E)}$.
We can then define the values of \eqref{eqn:EWWUfunctor} on morphisms by
\begin{flalign}\label{eqn:WWWUmorphisms}
\WWW_U(f) \,:=\, (f\times\id)_\ast \,:\, \WWW_U\big(\pi:M\to U,E\big)~\longrightarrow~
\WWW_U\big(\pi^\prime:M^\prime\to U,E^\prime\big)\quad.
\end{flalign}
Note that the preservation of Poisson structures follows from the uniqueness of retarded/advanced 
Green operators.
\sk

To complete the definition of the stack morphism $\WWW$,
it remains to provide, for each morphism $h:U\to U^\prime$ in $\Man$,
a natural isomorphism (see Definition \ref{def:stackmorphism})
\begin{subequations}\label{eq:Wh}
\begin{flalign}
\xymatrix@C=4em{
\ar[d]_-{h^\ast} \Loc_1^\infty(U^\prime) \ar[r]^-{\WWW_{U^\prime}}~&~\Pois^\infty(U^\prime\times\tilde{U})\ar[d]^-{(h \times \id)^\ast} \ar@{=>}[dl]_-{\WWW_h}\\
\Loc_1^\infty(U) \ar[r]_-{\WWW_U}~&~\Pois^\infty(U\times\tilde{U})
}
\end{flalign}
To define the component
\begin{flalign}\label{eq:WWWh}
\WWW_h\,:\, (h \times \id)^\ast \WWW_{U^\prime}\big(\pi:M\to U^\prime,E\big)
~\longrightarrow~ \WWW_U \big(h^\ast (\pi: M\to U^\prime,E) \big)
\end{flalign}
\end{subequations}
at $(\pi:M\to U^\prime,E) \in \Loc_1^\infty(U^\prime)$,
we recall the pullback bundle construction in \eqref{eqn:pullbackbundle} and 
\eqref{eqn:Locfunctor2} and consider the $\Sh_{C^\infty_\bbR}(U\times\tilde{U})$-morphism 
\begin{flalign}\label{eqn:overlinehpullback1}
(\overline{h}^M\times\id)^\ast \,:\, (h\times\id)^\ast C^\infty_{\pi\times\id\,\vc}~\longrightarrow~
C^\infty_{\pi_h\times\id\,\vc}
\end{flalign}
that is defined through its adjunct under 
$(h\times\id)^\ast : \Sh_{C^\infty_\bbR}(U\times\tilde{U}) \rightleftarrows \Sh_{C^\infty_\bbR}(U^\prime\times\tilde{U})$
by the components (denoted with abuse of notation by the same symbol)
\begin{flalign}\label{eqn:overlinehpullback2}
(\overline{h}^{M}\times\id)^{\ast}\,:\,C^\infty_\bbR\big((M\times \tilde{U})\vert_{U^{\prime\prime}}\big) ~\longrightarrow~
C^\infty_\bbR\big((h^\ast M\times \tilde{U})\vert_{(h\times\id)^{-1}(U^{\prime\prime})}\big) \quad,
\end{flalign}
for all open subsets $U^{\prime\prime}\subseteq U^\prime\times\tilde{U}$, 
which describe the pullback of functions along the map of total spaces. Due to the universal property of 
pullback bundles, one easily checks that each section $\sigma : U^\prime\times \tilde{U}\to M\times\tilde{U}$
induces a section $\sigma_h : U\times \tilde{U}\to h^\ast M\times\tilde{U}$ of the pullback bundle
that satisfies $\sigma\,(h\times\id) = (\overline{h}^M\times\id)\, \sigma_h$, hence
the maps in \eqref{eqn:overlinehpullback2} preserve vertically compact support.
Due to naturality of the vertical differential operators $\tilde{P}$ in \eqref{eqn:familyoscillator},
we obtain the commutative diagram
\begin{flalign}\label{eqn:overlinehpullbackP}
\xymatrix@C=4em{
(h \times \id)^\ast C^\infty_{\pi\times\id\,\vc} \ar[d]_-{(h \times \id)^\ast \tilde{P}_{(\pi,E)}} \ar[r]^-{(\ovr{h}^{M} \times \id)^\ast} ~&~ C^\infty_{\pi_h\times\id\,\vc} \ar[d]^-{\tilde{P}_{h^\ast(\pi,E)}}\\
(h \times \id)^\ast C^\infty_{\pi\times\id\,\vc} \ar[r]_-{(\ovr{h}^{M} \times \id)^\ast} ~&~ C^\infty_{\pi_h\times\id\,\vc}
}
\end{flalign}
in $\Sh_{C^\infty_\bbR}(U\times\tilde{U})$, which allows us to induce \eqref{eqn:overlinehpullback1} 
to the quotients
\begin{flalign}\label{eqn:overlinehpullback3} 
(\overline{h}^{M}\times\id)^{\ast} \,:\,  (h\times\id)^\ast \Big(\tfrac{C^\infty_{\pi\times\id\,\vc}}{ \tilde{P}_{(\pi,E)} C^\infty_{\pi\times\id\,\vc}}\Big)
~\longrightarrow~\tfrac{C^\infty_{\pi_h\times\id\,\vc}}{\tilde{P}_{h^\ast(\pi,E)} C^\infty_{\pi_h\times\id\,\vc}}\quad.
\end{flalign}
Here we also used that $(h\times\id)^\ast $ is a left adjoint functor, hence it commutes with the colimit
defining these quotients. From the explicit expression \eqref{eqn:overlinehpullback2} 
for (the adjunct of) this morphism and observing that a diagram similar to \eqref{eqn:overlinehpullbackP} 
involving retarded/advanced Green operators commutes due to their uniqueness,
one checks that \eqref{eqn:overlinehpullback3} preserves the relevant Poisson structures
and thereby defines the desired $\Pois^\infty(U^\prime\times \tilde{U})$-morphism
$\WWW_h := (\overline{h}^{M}\times\id)^{\ast}$ in \eqref{eq:WWWh}.
\sk

It remains to confirm that \eqref{eqn:overlinehpullback3} is an 
isomorphism in $\Pois^\infty(U^\prime\times \tilde{U})$. Using the 
causal propagators \eqref{eqn:causalpropagatoriso} and the initial
data morphisms \eqref{eqn:datamap} corresponding to any choice of
section $\sigma : U^\prime\times \tilde{U}\to M\times\tilde{U}$
and its induced section $\sigma_h : U\times \tilde{U}\to h^\ast M\times\tilde{U}$ of the pullback bundle,
we obtain the commutative diagram
\begin{flalign}\label{eqn:Gdatadiagram}
\xymatrix@C=4em{
\ar[d]_-{(h\times\id)^\ast \tilde{G}_{(\pi,E)}} (h\times\id)^\ast \Big(\tfrac{C^\infty_{\pi\times\id\,\vc}}{ \tilde{P}_{(\pi,E)} C^\infty_{\pi\times\id\,\vc}}\Big) \ar[r]^-{(\overline{h}^{M}\times\id)^{\ast}}~&~ \tfrac{C^\infty_{\pi_h\times\id\,\vc}}{\tilde{P}_{h^\ast(\pi,E)} C^\infty_{\pi_h\times\id\,\vc}}
\ar[d]^-{ \tilde{G}_{h^\ast(\pi,E)}}\\
\ar[d]_-{(h\times\id)^\ast\data^{\mathrm{2nd}}_\sigma}(h\times\id)^\ast \Sol_{\pi\times \id} \ar[r]^-{(\overline{h}^{M}\times\id)^{\ast}}~&~  \Sol_{\pi_h\times \id} 
\ar[d]^-{\data^{\mathrm{2nd}}_{\sigma_h}}\\
(h\times\id)^\ast \big(C^\infty_{\bbR,U^\prime \times \tilde{U}}\big)^{\oplus 2} \ar[r]_-{\cong}~&~ \big(C^\infty_{\bbR,U \times \tilde{U}}\big)^{\oplus 2} 
}
\end{flalign}
in $\Sh_{C^\infty_\bbR}(U \times \tilde{U})$, where the bottom horizontal isomorphism
uses that $(h\times\id)^\ast$ preserves coproducts (as it is a left adjoint functor)
and the symmetric monoidal coherence isomorphism for the monoidal unit 
in \eqref{eqn:ShSMcoherences}. By Remark \ref{rem:solutions}
and Examples \ref{ex:familyoscillator} and \ref{ex:datamap}, 
all vertical arrows in this diagram are isomorphisms, hence the top 
horizontal arrow is an isomorphism too. This implies that $\WWW_h$ is an 
isomorphism in $\Pois^\infty(U \times \tilde{U})$. 
\sk

Summing up, the main result of this section is
\begin{propo}
The construction described above defines a stack morphism
$\WWW:\Loc_1^\infty \to \Map(\und{\tilde{U}},\Pois^\infty)$. 
\end{propo}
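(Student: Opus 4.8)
The plan is to verify that the data $(\WWW_U,\WWW_h)$ constructed above satisfies the axioms of a pseudo-natural transformation from Definition \ref{def:stackmorphism}. The key preliminary observation is that, by the mapping stack definition \eqref{eqn:mappingstack} together with the $2$-Yoneda Lemma, the category $\Map(\und{\tilde{U}},\Pois^\infty)(U)$ is naturally equivalent to $\Pois^\infty(U\times\tilde{U})$, and under this identification the action of a morphism $h:U\to U'$ by pre-composition $(\id\times\und{h})^\ast$ becomes exactly the functor $(h\times\id)^\ast$ appearing in \eqref{eq:Wh}. Thus $\WWW_U$ and $\WWW_h$ are precisely the component functors and component natural isomorphisms required of a stack morphism $\WWW:\Loc_1^\infty\to\Map(\und{\tilde{U}},\Pois^\infty)$, and it remains only to check functoriality, the isomorphism and naturality properties of $\WWW_h$, and the two coherence axioms.

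First I would confirm that each $\WWW_U$ is a functor. On objects this is already settled: the quotient sheaf in \eqref{eqn:WWWexplicit} lies in $\Pois^\infty(U\times\tilde{U})$ and its Poisson structure \eqref{eqn:WWWpoisson} descends to the quotient because $\tilde{P}_{(\pi,E)}$ is formally self-adjoint for the pairing \eqref{eqn:WWWpairing} and is annihilated on both sides by $\tilde{G}_{(\pi,E)}$. On morphisms, the identities $\WWW_U(g\circ f)=\WWW_U(g)\circ\WWW_U(f)$ and $\WWW_U(\id)=\id$ follow from functoriality of pushforward (extension by zero) of vertically compactly supported functions together with $(g\circ f)\times\id=(g\times\id)\circ(f\times\id)$, while preservation of the Poisson structures was already argued from uniqueness of Green operators.

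Next I would establish that each $\WWW_h$ is a natural isomorphism. That its components are isomorphisms was shown via the commutative diagram \eqref{eqn:Gdatadiagram}, whose vertical arrows are isomorphisms by Remark \ref{rem:solutions} and Examples \ref{ex:familyoscillator} and \ref{ex:datamap}. What remains is naturality: for every $\Loc_1^\infty(U')$-morphism $f:(\pi,E)\to(\pi',E')$, the square relating $(h\times\id)^\ast\WWW_{U'}(f)$ and $\WWW_U(h^\ast f)$ through $\WWW_h$ must commute. This reduces to the compatibility of the pushforward $(f\times\id)_\ast$ with the pullback-of-functions map \eqref{eqn:overlinehpullback2}, which follows from the universal property of pullback bundles: the relevant total-space maps fit into a commutative cube, so pulling back functions along one face and pushing forward along another agree.

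The main work, and the step I expect to be the most delicate, is verifying the two coherence axioms of Definition \ref{def:stackmorphism}. For axiom (i), given composable $h:U\to U'$ and $h':U'\to U''$, I would unwind $\WWW_{h'h}$, $\WWW_h$ and $\WWW_{h'}$ in terms of the pullback-of-functions maps \eqref{eqn:overlinehpullback2} and reduce the required equality to the coherence isomorphisms $h^\ast h'^\ast\cong(h'h)^\ast$ of the pullback-bundle construction underlying $\Loc_1^\infty$ (Proposition \ref{prop:Locstack}) together with the strong monoidal coherences of $\Sh_{C^\infty_\bbR}$ from \eqref{eqn:ShSMcoherences}; axiom (ii) is the analogous, simpler statement at $h=\id_U$. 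Since $\WWW_h$ and all relevant coherences are induced from pullback and pushforward of functions, and since $(h\times\id)^\ast$, being a left adjoint, commutes with the colimits defining the quotient sheaves, this compatibility holds essentially by construction. I expect no genuine analytic obstacle here — the existence of Green operators and well-posed initial value problems was already supplied by Examples \ref{ex:familyoscillator} and \ref{ex:datamap} — so the only real difficulty is the careful bookkeeping of the suppressed coherence isomorphisms, which is controlled throughout by the universal property of pullback bundles.
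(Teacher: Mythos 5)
Your proposal is correct and follows essentially the same route as the paper: the paper's ``proof'' is the construction itself, with the decisive step being exactly the one you single out, namely that $\WWW_h$ is an isomorphism via the diagram \eqref{eqn:Gdatadiagram} relating the causal propagator and the initial data morphisms, together with descent of the Poisson structure from formal self-adjointness and uniqueness of Green operators. Your additional spelling-out of functoriality, naturality and the two coherence axioms is a more explicit bookkeeping of what the paper leaves implicit, but it introduces no new idea and encounters no obstacle the paper does not already address.
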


As a consequence, we obtain an explicit example of a smooth $\tilde{U}$-family 
of smooth $1$-dimensional AQFTs
$\BBB := \Map(\und{\tilde{U}},\CCR) \circ \WWW: \Loc_1^\infty \to \Map(\und{\tilde{U}},\astAlg^\infty)$
describing a smooth refinement of the $1$-dimensional scalar field with
a smoothly varying mass parameter $m\in C^\infty(\tilde{U},\bbR^{>0})$. 
In the special case where $\tilde{U}=\{\ast\}$ is a point, our construction
describes a smooth refinement of the $1$-dimensional scalar field
with a fixed mass $m>0$.

\subsection{\label{subsec:fermion}$1$-dimensional Dirac field}
The construction of Subsection \ref{subsec:example} can be 
easily adapted to the case of the $1$-dimensional Dirac field 
introduced in Example \ref{ex:Dirac}. 
We will spell out the relevant steps to construct
the corresponding stack morphism 
$\LLL^{\mathsf{f}} : \Loc_1^\infty\to\IPVec^\infty$
such that $\AAA^{\mathsf{f}}:= \CAR\circ\LLL^{\mathsf{f}}$
in \eqref{eqn:AAAviaCAR} describes a smooth refinement
of the massless $1$-dimensional Dirac field. For this we will
carry out (smooth generalizations of) 
the usual steps in the construction of Fermionic 
free field theories, see e.g.\ \cite{Dappiaggi}. 
After that we will show
that the smooth automorphism group \eqref{eqn:Aut} 
of this model includes the global $U(1)$-symmetry of the Dirac field.
\sk

The stack morphism $\LLL^{\mathsf{f}} : \Loc_1^\infty\to\IPVec^\infty$
is given by the following data: For each manifold $U\in \Man$,
we define the functor
\begin{flalign}\label{eqn:LLLDiracUfunctor}
\LLL^{\mathsf{f}}_U\,:\, \Loc_1^\infty(U)~\longrightarrow~\IPVec^\infty(U)
\end{flalign}
that assigns to each $(\pi: M\to U,E)\in\Loc_1^\infty(U)$ the object
\begin{subequations}\label{eqn:LLLDiracAll}
\begin{flalign}\label{eqn:LLLDirac}
\LLL^{\mathsf{f}}_U\big(\pi:M\to U,E\big) \,:=\,
\Big(\tfrac{C^\infty_{\pi\,\vc}\otimes\bbC^2}{D_{(\pi,E)} (C^\infty_{\pi\,\vc}\otimes\bbC^2)} 
, \ast_{(\pi,E)}, \langle\cdot,\cdot\rangle_{(\pi,E)}\Big)\,\in \,\IPVec^\infty(U)\quad,
\end{flalign}
where $D_{(\pi,E)}$ is the Dirac operator from Example \ref{ex:Dirac}.
The $\ast$-involution
\begin{flalign}\label{eqn:LLLinvolution}
{\ast_{(\pi,E)}}\begin{pmatrix}
\psi\\\overline{\psi}
\end{pmatrix}\,:=\, 
\begin{pmatrix}
\overline{\psi}^\ast\\
\psi^\ast
\end{pmatrix}
\end{flalign}
is given by swapping the components followed by complex conjugation,
which descends to the quotient since
${\ast_{(\pi,E)}}\circ D_{(\pi,E)} =D_{(\pi,E)} \circ {\ast_{(\pi,E)}}$.
The symmetric pairing 
\begin{flalign}\label{eqn:LLLpairing}
\Big\langle\begin{pmatrix}
\psi\\\overline{\psi}
\end{pmatrix}
,\begin{pmatrix}
\psi^\prime\\\overline{\psi}^\prime
\end{pmatrix}
\Big\rangle_{(\pi,E)}\,:=\, \int_{\pi} \begin{pmatrix}
\psi & \overline{\psi} 
\end{pmatrix}\, \begin{pmatrix}
0&\ii\\
-\ii & 0
\end{pmatrix} S_{(\pi,E)} \begin{pmatrix}
\psi^\prime\\\overline{\psi}^\prime
\end{pmatrix}\, E
\end{flalign}
\end{subequations}
is given by fiber integration, the causal propagator $S_{(\pi,E)}$ for
$D_{(\pi,E)}$ and the displayed matrix multiplications. It is easy
to check that $ \langle\cdot,\cdot\rangle_{(\pi,E)}$ descends to the 
quotient in \eqref{eqn:LLLDirac} and that it satisfies
the compatibility condition \eqref{eqn:astFermionCompatible} 
for $\ast$-involutions.
The definition of the functor \eqref{eqn:LLLDiracUfunctor}
on morphisms $f : (\pi : M\to U,E)\to (\pi^\prime:M^\prime\to U,E^\prime)$
is as in \eqref{eqn:WWWUmorphisms} via pushforward of vertically compactly supported functions.
The coherence isomorphisms for $\Man$-morphisms $h:U\to U^\prime$
are constructed in complete analogy to \eqref{eq:Wh}.
\sk

Summing up, we obtain
\begin{propo}
The construction described above defines a stack morphism
$\LLL^{\mathsf{f}}:\Loc_1^\infty \to \IPVec^\infty$. 
\end{propo}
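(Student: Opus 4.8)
The plan is to verify the data and axioms of Definition~\ref{def:stackmorphism} for the family of functors $\LLL^{\mathsf{f}}_U$ together with the coherence isomorphisms $\LLL^{\mathsf{f}}_h$, following closely the template already established for the Bosonic stack morphism $\WWW$ in Subsection~\ref{subsec:example}. The first task is to confirm that each $\LLL^{\mathsf{f}}_U : \Loc_1^\infty(U) \to \IPVec^\infty(U)$ is a well-defined functor. On objects this amounts to checking that the triple in \eqref{eqn:LLLDiracAll} genuinely lands in $\IPVec^\infty(U)$: one verifies that the swap-and-conjugate map \eqref{eqn:LLLinvolution} is an involution on $C^\infty_{\pi\,\vc}\otimes\bbC^2$ that commutes with $D_{(\pi,E)}$ and hence descends to the cokernel sheaf, and that the pairing \eqref{eqn:LLLpairing} is symmetric and satisfies the $\ast$-compatibility \eqref{eqn:astFermionCompatible}. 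Both properties rest on the (formal) adjointness properties of the Dirac operator $D_{(\pi,E)}$ from Example~\ref{ex:Dirac} relative to the fiber-integration pairing weighted by the matrix $\bigl(\begin{smallmatrix} 0 & \ii \\ -\ii & 0 \end{smallmatrix}\bigr)$, together with the identity $S_{(\pi,E)}\circ D_{(\pi,E)} = 0 = D_{(\pi,E)}\circ S_{(\pi,E)}$ implied by Definition~\ref{def:Green}, which guarantees that $\langle\cdot,\cdot\rangle_{(\pi,E)}$ is well-defined on the quotient in \eqref{eqn:LLLDirac}.

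On morphisms I would argue exactly as in \eqref{eqn:WWWUmorphisms}: an open embedding $f$ of $U$-families induces a pushforward (extension by zero) of vertically compactly supported sections intertwining the Dirac operators, hence an $\IPVec^\infty(U)$-morphism on cokernels. Preservation of the $\ast$-involution is immediate from the component-wise description \eqref{eqn:LLLinvolution}, while preservation of the pairing \eqref{eqn:LLLpairing} follows from the uniqueness of retarded/advanced Green operators, so that $S$ is natural under $f$, exactly as the Poisson structure was preserved in the Bosonic case. Functoriality in $f$ is then inherited from functoriality of pushforward.

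Next I would construct the coherence isomorphisms $\LLL^{\mathsf{f}}_h$ in complete analogy to \eqref{eq:Wh}: the pullback-of-functions map $(\overline{h}^M)^\ast$ preserves vertically compact support and intertwines the Dirac operators, hence descends to a morphism of cokernel sheaves, and it is readily checked to preserve $\ast$ and the pairing. The essential point is to show that this descended map is an \emph{isomorphism}. Here I would replace the second-order initial-data argument of \eqref{eqn:Gdatadiagram} by its first-order analogue: composing the causal propagator isomorphism of Remark~\ref{rem:solutions} with the first-order initial-data map $\data^{\mathrm{1st}}_\sigma$ of \eqref{eqn:1storderdata}, which is an isomorphism because the Dirac operator has a well-posed initial value problem, one assembles a commutative square whose three other edges are isomorphisms, forcing $\LLL^{\mathsf{f}}_h$ to be one as well.

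Finally, the two coherence axioms of Definition~\ref{def:stackmorphism} would follow formally from the compatibility of pullback bundles and pushforwards with composition and identities, precisely as for $\WWW$. I expect the main obstacle to be the verification that $\LLL^{\mathsf{f}}_h$ is an isomorphism, since this is the one place where the Fermionic (first-order) structure genuinely departs from the Bosonic template: one must confirm that the Dirac operator's initial value problem is well-posed in the smoothly parametrized setting and that $\data^{\mathrm{1st}}_\sigma$ is natural with respect to the induced section $\sigma_h$ of the pullback bundle. The algebraic checks on the pairing, namely symmetry and the $\ast$-identity \eqref{eqn:astFermionCompatible}, are the only other genuinely new computations, but they reduce to the adjointness of $D_{(\pi,E)}$ against the matrix $\bigl(\begin{smallmatrix} 0 & \ii \\ -\ii & 0 \end{smallmatrix}\bigr)$ and should be routine.
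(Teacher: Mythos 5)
Your proposal is correct and follows essentially the same route as the paper, which constructs $\LLL^{\mathsf{f}}$ by direct analogy with the Bosonic morphism $\WWW$ and isolates exactly the same two new ingredients you identify: the descent of the involution and pairing to the cokernel (via commutation with $D_{(\pi,E)}$ and self-adjointness against the weighted fiber-integration pairing), and the replacement of $\data^{\mathrm{2nd}}_\sigma$ by the first-order data map $\data^{\mathrm{1st}}_\sigma$ of \eqref{eqn:1storderdata} to show the coherence maps are isomorphisms. Nothing further is needed.
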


As a consequence, we obtain another example of a
smooth $1$-dimensional AQFT $\AAA^{\mathsf{f}} := \CAR \circ \LLL^{\mathsf{f}}
: \Loc_1^\infty \to \astAlg^\infty$
describing a smooth refinement of the $1$-dimensional massless Dirac field.
\sk

To conclude this section, we will show that our construction
can be refined to define a $U(1)$-equivariant smooth AQFT, showing that
the global $U(1)$-symmetry of the Dirac field is smooth in our sense.
Recalling from \eqref{eqn:equivariantAQFTexplicit} their explicit description, 
we will define a $U(1)$-equivariant smooth AQFT 
\begin{flalign}
\xymatrix{
\ar[dr]_-{\tilde{\LLL}^{\mathsf{f}}}\Loc_1^\infty\times [\{\ast\}/U(1)]_{\mathrm{pre}} 
\ar[rr]^-{\tilde{\AAA}^{\mathsf{f}}}~&~~&~\astAlg^\infty\\
~&~ \IPVec^\infty \ar[ur]_-{\CAR}~&~
}
\end{flalign}
by specifying a pseudo-natural transformation $\tilde{\LLL}^{\mathsf{f}}$.
For each manifold $U\in \Man$, we define the functor
\begin{flalign}
\tilde{\LLL}^{\mathsf{f}}_U\,:\, \Loc_1^\infty(U)\times [\{\ast\}/U(1)]_{\mathrm{pre}}(U)~\longrightarrow~\IPVec^\infty(U)
\end{flalign}
that acts on objects $(\pi: M\to U,E)\in\Loc_1^\infty(U)\times[\{\ast\}/U(1)]_{\mathrm{pre}}(U)$ 
precisely as in \eqref{eqn:LLLDiracAll}, i.e.\
\begin{flalign}\label{eqn:LLLDiracequv}
\tilde{\LLL}^{\mathsf{f}}_U\big(\pi:M\to U,E\big)\,:=\, \LLL^{\mathsf{f}}_U\big(\pi:M\to U,E\big)\,=\,
\Big(\tfrac{C^\infty_{\pi\,\vc}\otimes\bbC^2}{D_{(\pi,E)} (C^\infty_{\pi\,\vc}\otimes\bbC^2)} 
, \ast_{(\pi,E)}, \langle\cdot,\cdot\rangle_{(\pi,E)}\Big)\quad.
\end{flalign} 
(Note that the objects of $\Loc_1^\infty(U)\times[\{\ast\}/U(1)]_{\mathrm{pre}}(U)$ 
are canonically identified with the objects of $\Loc_1^\infty(U)$ because $[\{\ast\}/U(1)]_{\mathrm{pre}}(U)$
has only a single object, see \eqref{eqn:quotientprestack}.)
Things get more interesting at the level of morphisms,
because the morphisms in $\Loc_1^\infty(U)\times [\{\ast\}/U(1)]_{\mathrm{pre}}(U)$
are pairs $(f,g)$ consisting of a $\Loc_1^\infty(U)$-morphism
$f : (\pi : M\to U,E)\to (\pi^\prime:M^\prime\to U,E^\prime)$ 
and a $U(1)$-valued smooth function $g\in C^\infty(U,U(1))$.
These morphisms are defined to act 
by a combination of the pushforward of vertically compactly 
supported functions and a complex phase rotation
\begin{flalign}\label{eqn:groupactionDirac}
\tilde{\LLL}^{\mathsf{f}}_U(f,g) \begin{pmatrix}
\psi\\\overline{\psi}
\end{pmatrix}
\,:=\,  \begin{pmatrix}
f_\ast\big(\pi^\ast(g)\,\psi\big)\\ f_\ast\big(\pi^\ast(g)^{-1}\, \overline{\psi}\big)
\end{pmatrix}
\,=\,  \begin{pmatrix}
\pi^\ast(g)\,f_\ast(\psi)\\ \pi^\ast(g)^{-1}\, f_\ast(\overline{\psi})
\end{pmatrix}
\quad,
\end{flalign}
where $\pi^\ast(g)$ denotes the pullback of $g\in C^\infty(U,U(1))$ 
along the projection map $\pi:M\to U$. (The second equality
in \eqref{eqn:groupactionDirac} follows from the fact that 
$f_\ast$ only acts along the fibers where $\pi^\ast(g)$ is constant.)
These maps clearly preserve the quotient in \eqref{eqn:LLLDiracequv}, 
the $\ast$-involution \eqref{eqn:LLLinvolution}
and the pairing \eqref{eqn:LLLpairing}, hence
they define $\IPVec^\infty(U)$-morphisms.
The coherence isomorphisms for $\Man$-morphisms $h:U\to U^\prime$
are constructed in complete analogy to our previous examples.
\sk

Summing up, we obtain
\begin{propo}
The construction described above defines a pseudo-natural transformation
$\tilde{\LLL}^{\mathsf{f}}:\Loc_1^\infty\times [\{\ast\}/U(1)]_{\mathrm{pre}} \to \IPVec^\infty$. 
\end{propo}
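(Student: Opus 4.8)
The plan is to verify the two kinds of data demanded by Definition \ref{def:stackmorphism}: that each component $\tilde{\LLL}^{\mathsf{f}}_U$ is a well-defined functor, and that the coherence isomorphisms $\tilde{\LLL}^{\mathsf{f}}_h$ are natural isomorphisms satisfying the two pseudo-naturality axioms. The guiding observation is that almost everything can be imported from the preceding construction of $\LLL^{\mathsf{f}}$: on objects $\tilde{\LLL}^{\mathsf{f}}_U$ coincides with $\LLL^{\mathsf{f}}_U$ (the groupoid $[\{\ast\}/U(1)]_{\mathrm{pre}}(U)$ has a single object, see \eqref{eqn:quotientprestack}), and on morphisms of the form $(f,g)$ with $g$ the constant unit function it reduces to $\LLL^{\mathsf{f}}_U(f)$. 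Hence only the interaction with the $U(1)$-phases is genuinely new, and it is this interaction that must be tracked through each structure.

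First I would check that $\tilde{\LLL}^{\mathsf{f}}_U$ produces $\IPVec^\infty(U)$-morphisms. That the phase rotation \eqref{eqn:groupactionDirac} descends to the quotient in \eqref{eqn:LLLDiracequv} is immediate, since $\pi^\ast(g)$ is vertically constant and hence commutes with the Dirac operator $D_{(\pi,E)}$ of Example \ref{ex:Dirac}. Preservation of the pairing \eqref{eqn:LLLpairing} follows from its off-diagonal structure: the pairing matrix couples $\psi$ with $\overline{\psi}^\prime$ and $\overline{\psi}$ with $\psi^\prime$, so that the factors $\pi^\ast(g)$ and $\pi^\ast(g)^{-1}$ cancel pairwise. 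Preservation of the $\ast$-involution \eqref{eqn:LLLinvolution} is the one point where the group is essential: applying $\ast_{(\pi,E)}$ after rotating by $g$ yields the factor $\overline{g}$ on the swapped components, which equals the factor $g^{-1}$ produced by rotating after $\ast_{(\pi,E)}$ \emph{precisely because} $\overline{g}=g^{-1}$ for $g$ valued in $U(1)$. Functoriality then reduces to composition: for $(f^\prime,g^\prime)\circ(f,g)=(f^\prime f, g^\prime g)$ one invokes the projection formula (pushing forward the product of a vertically constant phase with a function equals the phase times the pushforward) together with the point-wise group law of $C^\infty(U,U(1))$; identities are trivial.

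Next I would construct the coherence isomorphisms $\tilde{\LLL}^{\mathsf{f}}_h$ verbatim as in \eqref{eq:Wh}--\eqref{eqn:overlinehpullback3}, by descending the pullback-of-functions morphism $(\overline{h}^M)^\ast$ to the quotients and showing it is an $\IPVec^\infty$-isomorphism through the causal-propagator/initial-data argument encoded in diagram \eqref{eqn:Gdatadiagram} (using the causal propagator $S_{(\pi,E)}$ of Example \ref{ex:Dirac} and the first-order well-posedness noted thereafter). The new verification is naturality with respect to the phase part of morphisms: this rests on the identity $\pi_h^\ast(g\circ h) = (\overline{h}^M)^\ast\pi^\ast(g)$, valid because $\pi\circ\overline{h}^M = h\circ\pi_h$ for the pullback bundle \eqref{eqn:pullbackbundle}, so that pulling back a phase along $h$ intertwines correctly with the phase rotation. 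Finally, the two axioms (i) and (ii) of Definition \ref{def:stackmorphism} follow exactly as in the scalar case from the coherence of pullback bundles (cf.\ Proposition \ref{prop:Locstack}) and the strict functoriality of $g\mapsto g\circ h$.

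The hard part will not be any analysis, since every PDE-theoretic input is already supplied by Example \ref{ex:Dirac} and the scalar template; rather, the main obstacle is the bookkeeping that ensures the $U(1)$-phases are \emph{simultaneously} compatible with the $\ast$-involution, the symmetric pairing, and the descent (coherence) data. The crux is the $\ast$-preservation step, which fails for a general Lie group and works only because of the unitarity relation $\overline{g}=g^{-1}$ in $U(1)$; this is exactly the structural reason the construction yields a genuine $\IPVec^\infty$-valued equivariant theory and thereby exhibits the global $U(1)$-symmetry of the Dirac field as smooth in the sense of \eqref{eqn:astFermionCompatible}.
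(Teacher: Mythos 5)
Your proposal is correct and follows essentially the same route as the paper, which presents the proposition as a summary of the preceding construction and asserts that the phase rotations ``clearly preserve'' the quotient, the $\ast$-involution and the pairing, with coherence isomorphisms built in analogy to the scalar case. Your elaboration of the points the paper leaves implicit --- vertical constancy of $\pi^\ast(g)$ for descent to the quotient, cancellation of $g$ and $g^{-1}$ in the off-diagonal pairing, the unitarity relation $\overline{g}=g^{-1}$ for $\ast$-preservation, the projection formula for functoriality, and $\pi_h^\ast(g\circ h)=(\overline{h}^M)^\ast\pi^\ast(g)$ for naturality of the coherences --- is accurate and consistent with the intended argument.
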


As a consequence, we obtain an example of a $U(1)$-equivariant
smooth $1$-dimensional AQFT $\tilde{\AAA}^{\mathsf{f}} := \CAR \circ \tilde{\LLL}^{\mathsf{f}}
: \Loc_1^\infty\times  [\{\ast\}/U(1)]_{\mathrm{pre}} \to \astAlg^\infty$
describing a smooth refinement of the $1$-dimensional massless Dirac field
together with its global $U(1)$-symmetry.


\section{\label{sec:higher}Outlook: Towards higher dimensions and gauge theories}
The aim of this section is to outline the way we believe 
the results of this paper could be generalized to higher-dimensional AQFTs and also
to gauge theories. In particular, we shall explain 
the additional technical challenges and open questions that arise from such a generalization.
\sk

Let us first discuss possible generalizations
of the stack $\Loc_1^\infty$ of $1$-dimensional spacetimes
from Subsection \ref{subsec:Locstack} to the case of higher dimensions $m\geq 2$.
For $U\in\Man$ a manifold, we can define a smooth $U$-family
of $m$-dimensional Lorentzian manifolds to be a tuple $(\pi:M\to U, g)$
consisting of a (locally trivializable) fiber bundle  $\pi:M\to U$ with typical fiber an
$m$-manifold $N$ and a metric $g$ of signature $(+-\cdots -)$ on the vertical tangent 
bundle of $\pi:M\to U$. For illustrative purposes, we note that in a local trivialization
$M\vert_{U^\prime} \cong N\times U^\prime$ and in local coordinates 
$y^\mu$ on the fiber $N$, the vertical metric takes the form $g\vert_{U^\prime} \cong g_{\mu\nu}(y,x)\,
\dd y^\mu\otimes \dd y^\nu $, i.e.\ it has only vertical components along $N$
that however are allowed to depend smoothly on $x\in U^\prime\subseteq U$.
There are obvious notions of vertical orientation $\mathfrak{o}$ and vertical time-orientation
$\mathfrak{t}$, hence we can introduce a concept of smooth $U$-families 
of $m$-dimensional oriented and time-oriented Lorentzian manifolds
$(\pi:M\to U,g,\mathfrak{o},\mathfrak{t})$.
What is less obvious is the correct generalization of the important concept of
global hyperbolicity to this smoothly parametrized context. One could either 
impose the point-wise condition that each fiber $(M\vert_x,g\vert_x)$ is globally hyperbolic
in the usual sense or seek for a condition that is more uniform on $U$. The role of
this condition should be to ensure that vertical normally hyperbolic 
operators, such as the vertical Klein-Gordon operator
\begin{flalign}\label{eqn:KGoperator}
P\,:=\, {\ast_\ver}{\dd_\ver}{\ast_\ver}{\dd_\ver} + m^2 \,:\ C^\infty_\pi ~\longrightarrow~C^\infty_\pi\quad,
\end{flalign}
admit retarded and advanced Green operators and a well-posed initial value problem, 
both described in terms of morphisms of sheaves of $C^\infty_{\bbR, U}$-modules. 
This can be interpreted saying that both the retarded/advanced Green operators and 
the initial value problem are smoothly parametrized.
Again for illustrative purposes, we note that in a local trivialization
$M\vert_{U^\prime} \cong N\times U^\prime$ and in local coordinates 
$y^\mu$ on the fiber $N$, the vertical differential operator \eqref{eqn:KGoperator}
takes the form
\begin{flalign}
P\vert_{U^\prime}\,\cong\, g^{\mu\nu}(y,x)\frac{\partial^2}{\partial y^\mu\partial y^\nu} 
+ B^{\mu}(y,x)\frac{\partial }{\partial y^\mu} + A(y,x)\quad,
\end{flalign}
i.e.\ there are no derivatives along $x\in U^\prime$ but the coefficients may be $x$-dependent.
Summing up, we record
\begin{problem}\label{problem1}
Find a suitable generalization of global hyperbolicity to smooth $U$-families 
of $m$-dimensional oriented and time-oriented Lorentzian manifolds
$(\pi:M\to U,g,\mathfrak{o},\mathfrak{t})$ such that 
vertical normally hyperbolic operators admit smoothly parametrized retarded and advanced Green operators
and a well-posed smoothly parametrized initial value problem.
\end{problem}

Successfully solving this problem will lead to a sensible definition
of a stack $\Loc_m^\infty$ of $m$-dimensional globally hyperbolic spacetimes.
One can then attempt to construct examples of smooth $m$-dimensional
AQFTs in terms of stack morphisms $\AAA : \Loc_m^\infty\to \astAlg^\infty$
by using the same strategy as in \eqref{eqn:AAAviaCCR}. 
We note that most of our constructions in Section \ref{sec:examples}
only rely on the existence (and uniqueness) of Green operators, hence they would generalize
directly to the higher-dimensional case, provided that Open Problem \ref{problem1}
is solved. There is however one exception: In the higher-dimensional case,
the space of initial data is infinite-dimensional, hence we can not argue
as in \eqref{eqn:Gdatadiagram} to conclude that the assignment
of linear observables $\LLL : \Loc_m^\infty\to \Pois^\infty$ is a stack morphism.
More specifically, this could lead to the problem that the coherence maps $\LLL_h$ 
are only natural transformations, but not natural isomorphisms,
which means that $\LLL$ is only a {\em lax} stack morphism. At the moment we do not know
whether it will be more convenient to enlarge the $2$-category
$\St$ of stacks to include also lax morphisms or to replace the stack $\Sh_{C^\infty_\bbR}$
of sheaves of $C^\infty_\bbR$-modules 
by a stack describing sheaves of topological (or bornological) modules 
in order to obtain a better control on these infinite-dimensional aspects.
Summing up, we record
\begin{problem}\label{problem2}
Find a suitable framework such that the assignment
$\LLL : \Loc_m^\infty\to \Pois^\infty$ of linear observables 
for a smooth $m$-dimensional free AQFT is a morphism between stacks.
Possible options could be enlarging the $2$-category $\St$ of stacks to allow for lax morphisms
or replacing the stack $\Sh_{C^\infty_\bbR}$ of sheaves of $C^\infty_\bbR$-modules 
by a stack describing sheaves of topological (or bornological) modules.
\end{problem}

As already emphasized at the beginning of Section \ref{sec:definition},
higher-dimensional AQFTs are sensitive to the phenomenon of Einstein causality,
which means that they are not simply functors but rather algebras over a suitable colored
operad \cite{BSWoperad,BSWinvolutive}. Encoding this aspect in our smooth setting
leads to the following
\begin{problem}\label{problem3}
Develop a theory of stacks of colored operads in order to define
the stack $\AQFT_m^\infty$  of smooth $m$-dimensional AQFTs
in terms of a suitable mapping stack between stacks of colored operads.
\end{problem}

To conclude, we would like to comment briefly 
on a potential generalization of our framework to gauge theories. 
The latter are most appropriately described by the BV-formalism,
which is captured by a concept of AQFTs taking values in cochain complexes, 
see e.g.\ \cite{FR1,FR2,BSWhomotopy}. This necessarily introduces to
an $\infty$-categorical context because the natural notion of
equivalence between cochain complexes is given by quasi-isomorphisms, 
as opposed to isomorphisms.
This in particular means that, instead of the smooth refinement 
$\Sh_{C^\infty_\bbK}$ of the ordinary category $\Vec_\bbK$ 
from Subsection \ref{subsec:Algstack}, one has to consider 
a smooth refinement of the {\em $\infty$-category} $\Ch_\bbK$
of cochain complexes.
A natural candidate for this purpose is the {\em $\infty$-stack}
$\Ch(\Sh_{C^\infty_\bbK})$ of cochain complexes of sheaves of modules.
\begin{problem}\label{problem4}
Replacing the stack $\Sh_{C^\infty_\bbK}$ of sheaves of modules
by the $\infty$-stack $\mathbf{Ch}(\Sh_{C^\infty_\bbK})$ of cochain complexes
of sheaves of modules, show that the relevant definitions 
and constructions from Section \ref{sec:definition} generalize to the context
of $\infty$-stacks.
\end{problem}


\section*{Acknowledgments}
We would like to thank Chris Fewster, Rune Haugseng, Fosco Loregian and Michael Shulman
for useful comments on this work.
We also would like to thank the anonymous referees for their comments
that helped us to improve the paper.
M.B.\ gratefully acknowledges the financial support of the 
National Group of Mathematical Physics GNFM-INdAM (Italy). 
M.P.\ is supported by a PhD scholarship (RGF\textbackslash EA\textbackslash 180270) 
of the Royal Society (UK). A.S.\ gratefully acknowledges the financial support of 
the Royal Society (UK) through a Royal Society University 
Research Fellowship (UF150099), a Research Grant (RG160517) 
and two Enhancement Awards (RGF\textbackslash EA\textbackslash 180270 
and RGF\textbackslash EA\textbackslash 201051).




\begin{thebibliography}{10}  

\bibitem[BGP07]{BGP}
C.~B\"ar, N.~Ginoux and F.~Pf\"affle, 
{\it Wave equations on Lorentzian manifolds and quantization}, 
Z\"urich, Switzerland: Eur.\ Math.\ Soc.\ (2007) [arXiv:0806.1036 [math.DG]].


\bibitem[BG12]{BG}
C.~B\"ar and N.~Ginoux, 
``CCR- versus CAR-quantization on curved spacetimes,'' 
in: F.~Finster, O.~M\"uller, M.~Nardmann, J.~Tolksdorf and E.~Zeidler (eds.), 
{\it Quantum Field Theory and Gravity}, 183--206, 
Springer Verlag, Basel (2012).


\bibitem[BD15]{BD}
M.~Benini and C.~Dappiaggi,
``Models of free quantum field theories on curved backgrounds,'' 
in: R.~Brunetti, C.~Dappiaggi, K.~Fredenhagen and J.~Yngvason (eds.), 
{\it Advances in Algebraic Quantum Field Theory}, pp.\ 75--124, 
Springer Verlag, Heidelberg (2015) 
[arXiv:1505.04298 [math-ph]].


\bibitem[BS17]{BSCahiers}
M.~Benini and A.~Schenkel,
``Poisson algebras for non-linear field theories in the Cahiers topos,''
Annales Henri Poincar{\'e} \textbf{18}, no.\ 4, 1435--1464 (2017)
[arXiv:1602.00708 [math-ph]].  


\bibitem[BS19]{BSreview}
M.~Benini and A.~Schenkel,
``Higher Structures in Algebraic Quantum Field Theory,''
Fortsch.\ Phys.\ \textbf{67}, no.\ 8--9, 1910015 (2019)
[arXiv:1903.02878 [hep-th]].


\bibitem[BSW21]{BSWoperad}
M.~Benini, A.~Schenkel and L.~Woike,
``Operads for algebraic quantum field theory,''
Commun.\ Contemp.\ Math.\ \textbf{23}, no.\ 02, 2050007 (2021)
[arXiv:1709.08657 [math-ph]].


\bibitem[BSW19a]{BSWinvolutive}
M.~Benini, A.~Schenkel and L.~Woike,
``Involutive categories, colored $\ast$-operads and quantum field theory,''
Theor.\ Appl.\ Categor.\ \textbf{34}, 13--57 (2019)
[arXiv:1802.09555 [math.CT]].


\bibitem[BSW19b]{BSWhomotopy}
M.~Benini, A.~Schenkel and L.~Woike,
``Homotopy theory of algebraic quantum field theories,''
Lett.\ Math.\ Phys.\ \textbf{109}, no.\ 7, 1487--1532 (2019)
[arXiv:1805.08795 [math-ph]].


\bibitem[BEP15]{Pavlov}
D.~Berwick-Evans and D.~Pavlov,
``Smooth one-dimensional topological field theories are vector bundles with connection,''
arXiv:1501.00967 [math.AT].


\bibitem[BFV03]{BFV} 
R.~Brunetti, K.~Fredenhagen and R.~Verch,
``The generally covariant locality principle: A new paradigm for local quantum field theory,''
Commun.\ Math.\ Phys.\ {\bf 237}, 31 (2003)
[math-ph/0112041].


\bibitem[BW21]{BunkWaldorf}
S.~Bunk and K.~Waldorf,
``Smooth Functorial Field Theories from $B$-Fields and $D$-Branes,''
J.\ Homotopy Relat.\ Struct.\ {\bf 16}, 75--153 (2021)
[arXiv:1911.09990 [math-ph]].


\bibitem[DHP09]{Dappiaggi}
C.~Dappiaggi, T.~P.~Hack and N.~Pinamonti,
``The Extended algebra of observables for Dirac fields and the trace anomaly of their stress-energy tensor,''
Rev.\ Math.\ Phys.\ \textbf{21}, 1241--1312 (2009)
[arXiv:0904.0612 [math-ph]].


\bibitem[Dix77]{Dixmier}
J.~Dixmier,
{\it $C^\ast$-algebras},
North-Holland Publishing Co., Amsterdam (1977).


\bibitem[Few13]{FewsterAut}
C.~J.~Fewster,
``Endomorphisms and automorphisms of locally covariant quantum field theories,''
Rev.\ Math.\ Phys.\ \textbf{25}, 1350008 (2013)
[arXiv:1201.3295 [math-ph]].


\bibitem[FV15]{FewsterVerch}
C.~J.~Fewster and R.~Verch,
``Algebraic quantum field theory in curved spacetimes,''
in: R.~Brunetti, C.~Dappiaggi, K.~Fredenhagen and J.~Yngvason (eds.),
{\it Advances in algebraic quantum field theory}, 125--189,
Springer Verlag, Heidelberg (2015)
[arXiv:1504.00586 [math-ph]].


\bibitem[Fio04]{Fiore}
T.~M.~Fiore, 
``Pseudo limits, biadjoints, and pseudo algebras: 
Categorical foundations of conformal field theory,'' 
Mem.\ Amer.\ Math.\ Soc.\ {\bf 182}, no.\ 860 (2006)
[arXiv:math/0408298 [math.CT]].


\bibitem[FR12]{FR1}
K.~Fredenhagen and K.~Rejzner,
``Batalin-Vilkovisky formalism in the functional approach to classical field theory,''
Commun.\ Math.\ Phys.\ \textbf{314}, 93--127 (2012)
[arXiv:1101.5112 [math-ph]].


\bibitem[FR13]{FR2}
K.~Fredenhagen and K.~Rejzner,
``Batalin-Vilkovisky formalism in perturbative algebraic quantum field theory,''
Commun.\ Math.\ Phys.\ \textbf{317}, 697--725 (2013)
[arXiv:1110.5232 [math-ph]].

\bibitem[Jac12]{Jacobs}
B.~Jacobs, 
``Involutive Categories and Monoids, with a GNS-Correspondence,'' 
Found.\ Phys.\ {\bf 42}, no.\ 7, 874 (2012) 
[arXiv:1003.4552 [cs.LO]].


\bibitem[KS06]{Kashiwara}
M.~Kashiwara and P.~Schapira, 
{\it Categories and sheaves}, 
Springer Verlag, Berlin (2006).


\bibitem[KW95]{CastBundles}
E.~Kirchberg and S.~Wassermann,
``Operations on continuous bundles of $C^\ast$-algebras,''
Math.\ Ann.\ {\bf 303}, 677--697 (1995).


\bibitem[Lac10]{Lack}
S.~Lack,
``A $2$-categories companion,'' 
in:  J.~C.~Baez and J.~P.~May (eds.), 
{\it Towards higher categories}, 
IMA Vol.\ Math.\ Appl.\ {\bf 152}, 105--191, 
Springer Verlag, New York (2010)
[arXiv:math/0702535 [math.CT]]. 


\bibitem[Lee13]{Lee}
J.~M.~Lee,
{\it Introduction to smooth manifolds}, 
Graduate Texts in Mathematics {\bf 218}, 
Springer Verlag, New York (2013).



\bibitem[Lei04]{Leinster}
T.~Leinster,
{\it Higher operads, higher categories}, 
London Math.\ Soc.\ Lect.\ Note Series {\bf 298}, 
Cambridge University Press, Cambridge (2004).


\bibitem[LS21]{LudewigStoffel}
M.~Ludewig and A.~Stoffel,
``A framework for geometric field theories and their classification in dimension one,''
SIGMA {\bf 17}, 072 (2021) 
[arXiv:2001.05721 [math.DG]].


\bibitem[Ram05]{Ramanan}
S.~Ramanan, 
{\it Global calculus},
Graduate Studies in Mathematics {\bf 65}, 
Amer.\ Math.\ Soc., Providence, RI (2005).


\bibitem[SP09]{SchommerPries}
C.~J.~Schommer-Pries, 
``The classification of two-dimensional 
extended topological field theories,'' 
Ph.\ D.\ dissertation 
[arXiv:1112.1000 [math.AT]].


\bibitem[Sch13]{Schreiber}
U.~Schreiber,
``Differential cohomology in a cohesive infinity-topos,''
current version available at \url{https://ncatlab.org/schreiber/show/differential+cohomology+in+a+cohesive+topos}
[arXiv:1310.7930 [math-ph]].


\bibitem[Ste51]{Steenrod} 
N.~Steenrod, 
{\it The topology of fibre bundles}, 
Princeton University Press, Princeton, New Jersey (1951).
  

\bibitem[ST11]{StolzTeichner}
S.~Stolz and P.~Teichner, 
``Supersymmetric field theories and generalized cohomology,'' 
in: H.~Sati and U.~Schreiber (eds.),
{\it Mathematical foundations of quantum field theory and perturbative string theory}, 
Proc.\ Sympos.\ Pure Math.\ {\bf 83}, 279--340, Amer.\ Math.\ Soc., Providence, RI (2011).


\bibitem[Vis05]{Vistolli}
A.~Vistoli, 
``Grothendieck topologies, fibered categories and descent theory,'' 
in: B.~Fantechi, L.~G\"ottsche, L.~Illusie, S.~L.~Kleiman, N.~Nitsure and A.~Vistoli (eds.),
{\it Fundamental algebraic geometry}, Math.\ Surveys Monogr.\ {\bf 123}, 
1--104, Amer.\ Math.\ Soc., Providence, RI (2005).


\bibitem[Zah14]{Zahn}
J.~Zahn,
``The renormalized locally covariant Dirac field,''
Rev.\ Math.\ Phys.\ \textbf{26}, no.\ 1, 1330012 (2014)
[arXiv:1210.4031 [math-ph]].

\end{thebibliography}
\end{document}